
\documentclass[runningheads]{llncs}

\usepackage{amsmath}
\usepackage{amssymb}
\usepackage{epsf}
\usepackage{graphics}
\usepackage{epsfig}
\usepackage{latexsym}
\usepackage{enumerate}
\usepackage[normalem]{ulem}
\usepackage{color}
\usepackage{float}
\usepackage{graphicx}
\usepackage[loose]{subfigure}
\usepackage{cite}
\usepackage{algorithm}
\usepackage{algorithmic}
\usepackage{epstopdf}
\usepackage{appendix}
\usepackage{placeins}

    \usepackage{changes}


\newcommand{\edge}[2]{(#1, #2)}

\begin{document}

\title{Unique Triangulated 1-Planar Graphs
\thanks{Supported by the Deutsche Forschungsgemeinschaft (DFG), grant
    Br835/20-1.}}
\author{Franz J. Brandenburg}
\institute{94030 Passau, Germany \\
           \email{brandenb@informatik.uni-passau.de}
}

\maketitle

\begin{abstract}
It is well-known that every 3-connected  planar graph has a unique planar embedding
on the sphere. We study the extension
to  \emph{triangulated 1-planar graphs}, \emph{T1P} graphs for short, which
  admit an embedding in which each edge is crossed at most once
 and  each face is a triangle, and
  obtain an algorithmic solution by  a cubic
time recognition algorithm 
that also counts the number of  \emph{T1P}  embeddings.
In particular, we show that every
triangulated planar graph has a unique \emph{T1P} embedding,
although it may admit many 1-planar embeddings,
and that any 6-connected \emph{T1P}  graph
  has a unique 1-planar embedding, except for   full generalized
two-stars  that admit  two or eight 1-planar embeddings. 
Our algorithm extends, refines, and  corrects a previous recognition
algorithm by Chen, Grigni and Papadimitiou (``Recognizing Hole-Free
4-Map Graphs in Cubic Time'', Algorithmica 45 (2006)).
\end{abstract}
\vspace{2mm}

 \noindent \textbf{Keywords}  planar graphs, 1-planar
graphs, embeddings, maps, recognition algorithm

\section{Introduction} \label{sect:intro}

Graphs are often defined by a particular property of a drawing or an
embedding in the plane or on the sphere.
 The planar graphs, in which  edges do not cross, are the
best-known and most prominent example. Planarity  is one of the most
basic and influential concepts in graph theory. Many properties of
planar graphs have been explored, including unique embeddings,
duality, minors, and straight-line drawings \cite{d-gt-00,
t-handbook-GD}.

There has been a recent interest in 
\emph{beyond-planar graphs} \cite{dlm-survey-beyond-19,
ht-beyond-book-20,klm-bib-17} which generalize   the planar graphs
by restrictions on edge crossings. Their study in graph theory,
graph algorithms, graph drawing, and computational geometry can
provide significant insights for the design of effective methods to
visualize real-world networks, which are non-planar, in general,
such as road maps and   scale-free,   social   and biological networks.
A graph is $k$-\emph{planar} \cite{pt-gdfce-97} if it admits an embedding
  such that each edge is   crossed by at most $k$ edges.
An  embedding specifies  edge crossings and faces. It is
\emph{triangulated} if each face is a triangle. A triangulated
1-planar embedding, \emph{T1P  embedding} for short, consists of
triangles and X-\emph{quadrangles}  which are quadrangles with a
pair of crossed edges in the interior, such that
there is no other vertex in the the interior  of the quadrangle.
In the plane, we also allow an X-quadrangle with
an edge crossing in the outer face. Thus there is an
 X-quadrangle if an edge is crossed. A
triangulated planar graph obtained by removing
one edge from each pair of crossed
edges. A graph is \emph{triangulated 1-planar},
\emph{T1P} for short, if it admits a \emph{T1P} embedding.
It is \emph{unique} if it has a single \emph{T1P} embedding.

The \emph{dual} of an embedded planar graph is a planar (multi-)
graph, whose vertices are  the faces of the planar embedding, such that
there is an (dual) edge if and only if the  faces  are adjacent and
share a segment. \emph{Maps}
 generalize planar duality such that some faces are ignored and
 adjacent faces can meet only in a point.  A \emph{map}
$\mathcal{M}$ is a partition of the sphere into finitely many
regions (or faces). Each region is homeomorphic to a closed disk.
Some regions are labeled as \emph{countries}  and the remaining
regions are  \emph{holes} of $\mathcal{M}$. In the plane, we assume
that the outer region is a country.
   Two countries are \emph{adjacent} if
their boundaries  intersect. Then they meet a  segment  or  in
a single point, in general. There is a
$k$-\emph{point} $p$ if $k$ countries meet   in $p$. A map
$\mathcal{M}$ defines a graph $G$, whose vertices are the countries
of $\mathcal{M}$   and there is an edge if and only if the countries
are adjacent. Then $G$ is called a \emph{map graph} and
$\mathcal{M}$ is a map of $G$. A map is a $k$\emph{-map} if no
more than $k$ countries meet in a point, and it is \emph{hole-free}
if all regions are countries. A graph is a \emph{hole-free}
$k$-\emph{map graph} if it is the map graph of a hole-free $k$-map.

Maps and map graphs have been introduced by Chen et al.
\cite{cgp-pmg-98, cgp-mg-02, cgp-rh4mg-06}.
 They  observed that the planar graphs are the 2-map or the 3-map graphs,
and that a graph is a  3-connected hole-free 4-map graph if and only
if it is  triangulated and 1-planar. Hence, a \emph{T1P} graph can
be represented  by a 1-planar embedding or by a hole-free 4-map  and also
by a bipartite planar  witness  as described in \cite{cgp-mg-02}. A
\emph{T1P} graph with $n$ vertices has at least $3n-6$
and at most $4n-8$ edges, where the
upper bound is tight for $n=8$ and all $n \geq 10$ \cite{bsw-1og-84}. Such
graphs are called \emph{optimal 1-planar}. They  have a unique
1-planar embedding  \cite{s-s1pg-86,s-rm1pg-10},
except for  full generalized two-stars, as shown in
Fig.~\ref{fig:XW},   which have two
1-planar embeddings for $n\geq 10 $ and eight for $n=8$,
that  are also \emph{T1P} embeddings.

The \emph{complete graph} on four vertices $K_4$ is the building
block for a \emph{T1P} graph. It has two representations, namely
 as a ``tetrahedron'' and an ``X-quadrangle'' in a 1-planar embedding or as a  ``rice-ball'' and a  ``pizza'' in a map, see Fig.~\ref{fig:k4-embeddings}.
In the plane, a \emph{tetrahedron}
 consists of a 3-cycle as its outer boundary and a center in its interior.
  The edges of a tetrahedron can be crossed by other  edges,
see Figs.~\ref{fig:kite-covered-tetrahedron} and \ref{fig:SC-graph}.
 A $K_4$ with vertices  $a,b,c,d$ has three embeddings as an X-quadrangle, such that one of the  edges $\edge{a}{b}, \edge{a}{c}$ or $\edge{a}{d}$ is crossed,
and the crossing edge is independent.
Hence, a $K_4$ is not a unique \emph{T1P} graph. However, may admit only a single
1-planar embedding if it is a part of a larger 1-planar graph.

\begin{figure}  
  \centering
  \subfigure[]{
    \includegraphics[scale=0.45]{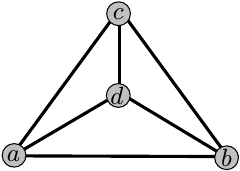}
    \label{fig:tetrahedron}
  }
  \hfil
  \subfigure[]{
      \includegraphics[scale=0.4]{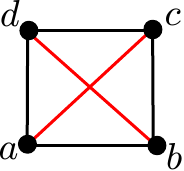}
      \label{fig:kite}
  }
    \hfil
   \subfigure[] {
      \includegraphics[scale=0.50]{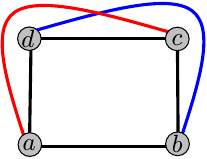}
      \label{fig:Wconf}
  }
  \hfil
   \subfigure[] {
      \includegraphics[scale=0.50]{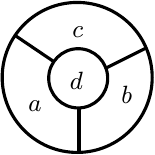}
      \label{fig:maptetra}
  }
  \hfil
   \subfigure[] {
      \includegraphics[scale=0.50]{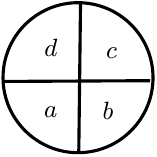}
      \label{fig:pizza}
  }
  \hfil
   \subfigure[] {
      \includegraphics[scale=0.50]{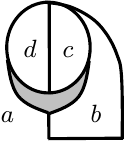}
      \label{fig:outpizza}
  }
  \caption{Representations of $K_4$  as (a)  a
    tetrahedron and   an X-quadrangle with edges crossing (b) in the inner
    and (c) in the  outer face in a 1-planar embedding,   and
    (d) a rice-ball and (e,f) a pizza in a   map.
    }
  \label{fig:k4-embeddings}
\end{figure}

Recognizing 1-planar graphs is NP-complete
\cite{GB-AGEFCE-07,km-mo1ih-13},   in general, even if further restrictions are
imposed \cite{abgr-1prs-15, bbhnr-NIC-17, bdeklm-IC-16}. On the
other hand, there is a linear time algorithm for the recognition of 1-planar
graphs if they have  the maximum of $4n-8$ edges \cite{b-ro1plt-18}.
Triangulated 1-planar graphs are in between.
  Chen et al.~\cite{cgp-rh4mg-06} have described  a cubic time recognition
algorithm for hole-free 4-map graphs. They have used maps in for  the
presentation. 
  However, their  proof is not correct and their algorithm may
  fail on  some hole-free 4-map graphs, that is it is   false
  positive, as we will show in Section~\ref{sect:quadruple}.
 We fix their  bugs after a  detailed case analysis and  distinguish
between unique and ambiguous generalized separators. This gives more
flexibility for the order in which generalized separators are used.
Brandenburg \cite{b-4mapGraphs-19} has used the previous algorithm
for an extension to 4-map graphs with holes. He has reduced the
correctness of his algorithm to the correctness of the algorithm by
Chen et al.~\cite{cgp-rh4mg-06}.
 For general map
graphs, Thorup \cite{t-mgpt-98} has claimed a polynomial time
recognition algorithm. However, there is no description of the
algorithm, such that the recognition problem for map graphs is still
open. There are more results on the recognition of map graphs if the
graphs have bounded girth \cite{LeLe-map-girth-19}, bounded
treewidth \cite{abdgmt-map-treewidth-22} or an outerplanar witness
\cite{mrs-outermap-18}. \\

\noindent \textbf{Our contribution.} We extend, refine and correct  the cubic time
recognition algorithm for \emph{T1P} or hole-free 4-map graphs by
Chen et al.~\cite{cgp-rh4mg-06}, such that it clearly distinguishes
between unique and ambiguous generalized separators.
In addition, it computes the
number of \emph{T1P} embeddings. We use 1-planar embeddings for the
  correctness
proof, which seems simpler than using maps, in particular for separating triangles.
Our algorithm  returns a \emph{T1P} embedding as a witness
for a  \emph{T1P} graph. It provides an algorithmic
solution for the question whether or not a  \emph{T1P} graph is unique.
 In particular, we show that every
triangulated planar graphs has a  unique   \emph{T1P} embedding,
where one crossing per edge does not allow more embeddings,
and similarly for 6-connected  \emph{T1P} graph, where only full generalized two-stars
have two or eight 1-planar  embeddings.

Our main result is as follows.

\begin{theorem}  \label{thm1}
There is a cubic-time recognition algorithm that decides whether or
not a graph  is  a  T1P  graph  (or a hole-free 4-map graph). If
positive, the algorithm returns a T1P embedding and
computes the number of  T1P
 embeddings.
\end{theorem}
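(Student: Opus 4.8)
The plan is to build the algorithm around a divide-and-conquer recursion on \emph{generalized separators}—small vertex sets (separating triangles, and more generally the boundaries of potential X-quadrangles together with their neighborhoods) whose removal or contraction splits the graph into strictly smaller instances—mirroring the classical decomposition of a $3$-connected planar graph along separating triangles, but enriched to account for crossing edges. First I would establish a structural characterization: a $T1P$ embedding is determined, locally, by a choice at each ``crossable'' configuration (which diagonal of a $K_4$-piece is crossed, whether an X-quadrangle or a tetrahedron is used, and how the piece attaches to the rest), and globally these choices interact only through the separator structure. Using the equivalence of $T1P$ graphs, hole-free $4$-maps, and bipartite planar witnesses recalled in the excerpt, I would reduce the existence question to a planarity-type test on the witness, which runs in polynomial time; the cubic bound comes from iterating this over the $O(n^2)$ candidate generalized separators (there are linearly many separating triangles in a planar graph, but X-quadrangles and their associated separators can be more numerous), each handled in near-linear time.

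The second component is the counting. Here I would set up a recursion: define, for each subinstance produced by the decomposition, the number of $T1P$ embeddings consistent with a fixed ``interface'' on its separator boundary, and show these numbers satisfy a product/sum recurrence as the graph is reassembled—products over independent child components, sums over the (constant number of) admissible local choices at each generalized separator, with the special multiplicative factors $2$ and $8$ appearing exactly for full generalized two-stars (Fig.~\ref{fig:XW}) and the factor $3$ suppressed whenever a $K_4$ is forced by its embedded context, as noted after Fig.~\ref{fig:k4-embeddings}. The key lemma is that distinct global choices yield non-isomorphic $T1P$ embeddings and that every $T1P$ embedding arises this way, so the recurrence is exact; this is where the corrected case analysis of Chen--Grigni--Papadimitriou enters, specifically the distinction between \emph{unique} and \emph{ambiguous} generalized separators, which determines whether a given separator contributes a choice or is forced.

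The main obstacle, and the bulk of the work, will be proving that the decomposition along generalized separators is both \emph{complete} (every $T1P$ embedding decomposes) and \emph{sound} (every combination of independently valid sub-embeddings glues to a valid $T1P$ embedding), because crossing edges can straddle a separator and because the same vertex set can serve as the boundary of either a triangulated region or an X-quadrangle. This is precisely the point where the original Chen et al.\ argument is flawed, so I would isolate the problematic configurations—separating triangles that also bound X-quadrangles, and nested or overlapping X-quadrangle boundaries—and handle them by a finite case analysis of how crossing pairs can be routed across the separator, using $3$-connectivity and the $3n-6 \le m \le 4n-8$ edge bound to rule out pathological overlaps. Once completeness and soundness are in hand, the recognition follows by running the decomposition greedily (order-independence of the separator choices is exactly what the refined unique/ambiguous dichotomy buys us), returning the first consistent embedding found, and accumulating the count via the recurrence; the cubic running time then follows from a standard amortized analysis of the recursion tree over the $O(n)$–$O(n^2)$ candidate separators.
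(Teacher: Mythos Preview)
Your high-level strategy matches the paper's: decompose via generalized separators, distinguish unique from ambiguous ones, remove detected crossed edges toward planarity, and multiply local ambiguity factors to get the embedding count. But the proposal has genuine gaps that would prevent it from going through as written.

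First, the reduction you propose---``reduce the existence question to a planarity-type test on the witness''---is circular. The bipartite planar witness \emph{encodes} a T1P embedding; finding one is exactly the problem. The paper does something operationally different: each generalized separator either splits $G$ at a $k$-cycle or certifies that a specific edge pair must cross in \emph{every} T1P embedding, whereupon one crossed edge is deleted. After $O(n)$ such ``make progress'' steps the remaining graph is tested for planarity directly (with residual $K_4$'s replaced by $4$-stars to fix the cyclic order). The correctness hinges on a Single-Bridge Lemma for each separator type, not on a witness test.

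Second, your separator inventory (``separating triangles and X-quadrangle boundaries'') is far too coarse. The paper needs a specific ordered list of about fifteen separator types---separating edges, triples, $K_5$-triples, unique versus ambiguous quadruples with explicit bridge patterns $(1,0,0,0)$, $(1,1,0,0)$, $(1,1,1,1)$, etc., bridge-independent versus singular versus ambiguous triangles, $K_5$-destroyers, tripods---and the order matters for correctness. The actual correction to Chen--Grigni--Papadimitriou lives precisely in the bridge-pattern analysis for separating quadruples (their Lemma~7.9 falsely claims a unique bridge that crosses the defining edge; counterexamples have up to three bridges, and a bridge need not cross its edge) and in the analogous failure for separating triangles (their Lemma~7.14). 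Your ``finite case analysis of how crossing pairs can be routed across the separator'' gestures at this but does not identify the cases.

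Third, on counting: the factors $2$ and $8$ for $fG2S_k$ are only one source of ambiguity. A separating edge with $t$ bridges contributes a factor $t$; a $K_5$-triple forces $G=hG2S_k$ with factor $4$, $2$, or $1$ depending on marked edges; ambiguous quadruples force $G=cG2S_k$ or $sG2S_k$; ambiguous triangles contribute a factor $2$; and small components of order at most six contribute factors from a finite table. The total count is a pure product of edge labels $\lambda(e)$---there are no sums in the recurrence, because the paper proves that each ambiguity is independent of the others. Finally, your $O(n^2)$ bound on candidate separators is off: since T1P graphs have bounded arboricity, Chiba--Nishizeki gives $O(n)$ induced $k$-cycles and $O(n)$ maximal $4$- and $5$-cliques, computed in linear time; the cubic bound is $O(n)$ progress steps times $O(n)$ candidates times $O(n)$ per connectivity test.
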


 \noindent \textbf{Paper organization.}   We introduce basic notions 
 in the next section,
    describe our  recognition algorithm in  Section~\ref{sect:algorithm}, study
    applications in Section~\ref{sect:app}, and
  conclude in Section~\ref{sect:conclusion}.

\section{Notation and Basic Notions} \label{sect:basics}
 We consider graphs that are \emph{simple} both in a graph theoretic
and in a topological sense. Thus there are no multi-edges or loops,
adjacent edges do not cross, and two edges cross at most once in an
embedding. A graph $G=(V,E)$ consists of   sets of $n$ vertices and
$m$ edges, such that $n$ is the \emph{order} of $G$. It is called
\emph{nontrivial} if its has at least two vertices and an edge. The set of \emph{neighbors} of
a vertex $v$ is denoted by $N(v)$, and  the
\emph{subgraph} of $G$ induced by a subset $U$ of vertices
    by $G[U]$. Let $G_1+G_2$ denote $G[U_1
\cup U_2]$ if $G_1$ and $G_2$ are subgraphs of $G$ with sets of
vertices $U_1$ and $U_2$, respectively, and similarly let $G-H$
denote $G[V-U]$ if $U \subseteq V$ and $U$ is the set of vertices of
$H$. If $F$ is a set of edges, then $G-F$ is the subgraph of $G$
with the edges of $F$ removed. For convenience, we omit braces if
the meaning is clear, for example, for singleton sets,
such that, for example,  $K_5$-$e$ denotes the 5-clique with one edge removed.

We assume that a graph is defined by an  \emph{embedding}
$\Gamma(G)$ on the sphere or in the plane, which is 1-\emph{planar} if
each edge is crossed at most once, and
  \emph{triangulated} if all faces are triangles.
  In the plane, the outer face is a  triangle with
    three vertices or two vertices and a crossing point
  if the embedding is 1-planar and triangulated.
Obviously, a 1-planar  embedding   is  \emph{triangulated}
if it consists of triangles and X-quadrangles.  Obviously, there are 1-planar
embeddings and 1-planar graphs that are not triangulated. In between
are \emph{kite-augmented 1-planar embeddings} (and graphs)
in which the subgraph, induced by the vertices of any pair of
crossed edges, is a 4-clique  that is
represented as an X-quadrangle, see Fig.~\ref{fig:small-planar}.
Now an embedding
admits large faces, and the corresponding map graphs may have holes
 \cite{b-4mapGraphs-19}.

\begin{figure}[t]  
  \centering
\subfigure[\emph{T1P}] {
     \includegraphics[scale=0.55]{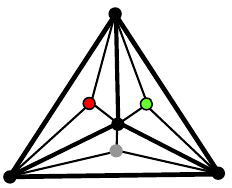}  
    \label{fig:small-planar}
  }
\hspace{3mm}
\subfigure[kite-augmented] {
     \includegraphics[scale=0.55]{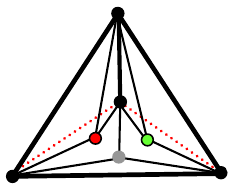}  
    \label{fig:small-kite}
  }
\hspace{3mm}
 \subfigure[1-planar] {
     \includegraphics[scale=0.55]{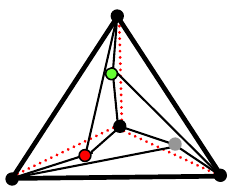}  
     \label{fig:small-1planar}
   }
  \caption{A triangulated planar   graph with three 1-planar embeddings.
  }
  \label{fig:small-planar}
\end{figure}

A graph $G$ is called a  \emph{T1P} \emph{graph} if it admits a
triangulated 1-planar embedding. A graph is \emph{unique} for its class
(planar, \emph{T1P}, 1-planar)
 if it has a single  embedding,
 up to a topological
homomorphism~\cite{gy-graphtheory-99},
that fulfills the requirements of the class.
 Clearly, \emph{T1P} graphs are
3-connected. Even more, the planar subgraph, obtained by removing
all crossed edges from an embedding $\Gamma(G)$, is 3-connected
for $n \geq 5$
\cite{abk-sld3c-13}. As observed by Chen et al.~\cite{cgp-mg-02,
cgp-rh4mg-06}, a graph is a \emph{T1P} graph if and only if it is a
3-connected hole-free 4-map graph. Thus we can use
\emph{T1P} graph and 3-connected 4-map graph
interchangeably.

 In general, a graph has many embeddings of a particular type,
e.g., planar or  1-planar ones, see graph $K_5$-$e$ in
Fig.~\ref{fig:small-planar} and
Figs.~\ref{fig:K5-e-T1P} - \ref{fig:K5-e-ill}.
 A 2-connected planar graph can have
exponentially many planar embeddings, which can   be
stored in a
 PQ-tree \cite{bl-tc1-76} and an SPQR-tree \cite{dt-lmtc-96}, respectively.
 Bachmaier et al.~\cite{bbhnr-NIC-17} have shown that
there are  maximal \emph{NIC}-planar graphs with an exponential number of
\emph{NIC}-planar embeddings. Their graphs   are composed of about $\frac{n}{5}$
independent 5-cliques. These graph \emph{T1P} graphs and
  have an exponential number of \emph{T1P} embeddings,
as we will show.
On the other hand,
  a planar graph is  unique if it is 3-connected
\cite{w-uniqueplanar-33}.

We will use the following notions for \emph{T1P} graphs and their
embeddings. A $k$-\emph{cycle} $C(v_1,\ldots, v_k)$ for $k \geq 3$ is
a sequence of vertices such that there is an edge
$\edge{v_i}{v_{i+1}}$ for $i=1,\ldots,k$, where $v_{k+1}=v_1$. It is
an \emph{induced cycle} if it has no chord. A $k$-clique
$K(v_1,\ldots, v_k)$ is a set of $k$ pairwise adjacent vertices. It
is \emph{maximal} if it not properly contained in another  clique.
This is assumed from now on, in particular, for 4-cliques.
 An   X-\emph{quadrangle} $Q(a,b,c,d)$ is an embedding of a 4-clique,
 such that edges $\edge{a}{c}$ and $\edge{b}{d}$ cross and vertices $a,b,c,d$ appear in this
 order in the boundary, see Figs.~\ref{fig:kite} and
\ref{fig:Wconf}. Thus $Q(a,b,c,d)$ and $Q(a,c,b,d)$ are different.
We assume that there are no other vertices in the
interior (or in the exterior) of the boundary of
an X-quadrangle. Otherwise, the 1-planar embedding is
not triangulated.

\begin{figure}[t]  
  \centering
    \includegraphics[scale=0.5]{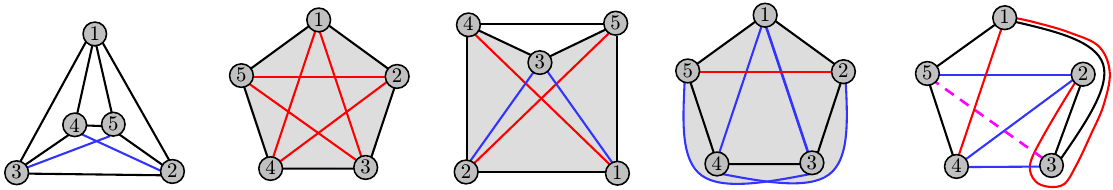}
 \caption{All topological embeddings of $K_5$ \cite{hm-dcgmnc-92}.  }
  \label{fig:allK5}
\end{figure}

\begin{figure}[b]  
  \centering
\subfigure[]{
    \includegraphics[scale=0.55]{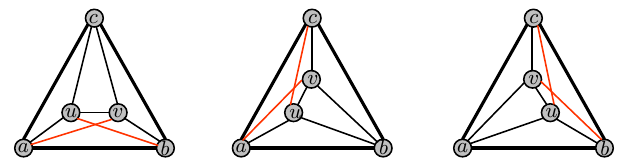}  
\label{fig:K5variation}
}
\hspace{3mm}
\subfigure[]{
    \includegraphics[scale=0.3]{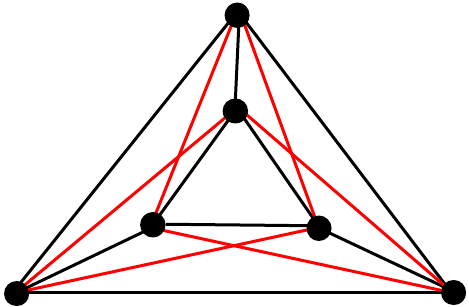}
    \label{fig:K6-triangle}
    }
  \caption{(a) Three 1-planar embeddings of $K_5$ with a fixed outer face.
Vertices $u$
  and $v$   change roles for the remaining three 1-planar embeddings,
and
(b) the 1-planar embedding  of $K_6$.
}
  \label{embeddingsK5andK6}
\end{figure}

 As observed before, a 4-clique  has three 1-planar embeddings as an
 X-quadrangle and one as a  tetrahedron.
 A (maximal) 5-clique  has five topological embeddings, as shown in
 Fig.~\ref{fig:allK5}, one of which is
 1-planar   \cite{hm-dcgmnc-92}. In the plane, the 1-planar embedding of $K_5$
 has  an  outer triangle  $C(a,b,c)$
 and two  inner vertices  $x$ and $y$ in the interior of $C(a,b,c)$.
The outer triangle is chosen such that there is another vertex
outside $C(a,b,c)$ if the 5-clique is a part of a larger graph.
  A 5-clique has five $K_4$, one of which is represented as an
 X-quadrangle in a 1-planar embedding, see Fig.~\ref{fig:allK5}. It admits  six 1-planar embeddings, each with a different pair of crossed edges and a triangulation.
There are less embeddings if some edges must be uncrossed.
Note that a 6-clique has a unique
1-planar embedding,  which is shown in Fig.~\ref{fig:K6-triangle}. It is not a
subgraph of a 4-connected 1-planar graphs of order at least seven
\cite{cgp-rh4mg-06}.

At last, we consider  five variations of \emph{generalized two-stars} $G2S_k$,
  see Fig.~\ref{fig:G2S}. First, there is a sequence
of vertices $v_1,\ldots, v_k$ for $k \geq 5$. Each vertex $v_i$ is
adjacent to its neighbors and its neighbors after next for
$i=1,\ldots,k$.
  An edge $\edge{v_i}{v_{i+2}}$ is called an \emph{arch}.
There are two \emph{poles} $p$ and $q$, which are each
a neighbor of    $v_i$ for $i=1,\ldots,k$.
So far, vertices $v_1$ and $v_k$ have degree four, $v_2$
and $v_{k-1}$ have degree five, the vertices in between have
degree six, and the poles have degree $k$.
Generalized two-stars are augmented for a triangulation,
see Fig.~\ref{fig:DP}.
 We call edge $\edge{p}{q}$
 a \emph{handle} and obtain  graph  $hG2S_k$ if it added to   $G2S_k$.
Graph $cG2S_k$ is obtained if edge $\edge{v_1}{v_k}$ is added to   $G2S_k$, such that
there is a circle for $v_1,\ldots, v_k$. There is  $xG2S_k$ if
both $\edge{p}{q}$ and $\edge{v_1}{v_k}$ are added. At last, if $k$
is even, then edges $\edge{v_1}{v_k}$ and $\edge{v_2}{v_k}$ can be added
to obtain a semi-full generalized two-star $sG2S_k$. Finally, there is  a
full generalized two-star $fG2S_k$ if  edges $\edge{v_1}{v_k},
\edge{v_1}{v_{k-1}}$ and edge $\edge{v_2}{v_k}$ are added to
$G2S_k$, see Fig.~\ref{fig:XW}.
  Then there is a   cycle with arches in circular order for
  $v_1,\ldots, v_k$.
Graph $fG2S_k$ has been introduced by Bodendiek
et al.\cite{bsw-1og-84} as   ``double wheel graph''    $2*\hat{C}_{2n}$.
Generalized two stars have (at least) two 1-planar embeddings, in which the poles $p$ and $q$ change roles, and so do incident edges.
 The embeddings  coincide up to
graph isomorphism, such that there is the same picture. Graph
$fG2S_6$  consists of a planar drawing of a cube with a pair of crossed edges in each face.
It has eight 1-planar embeddings \cite{s-rm1pg-10}.

\begin{figure}[t]   
  \centering
  \subfigure[]{
    \includegraphics[scale=0.5]{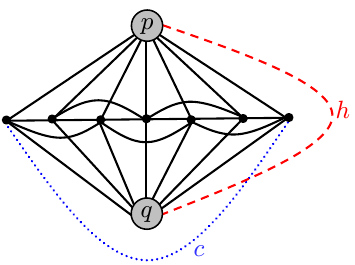}     
    \label{fig:DP}
    }
    \hspace{5mm}
 \subfigure[]{
        \includegraphics[scale=0.5]{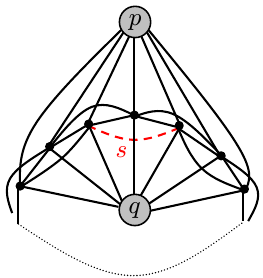}
\label{fig:XW}
     }
 \caption{(a) A generalized two-star $G2S_k$ augmented by a handle $h$, drawn
 red and  dashed, and augmented to a circle by $c$, drawn blue and dotted.
 There is a $xG2S_k$ if $h$ and $c$ are added
 and (b)  a full generalized two-star, which is semi-full if edge
 $s$ is missing.
  }
  \label{fig:G2S}
\end{figure}

\section{Generalized Separators}  \label{sect:algorithm}

We extend, refine and  correct   the recognition algorithm
for hole-free 4-map graphs by Chen et al.~\cite{cgp-rh4mg-06},
called algorithm $\mathcal{A}$.   It uses
several generalized separators  that partition a graph into parts.
An edge is called a \emph{bridge} if its vertices are in different parts
from a partition.
Algorithm $\mathcal{A}$ distinguishes between (maximal)
4-cliques and 5-cliques and considers the latter after separating
triangles.
The algorithm is complex and has a long proof.
Parts of the proof are not described in the
paper \cite{cgp-rh4mg-06} and are deferred to a technical report,
(upon a request by the editor and the reviewers, as Z. Chen
has communicated).
 But it has a bug and  a   gap, which are due to ambiguities.
In Lemma 7.9, Chen et al.~\cite{cgp-rh4mg-06} claim that a separating
quadruple, which is studied in Section~\ref{sect:quadruple}, has
exactly one bridge. That's  wrong, as
Figs.~\ref{fig:quad-many-bridges} and \ref{fig:quadrangles} show.
Second, a  bridge may not cross its defining edge as claimed.
At last, a
 bridge may have two options for a crossing, as shown in Fig.~\ref{fig:quad-ambig},
such that there
is an ambiguous separating quadrangle, as opposed to
Lemma 7.9 in \cite{cgp-rh4mg-06}.  There is a similar problem  for
strongly separating triangles in Lemma 7.14, 
since an edge and its bridge do not necessarily cross, as we will show in
Section~\ref{sect:stronglyseparatingtriangles}.\\

Forthcoming, we assume that $G$ is a \emph{T1P} graph  whose edges
are marked and labeled.
 Initially,   all edges are
\emph{unmarked} and $\lambda(e)=1$ for the \emph{label} of each edge $e$.
Later on, graph $G$ is obtained from an \emph{input
graph} by the removal of crossed edges and a partition at separating $k$-cycles using our
algorithm, called (new) \emph{algorithm} $\mathcal{N}$.
 An edge is \emph{marked} if it is uncrossed
in every 1-planar embedding of (the actual) $G$. A marked edge may be crossed
in the input graph if a crossing has been detected at an earlier
stage and the  crossing edge has been removed.
  Label $\lambda(e)$ of   edge $e$ is an integer that counts the number of
\emph{T1P} embeddings for $e$.
Label $\lambda(e)$ is changed if edge $e$ is part of two or more  \emph{T1P}
embeddings. The total number of  \emph{T1P} embeddings of $G$ is the
product of the labels of all edges.

A \emph{generalized separator} consists of an induced $k$-cycle $S$
and a set of $\ell$ edges $T$ for $2 \leq k \leq 5$ and $0 \leq \ell \leq 3$,
such that $G'= G-S-T$ is disconnected.
It  is   \emph{strong} if
$G'$ consists of two   parts, one of which is a single
vertex. An edge $\hat{e} \in T$ is called a \emph{bridge} if its vertices
are in different parts of $G'$, where two parts may be connected
by several bridges, as opposed to some assumptions in \cite{cgp-rh4mg-06}.
For an efficient computation,   generalized separators are further
restricted.    The set of \emph{crossable edges}
$\mathcal{E}\edge{u}{v}$ of an unmarked edge $\edge{u}{v}$ consists
of all unmarked edges $\edge{x}{y}$ such that $G[u,v,x,y]$ is a
maximal 4-clique. Let $\mathcal{E}(e)=\emptyset$ if $e$ is marked.
Let $\mathcal{B}(e) \subseteq \mathcal{E}(e)$ denote the set  of
bridges of edge $e$. It is computed by a connectivity test \cite{clrs-ia-01}.
Each bridge $\hat{e} \in \mathcal{B}(e)$ is called
\emph{a bridge over} $e$. In general, edge $e$ has a single bridge $\hat{e}$,
such that
$e$ and $\hat{e}$ cross in a \emph{T1P} embedding, except if there is an ambiguity.

The set of pairs $(C, \mathcal{E}(e))$, where $C$ is an induced
$k$-cycle  ($k \leq 5$) and   $e$ is an edge of $C$,  can be computed
in linear time. This is due to Chiba and Nishizeki
\cite{cn-asla-85}, who have shown that all $k$-cycles and all maximal
$k$-cliques can be computed in linear time if the graph has bounded
arboricity and $k$ is constant. Hence, all $k$-cycles (with and
without a chord) for $k=3,4,5$ and all 4- and 5-cliques (that are assumed
to be maximal) of a
\emph{T1P} graph can be computed in linear time.
Note that  a 4-connected 1-planar  graph does not
contain $K_6$ as a proper subgraph \cite{cgp-rh4mg-06},
and that $K_7$ is not 1-planar.
\\

Like its predecessors  \cite{cgp-rh4mg-06, b-4mapGraphs-19},
 algorithm $\mathcal{N}$ ``makes progress'' and either partitions
  $G$ into smaller components using a  separating $k$-cycle or it finds
  edges  that are crossed in every \emph{T1P}  embedding and removes
them towards planarity.
In addition, it determines ambiguities and updates the label of edges.

The outline of  algorithm $\mathcal{N}$ 
is as follows:

\begin{enumerate}[(1)]
\item Preprocess $G$ and check that it is 3-connected and has between
$3n-6$ and $4n-8$ edges
and compute all induced $k$-cycles for $k\leq 5$ and all (maximal) 4-  or   5-cliques.
\item 
Mark
all edges   that are not part of a   4-  or   5-clique.
\item Compute all pairs $(e, \mathcal{E}(e))$ of unmarked
edges  and their set of crossable edges.
\item Apply a generalized separator in the order described below.
\item  Destroy all remaining X-quadrangles.
\item Check planarity.
\item Compute a \emph{T1P} embedding.
\end{enumerate}

The algorithm will stop if there is a failure at any step. Its core is step (4),
which is described below in all detail. We will
  state the properties of each generalized separator and
prove them using 1-planar embeddings.
This part is by recursion, such that
there is an update for the set  of  induced $k$-cycles
and of pairs  $(e, \mathcal{E}(e))$, such that 3-cycles
are  search next if
any generalized separator has been applied successfully.
After step (4), graph  $G$ has none of the generalized separators.
Then either $G$ is \emph{small} and of order at most six,
or each remaining  4-clique $K(a,b,c,d)$ with a pair of
 unmarked edges $\edge{a}{b}$ and $\edge{c}{d}$ is represented as an
X-quadrangle, 
whereas the edges of a tetrahedron  are marked, such
that none of its edges has a crossable edge. Hence, only the cyclic
order of the vertices or the pair of crossed edges must be
determined. This is done by (simultaneously) replacing each
remaining 4-clique $K(a,b,c,d)$  that must be represented as an
X-quadrangle by a 4-star with a new vertex $z$ that is connected to
each of $a,b,c,d$. An edge $\edge{a}{b}$ of $K(a,b,c,d)$ is kept if
it is marked and is removed, otherwise. The cyclic order of
$a,b,c,d$ is determined by the embedding from the planarity test
in step (6). Alternatively, one may search for edges that are shared by
two 4-cliques as in \cite{cgp-rh4mg-06}.
For step (7),  starting from
the planar embedding after step (6),   all edges are reinserted
that were removed in step (4) and embeddings are composed
at  separating $k$-cycles, such that the uncrossed
edges from the $k$-cycles are identified.
For an algorithmic description we refer to \cite{b-4mapGraphs-19}.

Algorithms $\mathcal{N}$ and
$\mathcal{A}$ differ in some details. First, Chen et
al.~\cite{cgp-rh4mg-06} treat 4-cliques and 5-cliques
separately. They  search for $K_5$ only in 5-connected or small graphs. Their
search for the X-quadrangle in an embedding of a 5-clique is quite
involved. We show that   5-cliques are special and can be classified
by the degree of the vertices using $(3,\ell)$ separators for
$\ell=0,1,2$. Moreover, we distinguish between unique and
other separating triples and quadruples and show that
 unique separating quadrangles, bridge-independent separating
triangles, and singular separating triangles can be searched in any order, whereas
separating triangles must be searched last in \cite{cgp-rh4mg-06},
as (the proof of) Lemma 7.15 shows. We
introduce ambiguous generalized
separators and   use tripods. In consequence, our small graphs
have order at most six, whereas they have up to eight vertices
 in~\cite{cgp-rh4mg-06}.
Finally, there is a planarity test,
where we prepare  the obtained graph
in a special way.

We use the following generalized separators in the listed order,
where some changes are possible. A
generalized separator is marked by an asterisk (*) if it has not
been used by Chen et al.~\cite{cgp-rh4mg-06}.

\begin{figure}[t]
  \centering
  \subfigure[separating edge] {
     \includegraphics[scale=0.35]{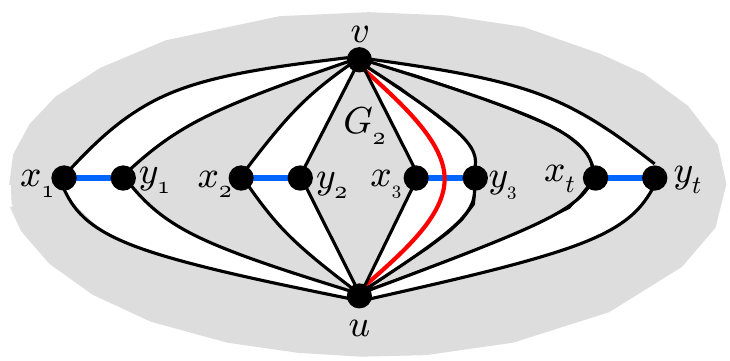}
    \label{fig:separatingedge1}
  }
  \subfigure[separat. triple] {
     \includegraphics[scale=0.4]{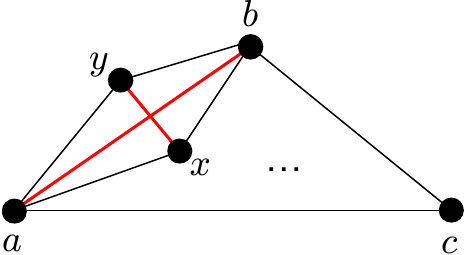}
    \label{fig:triple}
  }
  \subfigure[separ. triangle] {
      \includegraphics[scale=0.4]{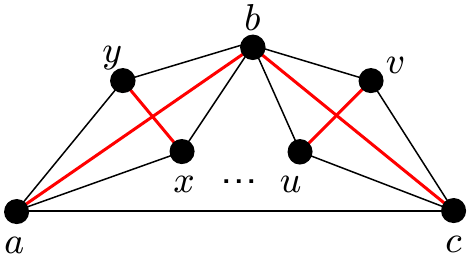}
      \label{fig:triangle}
  }
   \caption{Generalized separators.
}
  \label{fig:separatingtriangles}
\end{figure}

\begin{definition} \label{def:separators}
Let $G$ be a 3-connected  graph with marked (and labeled) edges.
\begin{enumerate}
\item A   3-cycle $C(a,b,c)$ is \emph{separating} if
   $G'=G-C(a,b,c)$ is disconnected.
  \item  Edge $\edge{u}{v}$ is a \emph{separating edge}  if $G'=G -\{u,v\}-\mathcal{E}\edge{u}{v}$ is disconnected, see Fig.~\ref{fig:separatingedge1}.
  \item  A \emph{separating triple}  consists of a 3-cycle
  $C(a,b,c)$ such that $G'=G-C(a,b,c)-\mathcal{E}\edge{a}{b}$ is disconnected,  see
  Fig.~\ref{fig:triple}.
 \item (*)  A  \emph{$K_5$-triple}
  is a 5-clique with a vertex  of degree four.
 \item A \emph{separating 4-cycle}  is an induced 4-cycle $C(a,b,c,d)$
such that  $G'=G-C(a,b,c,d)$ is disconnected.
\item A   \emph{separating quadruple} consists of an induced 4-cycle $C(a,b,c,d)$
such that  $G'=G-C(a,b,c,d)-\mathcal{E}\edge{a}{b}$ is disconnected.\\
(*) It is \emph{unique} if  $G'$ has a single bridge and  $G-C(a,b,c,d)-
\mathcal{E}(e)$ does not decompose for an  edge $e \neq \edge{a}{b}$ of $C(a,b,c,d)$.
%
%
  \item  A \emph{separating triangle} consists of a 3-cycle  $C(a,b,c)$,
  such that $G'=G-C(a,b,c)-\mathcal{E}\edge{a}{b}-\mathcal{E}\edge{b}{c}$
  is disconnected, see Fig.~\ref{fig:triangle}. \\
(*) It is \emph{bridge-independent} if  $G'$ has two nontrivial parts that
are connected by independent bridges.
\item  (*) A separating triangle is \emph{singular}  if it is strong, such that one part of
   $G'$  is a  vertex $d$, there are two
   bridges $\edge{d}{x}$ and $\edge{d}{y}$   and there is no edge
   $\edge{x}{y}$.
 \item  (*) A
    $K_5$-\emph{destroyer} is
  a 5-clique $K(a,b,c,x,y)$ of $G$ with vertices $x,y$ of degree
  five, a 5-clique $K(a,b,c,y,z)$ and 4-cliques $K(a,c,x,u)$ and
  $K(a,b,z,v)$.
\item (*) An \emph{ambiguous separating quadruple}
consists of a strong separating quadruple with a bridge  $\edge{x}{y}
  \in \mathcal{E}\edge{a}{b} \cap \mathcal{E}\edge{b}{c}$.
 \item (*) An \emph{ambiguous separating triangle}  is a strong
  separating triangle such that  there   are bridges $\edge{d}{x}$ and $\edge{d}{y}$  and there is an edge
  $\edge{x}{y}$, see Fig.~\ref{fig:ambigtriangle}(a) and (b).
%
  %
%
 \item A \emph{separating 5-cycle}  is an induced 5-cycle $C(a,b,c,d,e)$
such that  $G'=G-C(a,b,c,d,e)$ is disconnected.
\item  (*) A \emph{separating tripod} consists of a 3-cycle $C(a,b,c)$
such that
$G'=G-C(a,b,c)-\mathcal{E}\edge{a}{b}-\mathcal{E}\edge{b}{c}-
\mathcal{E}\edge{a}{c}$ is disconnected, see Fig.~\ref{fig:kite-covered-tetrahedron}.
\item (*) A \emph{singleton separating tripod} is a strong separating tripod $C(v_1, v_2, v_3)$
 such that $G'$ decomposes into two parts, one of which  is
    a single vertex $u$ (of degree six), there are three bridges
    $\edge{u}{v_i} \in \mathcal{E}\edge{v_i}{v_{i+1}}$ for $i=1,2,3$
    with $v_4=v_1$ and at most one edge $\edge{v_i}{v_{i+1}}$,
see Fig.~\ref{fig:SC-graph}.
\item (*) An \emph{ambiguous separating tripod} is a strong separating tripod $C(v_1, v_2, v_3)$
 such that $G'$ decomposes into two parts, one of which  is
    a single vertex $u$ (of degree six), there are three bridges
    $\edge{u}{v_i} \in \mathcal{E}\edge{v_i}{v_{i+1}}$ for $i=1,2,3$
    with $v_4=v_1$ and there are edges $\edge{v_1}{v_2}$ and
    $\edge{v_2}{v_3}$, see Fig,~\ref{fig:strong-SC}.
  \end{enumerate}
\end{definition}

\begin{figure} [t]  
    \centering
  \subfigure[] {
     \includegraphics[scale=0.5]{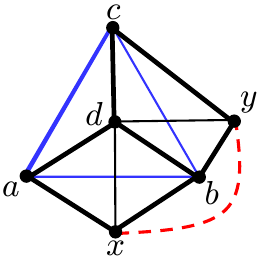}  
    \label{fig:ambigtriangle1}
  }
  \hfil
  \subfigure[] {
      \includegraphics[scale=0.5]{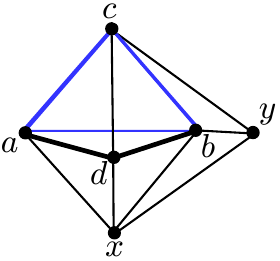} 
      \label{fig:ambigtriple2}
  }
 \subfigure[] {
     \includegraphics[scale=0.2]{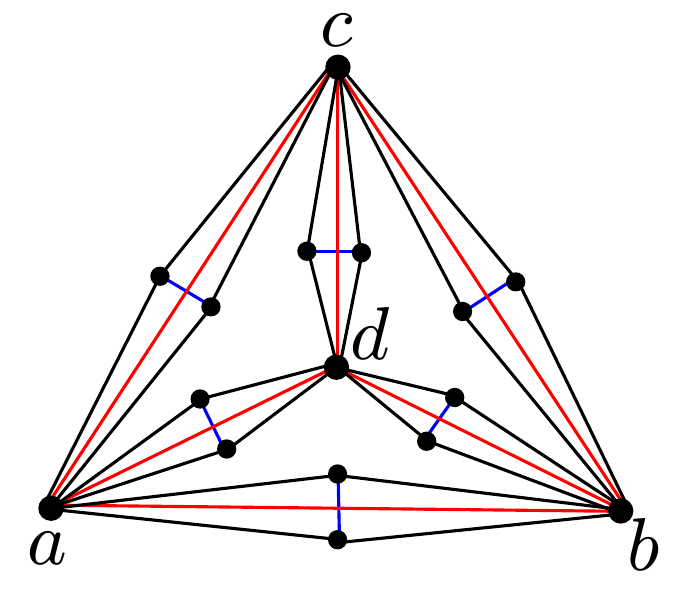}
    \label{fig:kite-covered-tetrahedron}
  }
\hspace{3mm}
  \hfil
  \subfigure[] {
      \includegraphics[scale=0.5]{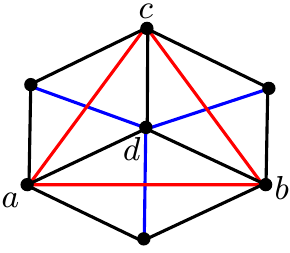}
      \label{fig:SC-graph}
  }
     \caption{(a) An ambiguous separating triangle $C(a,b,c)$ with
     center $d$  inside and (b) outside $C(a,b,c)$, and (c)
a completely kite-covered
   tetrahedron for a separating  tripod   and (d) a strong separating tripod with center $d$.
     }
     \label{fig:ambigtriangle}
\end{figure}

The order of generalized separators has the following effect, which
is useful   for the correctness proof.
  A separating edge  is searched only in
4-connected graphs, since 3-separators are searched before. There is no
separating edge if a separating triple is searched. However,
  unique separating quadrangles, bridge-independent and
singular separating triangles can be searched in any order, as opposed
to similar generalized separators in \cite{cgp-rh4mg-06},
and they precede their ambiguous counterparts.
At last
an ambiguous separating tripod can only be  found in the absence of
all other separators.
 We  claim that each  graph $G$, obtained by the use
 of generalized separators, is a
triangulated planar graph  if the given graph $G_0$ is a \emph{T1P}
graph.   Conversely, $G_0$ is a \emph{T1P} graph if each of its
components at separating $k$-cycles is reduced to a triangulated
planar graph. The \emph{T1P} embeddings of $G$ resp. $G_0$ are composed from the
\emph{T1P} embeddings of each component. We prove the claim
and thus Theorem~\ref{thm1}, by a
series of Lemmas, similar to the proof given by Chen et
al.~\cite{cgp-rh4mg-06}. They use maps for the presentation,
whereas we use 1-planar
embeddings, which is simpler, in particular for separating triangles.
In addition, we show that there
are ambiguities that are detected at generalized separators
and a reported by an edge label $\lambda(e)>1$.

\subsection{$k$-Separators} \label{sect:separators}
Forthcoming, we assume that $G$ is a  \emph{T1P} graph of order at
least seven. It is obtained from the given input graph by the use of
generalized separators, in general. The \emph{T1P} embeddings of small graphs
are first computed by exhaustive search, as suggested in \cite{cgp-rh4mg-06}.
We list them later and compute the number of
\emph{T1P} embeddings using generalized separator in Lemma~\ref{lem:small-graphs}.
There will be less \emph{T1P} embeddings if further   edges are marked, for example,
if a triangle is  the outer boundary of a separating 3-cycle, such that its  edges are marked.

We say that an edge is \emph{uncrossed} (\emph{crossed}) if this is
true in every \emph{T1P} embedding of $G$. Otherwise, it is crossed in one
\emph{T1P} embedding and uncrossed in another. Hence, all edges of $K_5$-$e$ are
uncrossed, since the 1-planar drawings in Figs.~\ref{fig:K5-e-aug}
and \ref{fig:K5-e-ill} are excluded.
We search for a generalized separator in the absence of some smaller separators,
which are listed and used before, such that a  generalized separator is \emph{minimal}
at the time of its use.

\begin{proposition}  \label{prop:all-separators}
 If $S$ is a set of $k$ vertices ($k =3,4,5$) in a $k$-connected T1P
graph $G$, such that $S$ is  a minimal generalized separator, then
$G[S]$ is an induced  separating $k$-cycle, that is the vertices of $S$
are ordered, any two consecutive vertices   are connected by an edge,
the $k$-cycle has no chord, its edges are uncrossed in every T1P
embedding, and
$G-S$ decomposes into two parts.
\end{proposition}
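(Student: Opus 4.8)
The plan is to fix a T1P embedding $\Gamma$ of $G$, read the structure of $S$ off it, and then observe that what is found is independent of $\Gamma$. First I would note that since $G$ is $k$-connected and $|S|=k$, the set $S$ is a minimum vertex cut, hence inclusion-minimal; a standard argument then gives that every vertex of $S$ has a neighbour in every component of $G-S$ (otherwise $S$ minus that vertex still separates $G$, contradicting $k$-connectivity). This is essentially the only use of minimality I need when $k=3$; for $k=4,5$ I would additionally use that none of the generalized separators preceding a $k$-separator in Definition~\ref{def:separators} (a separating edge or a separating triple, and for $k=5$ also a separating quadruple, etc.) occurs in $G$, and this enters at exactly one point below.

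Second, I would isolate the pocket of a component. Fix a component $A$ of $G-S$ and let $B=V\setminus(S\cup A)\neq\emptyset$; deleting $B$ from $\Gamma$ gives a 1-planar embedding $\Gamma_A$ of the connected graph $G[S\cup A]$. Because two vertices in distinct components are non-adjacent and the four vertices meeting at any crossing of $\Gamma$ induce a $K_4$, no edge and no crossing of $\Gamma$ can involve two different components, so every face of $\Gamma$ incident to $B$ has all of its remaining vertices in $S$. Hence the region vacated by $B$ collapses to a single face $f_A$ of $\Gamma_A$ whose boundary walk $W_A$ lies entirely in $G[S]$, and by the neighbour property $W_A$ meets every vertex of $S$. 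Writing $\Delta_A$ for the closure of the complement of $f_A$, we get $A\subseteq\operatorname{int}\Delta_A$, $B\cap\Delta_A=\emptyset$, and $\partial\Delta_A=W_A$.

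Third — and this is the heart of the matter — I would show that every edge of $G[S]$ on some $W_A$ is uncrossed in $\Gamma$, and that $W_A$ is a chordless cycle. Suppose an edge $st$ of $W_A$ is crossed by an edge $uv$; then $\{s,t,u,v\}$ induces a $K_4$ drawn as an X-quadrangle with $st$ as one diagonal, and since there is no vertex in the X-quadrangle's interior, the half cut off by the diagonal $st$ is a vertex-free bubble. Tracing the curve $W_A$ through this bubble shows that $A$ and $B$, which are connected in $G$ only through $S$, become separated by at most $k-1$ vertices of $S$, possibly together with a single crossable edge. For $k=3$ this contradicts $3$-connectivity: two disjoint connected subgraphs cannot both touch all three vertices of $S$ on the boundary of a disk. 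For $k=4,5$ it produces a separating edge or a separating triple — or, when the crossing partner uses what would be a chord of $W_A$, that chord forces a separating quadruple in the $k=5$ case — each of which is excluded by the algorithm's order. Having ruled out crossed boundary edges, $W_A$ is a genuine closed walk in $G[S]$; if it repeated a vertex, $\Delta_A$ would pinch there and $A$ would sit inside a lobe bounded by fewer than $k$ vertices of $S$, again impossible; and a chord $e$ of $W_A$ together with the shorter arc it cuts off would be a cycle on at most $k-1$ vertices separating $A$ from $B$, contradicting $k$-connectivity for $k\ge 4$ (and vacuous for $k=3$). Hence $G[S]=C:=W_A$ is an induced $k$-cycle with uncrossed edges; since $C$ depends only on $G[S]$, its edges are uncrossed in every T1P embedding.

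Finally, I would settle that $G-S$ has exactly two parts. Now $C$ is a simple closed curve in $\Gamma$ bounding two disks $D,D'$, and no component can straddle $C$ since $C$ is uncrossed. Applying the pocket construction with $A=C_1$ forces $V\setminus(S\cup C_1)$ into the single face $f_{C_1}$ bounded by $C$, hence entirely into one side, say $D'$; applying it again to any other component $C_2\subseteq D'$ forces $V\setminus(S\cup C_2)\supseteq C_1$ into the opposite side $D$, so a third component $C_3$ would have to lie in $D\cap D'=\emptyset$. Thus there are exactly two components, and the proposition follows. I expect the uncrossedness-and-chordlessness step to be the real obstacle: the remaining steps are a routine adaptation of the classical fact that a minimal separator of a $3$-connected plane graph is a non-facial induced cycle, whereas controlling the X-quadrangles — and, for $k=4,5$, reducing a crossed boundary edge or a chord to one of the earlier generalized separators — is where the genuine work lies.
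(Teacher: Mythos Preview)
Your approach is genuinely different from the paper's. The paper does not give a self-contained argument here at all: it simply says the statement ``has been proved (in parts) at several places'' \cite{b-4mapGraphs-19, mrs-outermap-18, cgp-rh4mg-06} and points to the planar bipartite witness of a hole-free 4-map graph as the underlying reason, adding only that chordlessness for $k=4,5$ follows from the absence of the smaller separators (separating triples, separating quadruples) in the algorithm's order. In other words, the paper outsources the work to the witness graph, where one can run the classical ``minimum separator of a $3$-connected plane graph is an induced non-facial cycle'' argument on an honestly planar object. Your plan instead stays inside the $1$-planar embedding and tries to read everything off $\Gamma$ directly; that is more self-contained and closer in spirit to how the rest of the paper argues, and it is a reasonable thing to attempt.

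That said, two steps in your write-up are not yet proofs. First, ``the region vacated by $B$ collapses to a \emph{single} face $f_A$'' is asserted, not shown: a priori $B$ may consist of several components of $G-S$, each sitting in its own face of $\Gamma_A$, and you only learn post hoc (once you know $G[S]$ is a simple closed curve) that there are exactly two sides. This is harmless---just pick one face containing one component of $B$ and run the rest of the argument on it, using that every $s\in S$ has a neighbour in \emph{that} component---but as written it reads as a gap. Second, and more seriously, the uncrossedness paragraph is a sketch, not an argument. Saying ``tracing $W_A$ through this bubble shows that $A$ and $B$ become separated by at most $k-1$ vertices of $S$, possibly together with a single crossable edge'' hides exactly the case analysis that carries the content: if $st\in W_A$ is crossed by $uv$, you must locate $u$ and $v$ relative to $A$, $B$, $S$ (using that $\{s,t,u,v\}$ is a $K_4$, so $u,v\notin A$ and $u,v\notin B$ cannot both fail), then argue concretely that one obtains either a smaller vertex cut or one of the earlier generalized separators from Definition~\ref{def:separators}. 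For $k=3$ in particular your one-line appeal to ``two disjoint connected subgraphs cannot both touch all three vertices of $S$ on the boundary of a disk'' does not obviously dispatch the case $v\in S$, $u\in B$ (where $s,t,v$ already exhaust $S$ and form a triangle whose interior is just the two X-quadrangle faces on the $v$-side). The witness-graph route that the paper cites avoids exactly this kind of bookkeeping; if you want to keep your direct approach, this is the step that needs to be written out in full.
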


Proposition~\ref{prop:all-separators} has been proved (in parts) at
several places \cite{b-4mapGraphs-19, mrs-outermap-18,cgp-rh4mg-06}.
It is due to the fact that a \emph{T1P} graph (or a  hole-free 4-map
graph) has a planar bipartite graph as a witness \cite{cgp-mg-02}
and that the 1-planar embedding triangulated.
The exclusion of a  chord for $k=4,5$ is due to the absence of smaller
separators, in particular separating  triples
(Section~\ref{sect:triple}) for 4-cycles and   separating
quadruples (Section~\ref{sect:quadruple}) for 5-cycles. Note that
any edge  of a minimal cycle can be crossed in the given graph if
the crossing edge has been removed before.

In consequence, there is a clear decision at $k$-separators. Each
unmarked edge  $e$ of $G[S]$ is marked
and the pair $(e, \mathcal{E}(e))$ is removed
from the respective set computed in step (3).  If $G_1$ and $G_2$ are the parts
of $G-S$, then the components $G_1+S$ and $G_2+S$ are treated
recursively, such that $S$ is the outer face of an embedding of
$G_1+S$ and an inner face of $G_2+S$, or vice versa.
Each face is
triangulated by   marked chords   if $k \geq 4$. A chord   is
chosen such that it does not create a 5-clique in a component. This is
possible, since $G$ has a smaller separator if it has a separating
4-cycle with a chord. The chord is removed before the composition of
the embedding of the components to an embedding of $G$.  The number
of \emph{T1P} embeddings of $G$ is the product of the number of \emph{T1P}
embeddings of the components.
Hence, each vertex of degree three
can be treated in preprocessing step (2), since it is one part at a
separating 3-cycle, whose edges can be marked.

\subsection{Separating Edge} \label{sect:separtingedge}
Forthcoming, we consider 4-connected \emph{T1P} graphs of order at
least seven. This excludes  cliques of size six \cite{cgp-rh4mg-06},
since they have 3-cycles in the interior and the exterior, see~Fig.~\ref{fig:K6-triangle}.
  Next we show that a separating edge is crossed
in every \emph{T1P} embedding. It crosses one of its bridges, such
that there are  $t$ \emph{T1P} embeddings if there are $t$ bridges.
Conversely, if there are two \emph{T1P} embeddings  that coincide
except for the routing of  edge $\edge{u}{v}$, then $\edge{u}{v}$ is
a separating edge. These facts have been proved in
\cite{cgp-rh4mg-06} using maps.

\begin{lemma} \label{if-seperatingedge-cross}  
If $\edge{u}{v}$ is a separating edge, then it is crossed and it
crosses one of its $t \geq 2$ bridges.
\end{lemma}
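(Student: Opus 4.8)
The proof splits into two halves: (i) a separating edge must be crossed in every $T1P$ embedding, and (ii) it crosses one of its bridges, with the number of $T1P$ embeddings tied to the number of bridges. Throughout I work with a fixed $T1P$ embedding $\Gamma$ of $G$; recall $G$ is $4$-connected of order at least seven, $\edge{u}{v}$ is an edge, and $G' = G - \{u,v\} - \mathcal{E}\edge{u}{v}$ is disconnected, say into nonempty vertex sets splitting into $G_1, G_2$ (by Proposition~\ref{prop:all-separators}-style reasoning on the structure of the disconnection, though here $\{u,v\}$ is not itself a separating set since $G$ is $4$-connected, so the edges of $\mathcal{E}\edge{u}{v}$ do real work).

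For half (i), suppose for contradiction that $\edge{u}{v}$ is uncrossed in $\Gamma$. Since $\Gamma$ is triangulated, $\edge{u}{v}$ bounds two faces, each a triangle or a side of an X-quadrangle; in either case there are vertices $x, y$ (one on each side of $\edge{u}{v}$) such that $G[u,v,x]$ and $G[u,v,y]$ are triangles in $\Gamma$, possibly completed to $4$-cliques $G[u,v,x,\cdot]$ etc. The key move is to analyze the curve $\gamma$ consisting of the edge $\edge{u}{v}$ together with these incident edges: because $\edge{u}{v}$ is uncrossed, any path in $G$ from a vertex ``left'' of $\edge{u}{v}$ to one ``right'' of it must pass through $u$, $v$, or cross an edge incident to $u$ or $v$ near $\edge{u}{v}$ — and the crossing partner of such an edge, together with $u$ or $v$, forms a $4$-clique, hence lies in $\mathcal{E}\edge{u}{v}$. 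Making this precise (a Jordan-curve argument around the uncrossed segment $\edge{u}{v}$, extended through the two incident faces) shows that deleting $\{u,v\} \cup \mathcal{E}\edge{u}{v}$ cannot disconnect $G$ — contradicting that $\edge{u}{v}$ is a separating edge. The main obstacle is bookkeeping the crossings in the (at most two) faces adjacent to $\edge{u}{v}$ and confirming that every crossing edge ``escaping'' the local picture has its partner in $\mathcal{E}\edge{u}{v}$; this uses that $\Gamma$ is triangulated and $1$-planar, so only X-quadrangles occur and each produces a $4$-clique.

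For half (ii), now that $\edge{u}{v}$ is known to be crossed in $\Gamma$, let $\edge{x}{y}$ be its unique crossing partner; then $G[u,v,x,y]$ is a $4$-clique forming an X-quadrangle $Q(u,x,v,y)$, so $\edge{x}{y} \in \mathcal{E}\edge{u}{v}$. I claim $\edge{x}{y}$ is a bridge of the separating edge, i.e., $x$ and $y$ lie in different components of $G'$: inside the X-quadrangle there is nothing, and outside it, the only way from the $x$-side to the $y$-side avoiding $u, v$ and the edges crossing $\edge{u}{v}$ (which is only $\edge{x}{y}$ itself, removed) is blocked, again by a Jordan-curve argument using that $Q(u,x,v,y)$'s boundary $4$-cycle separates. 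So every $T1P$ embedding's choice of crossing partner for $\edge{u}{v}$ is a bridge in $\mathcal{B}\edge{u}{v}$; conversely each bridge $\hat e = \edge{x}{y} \in \mathcal{B}\edge{u}{v}$ yields a valid $T1P$ embedding with $\edge{u}{v}$ crossed by $\hat e$, obtained by routing $\edge{u}{v}$ and $\hat e$ through the $4$-clique $G[u,v,x,y]$ while fixing the embeddings of $G_1 + \{u,v\}$ and $G_2 + \{u,v\}$ recursively — distinct bridges give non-homeomorphic embeddings because the pair of parts they reconnect differs. Since $G$ is $4$-connected and $\{u,v\}$ alone does not disconnect $G$, there must be at least two such bridges (one is not enough to produce a genuine separating edge with $G'$ disconnected while $G - \{u,v\}$ is connected), giving $t \geq 2$. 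I expect half (i) — ruling out the uncrossed case via the local-face Jordan-curve argument — to be the delicate step; half (ii) is then largely a matter of matching crossing partners with bridges and invoking the recursive composition of $T1P$ embeddings already set up in Section~\ref{sect:algorithm}.
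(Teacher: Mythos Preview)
Your half~(i) contains a genuine gap. An uncrossed edge $\edge{u}{v}$ is an arc, not a closed curve, so there is no well-defined ``left'' and ``right'' of it on the sphere; extending through the two incident triangular faces yields a quadrilateral region $C(u,a,v,b)$ whose interior contains no vertices at all, so that Jordan curve tells you nothing about the connectivity of $G'$. Worse, your sketch points in the wrong direction: if every path between the two ``sides'' had to pass through $u$, $v$, or an edge of $\mathcal{E}\edge{u}{v}$, that would \emph{confirm} rather than contradict that $\edge{u}{v}$ is a separating edge.

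The paper's argument is structurally different and does not use a topological separation property of the uncrossed edge at all. It works bridge by bridge: each bridge $\edge{x_i}{y_i}$ yields a maximal $4$-clique $K_i=K(u,v,x_i,y_i)$, and maximality forces $x_i,y_i$ to have \emph{no} common neighbour $w\neq u,v$ (else one of $\edge{x_i}{w},\edge{y_i}{w}$ would be a crossable edge of $\edge{u}{v}$, making $K(u,v,x_i,y_i,w)$ a $5$-clique). Now if $\edge{u}{v}$ were uncrossed, $K_i$ is embedded either as an X-quadrangle with $\edge{u}{v}$ on the boundary---which, by triangulation, immediately produces such a forbidden common neighbour---or as a tetrahedron. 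In the tetrahedron case the no-common-neighbour fact forces the faces $C(u,x_i,y_i)$ and $C(v,x_i,y_i)$ to be empty with uncrossed sides, whence $C(u,v,x_i)$ or $C(u,v,y_i)$ is a separating $3$-cycle, contradicting $4$-connectivity. The missing idea is this interplay between maximality of the bridge's $4$-clique and the absence of separating $3$-cycles; a Jordan-curve argument around $\edge{u}{v}$ cannot substitute for it.

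Your half~(ii) is also shaky: the boundary $4$-cycle $C(u,x,v,y)$ of the X-quadrangle does not separate $x$ from $y$ (both lie on it), and after deleting $u,v$ they may well be joined through the exterior unless you remove all of $\mathcal{E}\edge{u}{v}$ and argue further. The paper avoids this by running the bridge-by-bridge analysis above in both directions: once $\edge{u}{v}$ is known to be crossed, only the representation $K_i=Q(u,x_i,v,y_i)$ survives for exactly one $i$, so the crossing partner is automatically among the already-identified bridges. Finally, $t\geq 2$ follows because $G$ is $4$-connected, so no single $3$-set $\{u,v,x_i\}$ or $\{u,v,y_i\}$ disconnects $G$, whereas $\{u,v\}$ together with the bridges does.
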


\begin{proof}  
Graph $G'=G-\{u,v\}-\mathcal{E}\edge{u}{v}$ decomposes into
 parts with a set of bridges
$\mathcal{B}\edge{u}{v}= \{\edge{x_i}{y_i} \, | \, i=1,\ldots, t\}
\subseteq \mathcal{E}\edge{u}{v}$
if  $\edge{u}{v}$ is a   separating edge.
Vertices  $x_i$ and $y_i$   are in different parts and there
is a maximal  4-clique $K_i=K(u,v,x_i,y_i)$ for $1 \leq i \leq t$.
  Since
$G$ is 4-connected, it does not decompose by the removal of a 3-set
$\{u,v,x_i\}$ or $\{u,v,y_i\}$, but it decomposes if $u,v$ and the
edges of $\mathcal{B}\edge{u}{v}$ are removed.
Hence, there are at least two bridges.
We claim that there is no vertex $w \neq  u,v$
that is a neighbor of both $x_i$ and $y_i$.
Otherwise, at least one of $\edge{x_i}{w}$ or $\edge{y_i}{w}$
must be a crossable edge of $\edge{u}{v}$, since $x_i$ and $y_i$ are
connected via $w$ in $G'$, otherwise, see Fig.~\ref{fig:proof-separatingedge}.
 Then $K(u,v,x_i,y_i,w)$ is a 5-clique, since edges $\edge{w}{x_i}$
and $\edge{w}{y_i}$ must exists by crossability,
contradicting the maximality of $K(u,v,x_i,y_i)$.

However, a common neighbor $w \neq u,v$ of $x_i$ and $y_i$
cannot be avoided if $\edge{u}{v}$ is uncrossed  in an
embedding of $G$. Then each $K_i$ is represented as a tetrahedron or
$K_i$ is an X-quadrangle with edges $\edge{u}{v}$ and $\edge{x_i}{y_i}$
on the boundary.  In the latter case, there is a neighbor $w$ of $x_i$ and $y_i$,
since $G$ has at least five vertices and its embedding is triangulated.
If $K_i$ is represented
as a tetrahedron, 
 then  there are no vertices on both sides of  $C(u, x_i, y_i)$ and $C(v,x_i,y_i)$,
such that  the edges of $C(u, x_i, y_i)$ and $C(v,x_i,y_i)$ are uncrossed.
Otherwise,
 one of  the edges of  $C(u, x_i, y_i)$ must be crossed by 4-connectivity. As before,
if $\edge{x_i}{y_i}$ is crossed by an edge $\edge{w}{w\rq{}}$, then one of
$\edge{x_i}{w}$ or $\edge{y_i}{w}$ is in $\mathcal{E}\edge{u}{v}$,
such that $K(u,v,x_i,y_i,w)$ is a 5-clique. Otherwise, $x_i$ and $y_i$ have
a common neighbor $w$ in the interior of $C(u, x_i, y_i)$,
contradicting our claim.
The case of $C(v,x_i,y_i)$
is similar. Now, one of $C(u,v,x_i)$  or $C(u,v,y_i)$ is a separating 3-cycle
if  $\edge{u}{v}$ is uncrossed that separates $x_i$ or $y_i$, respectively,
from a fifth vertex of $G$, a contradiction.

Hence, edge $\edge{u}{v}$ is crossed and there is exactly one $i$, $1\leq i
\leq t$, such that $K_i= Q(u,x_i,v,y_i)$ is an X-quadrangle.
\qed
\end{proof}

\begin{figure}[t]
  \centering
     \includegraphics[scale=0.7]{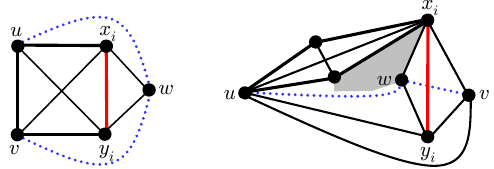}  
   \caption{Illustration for the proof of Lemma~\ref{if-seperatingedge-cross}
with a bridge $\edge{x_i}{y_i}$ and a (nearest) neighbor $w$.
}
  \label{fig:proof-separatingedge}  
\end{figure}

Lemma~\ref{if-seperatingedge-cross} shows that a separating edge is
as displayed in
Fig.~\ref{fig:separatingedge1}. In particular, $K_5$ has a separating edge.
Moreover, it suffices to consider graphs of order at least seven, whereas
Chen et al.~\cite{cgp-rh4mg-06} use graphs of order at least nine.

\begin{lemma} \label{then-seperatingedge}  
 Suppose there are two \emph{T1P} embeddings $\Gamma_1(G)$ and
$\Gamma_2(G)$, such that  edge $\edge{u}{v}$ crosses   $\edge{x}{y}$
in $\Gamma_1(G)$ and   $\edge{x'}{y'}$ in $\Gamma_2(G)$. Otherwise,
$\Gamma_1(G)$ and  $\Gamma_2(G)$ coincide, that is on
$G-\edge{u}{v}$. Then $\edge{u}{v}$  is a separating edge.
\end{lemma}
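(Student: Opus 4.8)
The plan is to show that removing the edge $\edge{u}{v}$ together with all its crossable edges disconnects the graph, by exhibiting two vertices that lie in different components of $G' = G - \{u,v\} - \mathcal{E}\edge{u}{v}$. The natural candidates are $x$ (or $y$) from the first embedding and $x'$ (or $y'$) from the second. First I would record the basic structure: in $\Gamma_1(G)$ the edge $\edge{u}{v}$ is crossed by $\edge{x}{y}$, so $K(u,v,x,y)$ is an X-quadrangle $Q(u,x,v,y)$, and by the triangulation assumption (no vertex in the interior or exterior of an X-quadrangle) $\edge{x}{y} \in \mathcal{E}\edge{u}{v}$ and similarly $\edge{x'}{y'} \in \mathcal{E}\edge{u}{v}$ in $\Gamma_2(G)$; these are distinct crossable edges since the two embeddings differ.

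The core of the argument is a topological separation claim: in $\Gamma_1(G)$, the X-quadrangle $Q(u,x,v,y)$ is a closed region of the plane whose boundary consists of the four edges $\edge{u}{x},\edge{x}{v},\edge{v}{y},\edge{y}{u}$, and whose only interior content is the crossing point of $\edge{u}{v}$ and $\edge{x}{y}$. Hence every vertex of $G$ other than $u,v,x,y$ lies strictly inside one of the two ``triangular halves'' cut off by the curve $x \to (\text{crossing}) \to y$ — equivalently, on one of the two sides of the $4$-cycle $C(u,x,v,y)$. I would argue that any path in $G - \{u,v\}$ from a vertex on one side to a vertex on the other side must use an edge incident to $x$ or to $y$ that crosses the quadrangle boundary, hence an edge of $\mathcal{E}\edge{u}{v}$; more carefully, a vertex strictly inside one side can only be joined to the other side by passing through $\{u,v,x,y\}$, and since $u,v$ are deleted, only through $x$ or $y$, and the relevant incident edges are exactly the crossable edges of $\edge{u}{v}$ (because any neighbor $w$ of, say, $x$ on the far side together with $u,v,x$ forms a $4$-clique, as in the proof of Lemma~\ref{if-seperatingedge-cross}). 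Thus the vertices strictly inside one side of $C(u,x,v,y)$ form a union of components of $G'$. Now I would use the second embedding: since $\edge{x'}{y'} \neq \edge{x}{y}$, the crossing partner changes, which forces $\{x',y'\}$ to be split by $C(u,x,v,y)$ — one of $x',y'$ lies strictly on each side (this is where one compares how the two embeddings route $\edge{u}{v}$ relative to the fixed subgraph $G - \edge{u}{v}$, on which they agree). Therefore $x'$ and $y'$ lie in different components of $G'$, so $G'$ is disconnected and $\edge{u}{v}$ is a separating edge.

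The main obstacle I anticipate is the rigorous justification that the two X-quadrangles $Q(u,x,v,y)$ and $Q(u,x',v,y')$, coming from embeddings that agree on $G - \edge{u}{v}$, genuinely ``interleave'' so that one of $x',y'$ ends up on each side of $C(u,x,v,y)$. One has to rule out the degenerate possibility that $\{x,y\}$ and $\{x',y'\}$ share a vertex, or that $C(u,x,v,y)$ happens to be a non-separating $4$-cycle with all of $x',y'$ on one side — in the latter case one would instead run the symmetric argument with $C(u,x',v,y')$ and the pair $\{x,y\}$, and I would show that at least one of the two $4$-cycles does the separating job, using that the embeddings are distinct. A clean way to handle this is to fix the planar embedding $\Gamma$ of $G - \edge{u}{v}$ (on which $\Gamma_1,\Gamma_2$ agree) and observe that an arc realizing $\edge{u}{v}$ through face(s) of $\Gamma$ is determined up to homotopy by which face it traverses; two non-homotopic such arcs must enclose different vertex sets, and since both arcs must be crossed exactly once (each is a crossed edge of a triangulated embedding), the crossing edges $\edge{x}{y}$ and $\edge{x'}{y'}$ witness this difference, yielding the required interleaving. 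I would also invoke $4$-connectivity and the $3$-connectivity of the uncrossed planar subgraph (as stated in the preliminaries) to guarantee that each ``side'' of the relevant $4$-cycle is nonempty, so that $G'$ really has at least two components.
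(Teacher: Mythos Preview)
Your topological separation step does not work as written. In $\Gamma_1(G)$ the X-quadrangle $Q(u,x,v,y)$ has, by definition, \emph{nothing} in its interior except the crossing point; consequently every vertex of $G$ other than $u,v,x,y$ lies in the \emph{exterior} of the $4$-cycle $C(u,x,v,y)$, i.e.\ all on the same side. The ``two triangular halves cut off by the curve $x\to(\text{crossing})\to y$'' are both inside the X-quadrangle and hence empty, so this curve separates nothing. Your subsequent claim that a path in $G-\{u,v\}$ between the two sides must use a crossable edge of $\edge{u}{v}$ therefore has no content, and the whole interleaving discussion is trying to patch a separation that was never established.

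The paper's argument avoids this by a different device: since $\Gamma_1$ and $\Gamma_2$ agree on $G-\edge{u}{v}$, one can \emph{overlay} the two routings of $\edge{u}{v}$ in a single drawing, obtaining a $1$-planar embedding with a multi-edge $\edge{u}{v}$. The two parallel arcs together form a closed Jordan curve $J$, and \emph{this} curve genuinely separates the sphere into two regions, each containing vertices of $G$. The curve $J$ is crossed exactly once by $\edge{x}{y}$ and once by $\edge{x'}{y'}$, so removing $u,v$ and these two edges disconnects $G$. This merge trick is the missing idea; your homotopy sketch at the end is heading in this direction but never actually produces the separating curve. There is also a second gap: membership of $\edge{x}{y}$ in $\mathcal{E}\edge{u}{v}$ requires the $4$-clique $K(u,v,x,y)$ to be \emph{maximal}, which does not follow from the X-quadrangle having empty interior. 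The paper proves maximality separately, by showing that a fifth vertex $w$ adjacent to all of $u,v,x,y$ cannot be placed in the merged picture without violating $1$-planarity.
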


\begin{proof}  
We merge $\Gamma_1(G)$ and $\Gamma_2(G)$ and obtain a 1-planar
embedding   $\Gamma'(G)$ with a multi-edge $\edge{u}{v}$. Edge
$\edge{u}{v}$ and its copy are a closed curve $J$ in $\Gamma'(G)$
that encloses a subgraph $G_1$. Curve $J$ is crossed twice, namely  by
$\edge{x}{y}$  and $\edge{x'}{y'}$, such that $x$ and $y$ are on
opposite sides of $J$, and similarly for $x'$ and $y'$. Note that $x
\neq x'$ if $y=y'$, or vice versa.
The removal of vertices $u$ and $v$ and edges $\edge{x}{y}$ and
$\edge{x'}{y'}$ partitions $G$ into parts $G_1$ and $G_2$ such that
$G_1$ contains $x, y$ and $G_2$ contains $x', y'$. Since our
1-planar embeddings are triangulated, there are 4-cliques
$K=K(u,v,x,y)$ and $K'=K(u,v,x',y')$. If $K$ and $K'$ are maximal
4-cliques, then edges $\edge{x}{y}$ and $\edge{x'}{y'}$ are
crossable edges of $\edge{u}{v}$.
 Towards a contradiction, assume that $K$ is not
maximal. If  $K(u,v,w,x,y)$ is a 5-clique, then consider the
placement of $w$ in the embedding $\Gamma'(G[u,v,x,y,x',y',w])$.
Since $J$ is a closed curve formed by $\edge{u}{v}$ and its copy, there is a
closed curve $C(u,x,v,x')$ of uncrossed edges, and similarly for
$C(u,y,v,y')$. Then
 there is no face in $\Gamma'(G[u,v,x,y,x',y',w])$, such that
 $w$ can be placed in the face and
1-planarity is preserved.

Hence, $\edge{u}{v}$ is a separating edge with bridges $\edge{x}{y}$  and $\edge{x'}{y'}$.
\qed
\end{proof}

\begin{lemma} \label{lem:separatingedge-reduce}
Let $\edge{u}{v}$ be  a separating edge with a set of bridges
$\mathcal{B}\edge{u}{v}= \{\edge{x_i}{y_i} \, |\, i=1,\ldots, t\}$ for some $t
\geq 2$.
If $\Gamma_i(G)$ is a T1P embedding such that $\edge{u}{v}$ crosses
$\edge{x_i}{y_i}$ for some $1 \leq i \leq t$, then there is an
X-quadrangle $Q(u,x_i,v,y_i)$  and there are triangles $C(u,
x_j,y_j)$ and $C(v, x_j, y_j)$ for $j\neq i$ and $1 \leq j \leq t$.
There is a  T1P  embedding of $G- \edge{u}{v}$ with  triangles $C(u,
x_j,y_j)$ and $C(v, x_j, y_j)$ for $j=1,\ldots, t$.
Conversely, if $\Gamma'(G- \edge{u}{v})$ is a  T1P  embedding such
that there are triangles $C(u, x_j,y_j)$ and $C(v, x_j, y_j)$ for
$j=1,\ldots, t$, then there are   T1P  embeddings $\Gamma_1(G),
\ldots, \Gamma_t(G)$ such that  $\edge{u}{v}$ crosses
$\edge{x_i}{y_i}$ in $\Gamma_i(G)$ and $\edge{x_j}{y_j} \in
\mathcal{B}\edge{u}{v}$ with $i\neq j$ is uncrossed in $\Gamma_i(G)$.
\end{lemma}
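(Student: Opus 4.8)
The plan is to leverage Lemma~\ref{if-seperatingedge-cross}, which already tells us that a separating edge $\edge{u}{v}$ is crossed in every \emph{T1P} embedding, and that the crossing partner is exactly one of the $t$ bridges $\edge{x_i}{y_i}$. So fix a \emph{T1P} embedding $\Gamma_i(G)$ in which $\edge{u}{v}$ crosses $\edge{x_i}{y_i}$; then $K(u,v,x_i,y_i)$ is realized as the X-quadrangle $Q(u,x_i,v,y_i)$. For each other bridge $\edge{x_j}{y_j}$ with $j\neq i$, the edge $\edge{x_j}{y_j}$ is \emph{not} crossed by $\edge{u}{v}$ (only one crossing per edge), and $\edge{u}{v}$ is already crossed, hence uncrossed, so the maximal $4$-clique $K(u,v,x_j,y_j)$ must be represented as a tetrahedron; the only way to do this while keeping $\edge{u}{v}$ uncrossed is to place $x_j$ and $y_j$ on the two sides of $\edge{u}{v}$, giving the triangles $C(u,x_j,y_j)$ and $C(v,x_j,y_j)$. (Here I would reuse the argument from the proof of Lemma~\ref{if-seperatingedge-cross}: a common neighbour $w$ of $x_j,y_j$ other than $u,v$ would force a $5$-clique contradicting maximality, so the two triangles are genuine faces.)

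Next I would construct the \emph{T1P} embedding of $G-\edge{u}{v}$. Starting from $\Gamma_i(G)$, delete the edge $\edge{u}{v}$ and the now-uncrossed edge $\edge{x_i}{y_i}$; this reopens the X-quadrangle $Q(u,x_i,v,y_i)$ into a $4$-cycle $C(u,x_i,v,y_i)$ bounding two triangular faces once we re-insert $\edge{x_i}{y_i}$ as an uncrossed chord — but to keep the symmetry with the other bridges I would instead re-insert $\edge{x_i}{y_i}$ \emph{inside} that region so that it too yields triangles $C(u,x_i,y_i)$ and $C(v,x_i,y_i)$. The result is a $1$-planar (indeed planar in that region) triangulated embedding of $G-\edge{u}{v}$ with triangles $C(u,x_j,y_j)$ and $C(v,x_j,y_j)$ for all $j=1,\dots,t$; one must just check that no face other than a triangle or X-quadrangle is created, which is immediate since only the interior of the old X-quadrangle was modified.

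For the converse, suppose $\Gamma'(G-\edge{u}{v})$ is a \emph{T1P} embedding with triangles $C(u,x_j,y_j)$ and $C(v,x_j,y_j)$ for every $j$. Fix an index $i$. The two triangles $C(u,x_i,y_i)$ and $C(v,x_i,y_i)$ share the edge $\edge{x_i}{y_i}$ and together bound a quadrilateral region $R_i$ whose boundary is the $4$-cycle $u,x_i,v,y_i$; this region contains no other vertex since both its halves are faces. Hence I can delete $\edge{x_i}{y_i}$ from $R_i$ and route $\edge{u}{v}$ across the now-empty region, crossing the place where $\edge{x_i}{y_i}$ used to be — equivalently, re-insert both $\edge{u}{v}$ and $\edge{x_i}{y_i}$ so that $K(u,v,x_i,y_i)$ becomes the X-quadrangle $Q(u,x_i,v,y_i)$. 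All other bridges $\edge{x_j}{y_j}$, $j\neq i$, keep their tetrahedral (triangle-triangle) representation and in particular stay uncrossed, and no edge outside $R_i$ is touched, so $1$-planarity and triangulation are preserved; this gives $\Gamma_i(G)$. Doing this for each $i=1,\dots,t$ yields the embeddings $\Gamma_1(G),\dots,\Gamma_t(G)$ with the asserted crossing structure.

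The main obstacle I anticipate is the bookkeeping in both directions that no \emph{other} face degenerates when we swap a bridge between its tetrahedral and X-quadrangle form: one has to argue that the quadrilateral region $R_i$ really is "empty" (contains no vertices or edge fragments in its interior), which rests on the triangulation hypothesis plus the maximality of the $4$-clique $K(u,v,x_i,y_i)$ — exactly the common-neighbour argument already carried out in Lemma~\ref{if-seperatingedge-cross}, so it should transfer with little extra work. A secondary subtlety is making sure the statement "$\edge{x_j}{y_j}$ is uncrossed in $\Gamma_i(G)$" holds: since in $\Gamma_i$ the clique $K(u,v,x_j,y_j)$ is a tetrahedron with $x_j,y_j$ on opposite sides of the uncrossed-in-that-clique diagonal and $\edge{u}{v}$ is consumed by crossing $\edge{x_i}{y_i}$, there is simply nothing left to cross $\edge{x_j}{y_j}$ within that clique, and the triangulated-embedding structure forbids an outside edge from entering.
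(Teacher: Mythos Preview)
Your approach is essentially the paper's: invoke Lemma~\ref{if-seperatingedge-cross} for the X-quadrangle at the crossed bridge, use the no-common-neighbour claim from its proof to force the triangles $C(u,x_j,y_j)$ and $C(v,x_j,y_j)$ for $j\neq i$, and for the converse reinsert $\edge{u}{v}$ across any chosen bridge inside the quadrilateral formed by the two adjacent triangular faces. The paper's own proof is in fact terser than yours (it simply removes $\edge{u}{v}$ and asserts the triangles for all $j$, without your delete/re-insert detour for $\edge{x_i}{y_i}$); your phrase ``$\edge{u}{v}$ is already crossed, hence uncrossed'' is garbled and the jump to ``must be a tetrahedron'' skips the other X-quadrangle orientations of $K(u,v,x_j,y_j)$, but the common-neighbour argument you then cite is exactly what forces $\edge{x_j}{y_j}$ to be uncrossed and hence to bound those two triangular faces, so the substance is correct.
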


\begin{proof}
By Lemma~\ref{if-seperatingedge-cross}, edge $\edge{u}{v}$ is
crossed in every embedding of $G$, and it crosses a bridge
$\edge{x_i}{y_i}$, such that there is an X-quadrangle
$Q(u,x_i,v,y_i)$. If $\edge{u}{v}$ is removed, then there are triangles
$C(u,x_i,y_i)$ and $C(v,x_i,y_i)$ for $i=1,\ldots, t$.

Conversely, edge $\edge{u}{v}$ can be inserted into $\Gamma'(G-
\edge{u}{v})$ such that it crosses $\edge{x_j}{y_j}$ for any $j$
with $1\leq j\leq t$. Then there is an X-quadrangle
$Q(u,x_j,v,y_j)$. Hence, there is a \emph{T1P} embedding
$\Gamma_j(G)$ for $j=1,\ldots,t$.
\qed
\end{proof}

 There is a clear decision for algorithm $\mathcal{N}$ if there
is a separating edge $\edge{u}{v}$ with $t \geq 2$ bridges
$\edge{x_i}{y_i}$. We set $\lambda(\edge{u}{v})=t$ and remove
$\edge{u}{v}$. For each bridge $\edge{x_i}{y_i}$ we mark
$\edge{x_i}{y_i}$ and edges $\edge{u}{x_i}, \edge{v}{x_i},
\edge{u}{y_i}$ and $\edge{u}{y_i}$ if they are unmarked,
and   the pair $(e, \mathcal{E}(e))$ is removed for each newly marked edge $e$.
In the next steps, algorithm $\mathcal{N}$
partitions $G-\edge{u}{v}$ into $t$ components using separating
4-cycles $C_i(u,x_i,v, y_{i-1})$ with $y_0=y_t$ for $i=1,\ldots,t$.

\subsection{Separating Triples} \label{sect:triple}

Forthcoming, let $G$ be a \emph{T1P} graph of order at least seven,
such that $G$ is 4-connected and does not have a separating edge.
Intuitively, the following is clear. If $G$ does not decompose by a
 3-cycle $C(a,b,c)$, but by a 3-cycle and a crossable edge $e$ of
 $\edge{a}{b}$ 
Then $e$  is a bridge that connects the  two parts in the interior and the
exterior of $C(a,b,c)$, such that the 4-clique $K(a,x,b,y)$ is represented
as an X-quadrangle in which $\edge{x}{y}$ crosses $\edge{a}{b}$,
as shown in Fig.~\ref{fig:triple}.
This fact has been shown by
 Chen et al.~\cite{cgp-rh4mg-06} (Lemma 7.7). We prove it using \emph{T1P}
embeddings and call it the ``Single Bridge Lemma''.
For the proof, we use the following technique.
Suppose vertex $w$ is a new neighbor of $x$ and $y$,
that is $w \not\in \{a,b,c,x,y\}$. Then at least one of $\edge{w}{x}$ and $\edge{w}{y}$
is a crossable edge of $\edge{a}{b}$. 
 Otherwise, $x$ and $y$ are connected via $w$, such that edge $\edge{x}{y}$ is not a bridge. If $\edge{w}{x}$ is a crossable edge of $\edge{a}{b}$,
then $\edge{a}{b}$ is a separating edge isolating vertex $x$, which is
excluded by assumption, or there is a 5-clique containing $\{a,b,x,y\}$ or $\{a,b,x,w\}$ if
$\edge{w}{x}$ is crossable, which contradicts the maximality of 4-cliques for
crossable edges.
There is a Single Bridge Lemma also
  for unique separating quadrangles, bridge-independent and singular separating triangles, and separating tripods.

Note that $\edge{a}{b}$ is any edge of the 3-cycle with vertices $a,b$ and $c$.

\begin{lemma} \label{lem:single-bridge-lemma}    
If $C(a,b,c)$ is a separating triple, then there is a single bridge
$\edge{x}{y} \in \mathcal{E}\edge{a}{b}$, that is
$G''=G-C(a,b,c)-\edge{x}{y}$ is disconnected if so is
$G'=G-C(a,b,c)-\mathcal{E}\edge{a}{b}$, and there is no edge $e' \neq \edge{x}{y}$
such that $G-C(a,b,c)-e'$ is disconnected. In addition, vertices $x$ and
$y$ are on opposite sides of $C(a,b,c)$, edges $\edge{a}{b}$ and $\edge{x}{y}$ cross,
whereas $\edge{b}{c}$ and $\edge{a}{c}$ are uncrossed,
and $G'$ and $G''$ has two parts that are connected by $\edge{x}{y}$.
\end{lemma}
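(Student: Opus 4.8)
The plan is to argue in two stages. First I would establish that in \emph{any} T1P embedding of $G$, the triple $C(a,b,c)$ has exactly one edge that is crossed, and that crossing edge is an element of $\mathcal{E}\edge{a}{b}$; second I would pin down the combinatorial structure of that crossing (the X-quadrangle, the sides of $x$ and $y$, and the absence of a second separating edge). For the first stage, fix a T1P embedding $\Gamma(G)$. Since $C(a,b,c)$ is a $3$-cycle and the embedding is triangulated, each of the three edges $\edge{a}{b}$, $\edge{b}{c}$, $\edge{a}{c}$ is either uncrossed, or crossed by exactly one edge. If all three were uncrossed, then $C(a,b,c)$ is a separating $3$-cycle of $G$ (the interior and exterior each contain part of $G'$, which is nonempty because $G'=G-C(a,b,c)-\mathcal{E}\edge{a}{b}$ is already disconnected hence $G-C(a,b,c)$ has vertices on two sides once we observe a crossable edge lies across it) — but separating $3$-cycles are searched before separating triples, so this is excluded by the standing assumption. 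Hence at least one edge of $C(a,b,c)$ is crossed.

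\textbf{Localizing the crossing.} Next I would show the crossed edge must be $\edge{a}{b}$ (up to the stated symmetry, $\edge{a}{b}$ is an arbitrary edge of the triple, so this amounts to: exactly one edge is crossed and it is the one whose crossable set disconnects $G$). Suppose $\edge{a}{b}$ is uncrossed; then $C(a,b,c)$ together with $\edge{a}{b}$ forms a closed curve, and I claim $G-C(a,b,c)-\mathcal{E}\edge{a}{b}$ cannot be disconnected. Indeed, if some $\edge{x}{y}\in\mathcal{E}\edge{a}{b}$ has $x,y$ on opposite sides of $C(a,b,c)$, then the $4$-clique $K(a,b,x,y)$ is embedded with $\edge{a}{b}$ uncrossed, forcing $K(a,b,x,y)$ to be a tetrahedron or an X-quadrangle not using $\edge{a}{b}$ as a boundary edge in a way that separates $x$ from $y$ — and in either subcase, running the triangulation/$4$-connectivity argument exactly as in the proof of Lemma~\ref{if-seperatingedge-cross} (look at $C(a,x,b)$ and $C(a,y,b)$; a crossing on one of these edges would create a $5$-clique containing $\{a,b,x,y\}$, contradicting maximality, while no crossing makes one of these $3$-cycles separating, contradicting the absence of separating triples). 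So no crossable edge of $\edge{a}{b}$ straddles $C(a,b,c)$, hence $G-C(a,b,c)-\mathcal{E}\edge{a}{b}$ is connected, a contradiction. Therefore $\edge{a}{b}$ is crossed by a unique edge $\edge{x}{y}$, the X-quadrangle $Q(a,x,b,y)$ appears with $\edge{a}{b}$ and $\edge{x}{y}$ the crossing pair, and $x,y$ sit on opposite sides of $C(a,b,c)$; the remaining edges $\edge{b}{c}$ and $\edge{a}{c}$ are then uncrossed, since a crossing on either would (again by triangulation plus $4$-connectivity, and because there is no separating $3$-cycle) force a $5$-clique on $\{a,b,c\}\cup\{\text{crossing endpoints}\}$ or a second separating structure.

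\textbf{Uniqueness of the bridge and the decomposition.} Finally I would deduce the "Single Bridge" conclusion. The X-quadrangle $Q(a,x,b,y)$ has no vertices in its interior (triangulation), and the two triangles $C(a,x,y)$, $C(b,x,y)$ tile it; removing $\{a,b\}$ and the edge $\edge{x}{y}$ from $G$ — equivalently removing $\{a,b,c\}$ and $\edge{x}{y}$ — severs the only connection between the piece of $G$ inside $C(a,b,c)$ and the piece outside, so $G''=G-C(a,b,c)-\edge{x}{y}$ is disconnected with the two parts joined solely by $\edge{x}{y}$, and likewise $G'$ has exactly these two parts (since $\mathcal{E}\edge{a}{b}$ contains at most this one straddling edge, all others having both endpoints on one side). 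For "no other separating edge $e'$": if $G-C(a,b,c)-e'$ were disconnected for some $e'\neq\edge{x}{y}$, then $e'$ also straddles $C(a,b,c)$ in the embedding, forcing $e'$ to cross an edge of $C(a,b,c)$, but $\edge{b}{c}$ and $\edge{a}{c}$ are uncrossed and $\edge{a}{b}$ is already crossed by $\edge{x}{y}$ — contradiction — unless $e'$ is incident to a vertex of $C(a,b,c)$, which I would rule out by noting such an $e'$ together with the edges of $C(a,b,c)$ cannot separate the two genuine sides given $4$-connectivity. I expect the main obstacle to be the case analysis in the second stage — carefully excluding the uncrossed-$\edge{a}{b}$ possibility and the crossed-$\edge{b}{c}$/$\edge{a}{c}$ possibilities without a separating $3$-cycle or a $5$-clique to lean on — but this is essentially the same machinery already deployed in Lemma~\ref{if-seperatingedge-cross}, adapted from a separating edge to a separating triple, together with the new-neighbor trick stated just before the lemma.
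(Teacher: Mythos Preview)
Your overall strategy matches the paper's: both analyze the possible representations of the 4-clique $K(a,b,x,y)$ for a bridge $\edge{x}{y}$, invoke the new-neighbor/5-clique trick, and fall back on the absence of separating 3-cycles and separating edges. However, there is a genuine gap in your ``Localizing the crossing'' stage.

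You argue: assume $\edge{a}{b}$ is uncrossed; then no $\edge{x}{y}\in\mathcal{E}\edge{a}{b}$ can straddle $C(a,b,c)$; \emph{hence} $G'=G-C(a,b,c)-\mathcal{E}\edge{a}{b}$ is connected. The last inference is invalid. A bridge $\edge{x}{y}$ need not straddle $C(a,b,c)$: both $x$ and $y$ may lie on the \emph{same} side, with $K(a,b,x,y)$ embedded as a tetrahedron (or as an X-quadrangle on the pair $\edge{a}{y},\edge{b}{x}$) entirely on one side of the 3-cycle, and the removal of $\mathcal{E}\edge{a}{b}$ may then disconnect that side. The paper handles exactly this case: assuming $x,y$ on the same side with (say) $x$ the tetrahedron center, it shows that $x$ and $y$ can have no common neighbor $w\notin\{a,b\}$ (else a 5-clique containing $K(a,b,x,y)$, contradicting maximality), whence the triangles $C(a,x,y)$ and $C(b,x,y)$ are empty with uncrossed boundaries; this forces either a separating 3-cycle $C(a,b,z)$ for some $z\in\{x,y,c\}$ if $\edge{a}{b}$ is uncrossed, or else $\edge{a}{b}$ is a separating edge --- both excluded by the standing assumptions. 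Only after this case is eliminated does $Q(a,x,b,y)$ remain as the sole possibility.

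The clean fix is to reorganize as the paper does: do \emph{not} first try to prove ``$\edge{a}{b}$ is crossed'' in isolation, but instead take any bridge $\edge{x}{y}\in\mathcal{E}\edge{a}{b}$ (one exists since $G-C(a,b,c)$ is connected by 4-connectivity while $G'$ is not) and run the full case analysis on the embedding of $K(a,b,x,y)$, including the same-side tetrahedron case. Two smaller slips: you write ``contradicting the absence of separating triples'' where you mean separating 3-cycles (you are inside the proof \emph{for} a separating triple, so you cannot assume their absence); and your worry about $e'$ incident to a vertex of $C(a,b,c)$ is unnecessary, since removing the vertices $a,b,c$ already removes any such $e'$, so $G-C(a,b,c)-e'=G-C(a,b,c)$, which is connected by 4-connectivity.
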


\begin{proof}
Since $G$ is 4-connected, the removal of vertices $a,b,c$ does not
decompose $G$, whereas $G'$ has
 at least two parts.
Let $\edge{x}{y} \in \mathcal{E}\edge{a}{b}$ be a bridge with
vertices $x$ and $y$   in different parts of $G'$. Then there is a
   4-clique $K(a,b,x,y)$.
We claim that $K(a,b,x,y)$ cannot be represented as a tetrahedron in
$\Gamma(G)$ nor as an X-quadrangle $Q(a,x,y,b)$.
In the first case, if $x$ and $y$ are on opposite sides of $C(a,b,c)$,
 then
  edge $\edge{x}{y}$ crosses $\edge{a}{c}$ or $\edge{b}{c}$,
such that $G[a,b,c,x,y]$ is a 5-clique,
contradicting the assumption that $K(a,b,x,y)$ is a maximal 4-clique.
 Thus suppose that $x$ and $y$ are
outside $C(a,b,c)$ and let $x$ be the center of a tetrahedron for
$K(a,b,x,y)$. The case with any other center    is similar.

As before, $x$ and $y$ do not have a common neighbor $w$, since
that implies  a 5-clique containing $K(a,b,x,y)$. Here, $w=c$ is possible.
Hence,  there are no vertices
both in the interior and exterior of the triangles $C(a,x,y)$ and
$C(b,x,y)$, such that the edges in the boundary of these
triangles are uncrossed. Now there is a separating 3-cycle $C(a,b,z)$, where
$z$ is one of $x,y$ or $c$, if edge $\edge{a}{b}$ is uncrossed,
or $\edge{a}{b}$  is a separating edge if it is crossed, as it can also cross
$\edge{x}{y}$.
 In any case, there is a contradiction to the assumption.
The case for an X-quadrangle $Q(a,x,y,b)$ is similar.

Suppose there is another bridge $ \edge{x\rq{}}{y\rq{}}\in
 \mathcal{E}\edge{a}{b}$.  By the reasoning as before,
it crosses in an X-quadrangle $Q(a,x\rq{}, y\rq{},b)$,
which implies $\edge{x}{y}=\edge{x\rq{}}{y\rq{}}$ by
1-planarity. Hence, there is a unique bridge.
Clearly, vertex $c$ cannot be in the interior (exterior) of $C(a,x,b,y)$
if edges $\edge{a}{b}$ and $\edge{x}{y}$ cross in the interior (exterior).
Hence, vertices $x$ and $y$ are on
opposite sides of $C(a,b,c)$. Next, if $\edge{a}{c}$ is crossed
by  an edge $e$, then vertices $x$ and $y$ are connected via $e$
in $G'$ and $G''$, respectively, a contradiction.
Hence, edges $\edge{a}{b}$ and $\edge{a}{c}$ are uncrossed.
Then, $G'$ and $G''$, respectively, have exactly two parts.
\qed
\end{proof}

\begin{lemma} \label{lem:embed-triple}   
 If $C(a,b,c)$ is a separating triple such that  $\edge{a}{b}$ has the
bridge   $\edge{x}{y}$,
 then $\Gamma(G)$ is a T1P
embedding if and only if $\Gamma(G-\edge{x}{y})$ is a T1P embedding
with triangles $C(a,b,x)$ and $C(a,b,y)$.
\end{lemma}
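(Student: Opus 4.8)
\textbf{Proof plan for Lemma~\ref{lem:embed-triple}.}
The plan is to prove the two directions separately, using the structural information already extracted in Lemma~\ref{lem:single-bridge-lemma}, namely that the unique bridge $\edge{x}{y}$ crosses $\edge{a}{b}$ in every \emph{T1P} embedding, that $x$ and $y$ lie on opposite sides of $C(a,b,c)$, and that $\edge{a}{c}$ and $\edge{b}{c}$ are uncrossed. First I would handle the forward direction: given a \emph{T1P} embedding $\Gamma(G)$, by Lemma~\ref{lem:single-bridge-lemma} the $4$-clique $K(a,b,x,y)$ is drawn as an X-quadrangle $Q(a,x,b,y)$ with $\edge{a}{b}$ crossing $\edge{x}{y}$. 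Deleting the edge $\edge{x}{y}$ turns this X-quadrangle into the quadrilateral region bounded by the four uncrossed edges $\edge{a}{x},\edge{x}{b},\edge{b}{y},\edge{y}{a}$ with the uncrossed diagonal $\edge{a}{b}$ inside it, i.e.\ into the two triangular faces $C(a,b,x)$ and $C(a,b,y)$; every other face of $\Gamma(G)$ is untouched, so $\Gamma(G-\edge{x}{y})$ is a \emph{T1P} embedding containing the claimed triangles. (One must also note that $\edge{x}{y}$ is not needed for triangulating anything else, which is exactly because it was a crossed edge inside an X-quadrangle.)

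For the converse, suppose $\Gamma(G-\edge{x}{y})$ is a \emph{T1P} embedding with $C(a,b,x)$ and $C(a,b,y)$ as faces. These two faces share the edge $\edge{a}{b}$, so they form a quadrilateral region with vertices $a,x,b,y$ in this cyclic order and the edge $\edge{a}{b}$ as its diagonal; reinserting $\edge{x}{y}$ inside this region, crossing $\edge{a}{b}$ exactly once, yields an X-quadrangle $Q(a,x,b,y)$ and converts the two triangles into a single triangulated $4$-clique. The resulting drawing $\Gamma(G)$ is then 1-planar (the only new crossing is the one just created) and still triangulated, hence a \emph{T1P} embedding. I would spell out that no vertex of $G$ lies strictly inside the quadrilateral $a,x,b,y$ — this follows because the two triangles $C(a,b,x)$, $C(a,b,y)$ are faces of $\Gamma(G-\edge{x}{y})$, so their union has empty interior apart from the shared edge — which is precisely the condition needed so that the X-quadrangle contains no other vertex.

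The main obstacle, and the point deserving the most care, is establishing that the two triangles $C(a,b,x)$ and $C(a,b,y)$ really occur glued along $\edge{a}{b}$ on \emph{opposite} sides in $\Gamma(G-\edge{x}{y})$, rather than, say, being faces that happen to share only $\edge{a}{b}$ abstractly while $C(a,b,c)$ or some other configuration sits between them. Here I would invoke that in $\Gamma(G-\edge{x}{y})$ the edge $\edge{a}{b}$ borders exactly two faces; if both are given to be $C(a,b,x)$ and $C(a,b,y)$ with $x\neq y$, they are automatically the two sides of $\edge{a}{b}$, so the gluing is forced and the reinsertion of $\edge{x}{y}$ is unambiguous. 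Combined with Lemma~\ref{lem:single-bridge-lemma}, which guarantees the forward direction always produces precisely this picture, the equivalence follows. A short remark that this correspondence is a bijection between the \emph{T1P} embeddings of $G$ and those of $G-\edge{x}{y}$ having these two triangular faces would then set up the label/count bookkeeping used by algorithm $\mathcal{N}$.
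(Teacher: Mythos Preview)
Your proposal is correct and follows essentially the same route as the paper: use Lemma~\ref{lem:single-bridge-lemma} to obtain the X-quadrangle $Q(a,x,b,y)$, delete $\edge{x}{y}$ to get the two triangular faces for the forward direction, and for the converse observe that the two triangles $C(a,b,x)$ and $C(a,b,y)$ force $\edge{a}{b}$ to be uncrossed so that $\edge{x}{y}$ can be reinserted across it. Your version is more explicit than the paper's (particularly on why the two triangles sit on opposite sides of $\edge{a}{b}$ and why the quadrilateral interior is empty), but the argument is the same.
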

\begin{proof}
There is an X-quadrangle $Q(a,x,b,y)$ by Lemma
\ref{lem:single-bridge-lemma}. Then there are triangles $C(a,x,b)$
and $C(a,y,b)$ if $\edge{x}{y}$ is removed from  $\Gamma(G)$. Thus
$\Gamma(G-\edge{x}{y})$ is a \emph{T1P} embedding.
Conversely, if a \emph{T1P} embedding $\Gamma(G-\edge{x}{y})$ has triangles
$C(a,b,x)$ and $C(a,b,y)$, then edge $\edge{a}{b}$ is uncrossed in
$\Gamma(G-\edge{x}{y})$. Now edge $\edge{x}{y}$ can be inserted
into the embedding such that it
crosses $\edge{a}{b}$, which is a \emph{T1P} embedding of
$G$.
\qed
\end{proof}

Hence, there is a clear decision for a separating triple. If
$\edge{x}{y}$ is the bridge of the separating triple $C(a,b,c)$,
which is found in $G'=G-C(a,b,c)-\mathcal{E}\edge{a}{b}$,
then algorithm $\mathcal{N}$ marks the edges of $C(a,b,c)$ and
 $\edge{a}{x}, \edge{a}{y}, \edge{b}{x}$ and $\edge{b}{y}$ if
they are unmarked before. Then edge $\edge{x}{y}$ is removed and
$\mathcal{N}$ proceeds on $G-\edge{x}{y}$, where it discovers the separating
3-cycle $C(a,b,c)$ at the next step.\\

Chen et al.\cite{cgp-rh4mg-06} exclude 5-cliques from generalized
separators and study their representation only in 5-connected
graphs. Then 5-cliques are either treated as part of a small graph
or there are three $K_5$ that mutually share at least three
vertices. In   case of small graphs, an embedding is computed by
exhaustive search, such that the number of embeddings is not
immediately clear. We proceed in a different way and show that
5-cliques are classified by the degree of the inner vertices.

\begin{figure}[t]   
  \centering
  \subfigure[] {
     \includegraphics[scale=0.40]{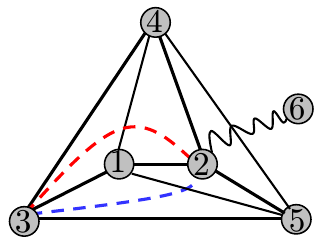}  
    \label{fig:divedge1}
  }
  \hfil
  \subfigure[] {
      \includegraphics[scale=0.4]{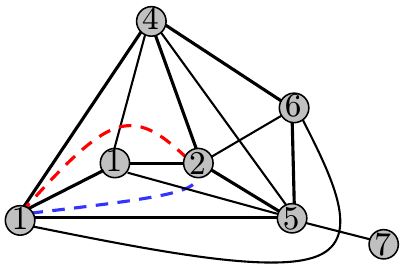}
      \label{fig:divedge2}
  }
  \caption{Illustration for the proof of Lemma~\ref{lem:create-h2S}
  with a vertex of degree four in a 5-clique.
  (a)   Edge $\edge{v_2}{v_3}$ can be routed such that it crosses
  $\edge{v_1}{v_4}$ or  $\edge{v_1}{v_5}$. (b) Edge $\edge{v_2}{v_6}$
  of the 5-clique is crossed by 4-connectivity.
  }
  \label{fig:h2S-create}
\end{figure}

\begin{lemma} \label{lem:create-h2S}
If $G$ has a $K_5$-triple and no smaller separators,
then it is a generalized handle two-star, that is $G=hG2S_k$.
\end{lemma}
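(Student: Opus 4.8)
The plan is to fix a \emph{T1P} embedding $\Gamma$ of $G$, pin down the faces of $\Gamma$ around the degree-four vertex of the $K_5$, and then grow that local picture vertex by vertex along a path until it closes up into a handle two-star.

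Write the $K_5$ as $K(v,a,b,c,d)$ with $\deg_G(v)=4$, so $N(v)=\{a,b,c,d\}$ and $G[\{a,b,c,d\}]=K_4$. I would first analyse the face units of $\Gamma$ incident to $v$. Since $\deg_G(v)=4$, at most two of them are X-quadrangles, and (by a short argument analogous to the proof of Lemma~\ref{lem:single-bridge-lemma}) at most one edge at $v$ is crossed, since two X-quadrangles at $v$ would force one of its edges to be crossed twice. If no edge at $v$ is crossed, $v$ is the hub of a wheel with rim an induced $4$-cycle on $\{a,b,c,d\}$; its two chords must be drawn outside the rim, and a short case distinction -- either the enclosed triangle is a face, forcing a degree-three vertex (excluded after the preprocessing), or it has a nonempty interior, making a triangle on three neighbours of $v$ a separating $3$-cycle -- contradicts the hypotheses together with $n\ge 7$. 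Hence exactly one edge at $v$, say $\edge{v}{a}$, is crossed, by some edge $\edge{b}{c}$; after relabelling, $v$ lies inside a $4$-cycle $C(b,a,c,d)$ whose interior consists of the X-quadrangle $Q(v,b,a,c)$ (with $\edge{v}{a}$ crossing $\edge{b}{c}$) together with the triangular faces $C(v,c,d)$ and $C(v,d,b)$, while the diagonal $\edge{a}{d}$ of $C(b,a,c,d)$ is drawn outside.

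Next I would show $\edge{a}{d}$ is crossed and identify $c$ and $d$ as the poles. The edges $\edge{a}{b}$, $\edge{a}{c}$, $\edge{b}{d}$, $\edge{c}{d}$ are uncrossed, as each borders one of $v$'s face units on one side, so any edge crossing it would have an endpoint inside a triangle or an X-quadrangle. The diagonal $\edge{a}{d}$ splits the region outside $C(b,a,c,d)$ into disks bounded by $C(a,c,d)$ and $C(a,b,d)$; were $\edge{a}{d}$ uncrossed these disks would be sealed by $\{a,c,d\}$, respectively $\{a,b,d\}$, so either both are empty faces (hence $n=5$) or one of $C(a,c,d)$, $C(a,b,d)$ is a separating $3$-cycle -- both impossible. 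So $\edge{a}{d}$ is crossed by an edge $\edge{s}{t}$ with $G[\{a,d,s,t\}]=K_4$; say $s$ lies on the side of the disk $C(a,b,d)$. If $s\neq b$, then $s$ is interior to that disk; $\{a,d,s,t\}$ cannot be a maximal $4$-clique (otherwise $C(a,b,d)$ together with $\mathcal{E}\edge{a}{d}$ is a separating triple), but any fifth vertex of a $5$-clique containing $\{a,d,s,t\}$ would have to be adjacent both to a vertex strictly inside $C(a,b,d)$ and to one strictly inside $C(a,c,d)$ -- impossible, since the only edge joining the two disks is $\edge{a}{d}$, already crossed. Hence $s=b$, so $\{a,d,b,t\}$ is a $4$-clique for a new vertex $t$ interior to the disk $C(a,c,d)$, and the same reasoning applied to $C(a,c,d)$ forces this $4$-clique to be non-maximal with completing fifth vertex $c$. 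Thus $K(v,a,b,c,d)$ and $K(a,b,t,c,d)$ are $5$-cliques sharing the $4$-clique on $\{a,b,c,d\}$; renaming $v=:v_1$, $b=:v_2$, $a=:v_3$, $t=:v_4$, $p:=c$, $q:=d$, we have recovered the end of a generalized two-star -- $\edge{p}{q}$ is the handle (an edge of the first $5$-clique), $C(v_1,p,q)$ is a face, $\edge{v_2}{v_4}$ is an arch crossing $\edge{v_3}{q}$, both poles are adjacent to $v_1,\dots,v_4$, and a $4$-cycle of the same shape, $C(v_3,p,q,v_4)$, now bounds the part of $G$ still inside the old disk $C(a,c,d)$.

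Finally I would iterate. The invariant after $i$ steps: chain vertices $v_1,\dots,v_{i+2}$ and poles $p,q$ have been identified, $p$ and $q$ are adjacent to all of them, the handle $\edge{p}{q}$ is present, the faces around $v_1,\dots,v_i$ are completely determined and agree with $hG2S_k$, and a $4$-cycle through the two consecutive chain vertices $v_{i+1},v_{i+2}$ and the poles $p,q$ (with the chain edge $\edge{v_{i+1}}{v_{i+2}}$ and the handle among its four edges) bounds the remaining undescribed region. Re-running the analysis of the previous paragraph at $v_{i+1}$ inside this $4$-cycle -- using the absence of a separating $3$-cycle, a separating edge and a separating triple, together with $4$-connectivity, to forbid any branching of the chain and any stray chord, and using that each newly produced $4$-clique completes to a $5$-clique through a pole -- yields either one more chain vertex and a shifted frontier, or, once the remaining region reduces to a single triangle $C(v_k,p,q)$, a terminal degree-four vertex $v_k$. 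As $G$ is finite the process stops, and collecting the recorded adjacencies shows $G=hG2S_k$ for the resulting $k$; the converse direction is immediate from the $5$-clique $K(v_1,v_2,v_3,p,q)$. The main obstacle is precisely this iteration: the case analysis at each extension step (which edge is crossed by which, whether the two side disks created by the outer diagonal are empty, that no extra edge among already-placed vertices is forced) must be carried through carefully enough to guarantee that the structure neither branches nor acquires extra edges, and that $p$ and $q$ stay adjacent to every chain vertex all the way to $v_k$.
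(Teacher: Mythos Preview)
Your proof is correct and follows essentially the same approach as the paper: both fix a \emph{T1P} embedding, analyse the local structure at the degree-four vertex, use the absence of separating triples to force each successive crossing edge into a new $5$-clique overlapping the previous one in a $K_4$, and iterate to build the chain of $hG2S_k$. Your bookkeeping differs slightly---you track a $4$-cycle frontier containing the handle $\edge{p}{q}$ and identify the poles early as the two neighbours of $v$ spanning a triangular face with it, whereas the paper works outward from the outer triangle $C(u_3,u_4,u_5)$ of the $K_5$ and invokes four vertex-disjoint paths from $u_1$ to locate the crossed outer side---but the substance of each inductive step is the same separating-triple argument.
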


\begin{proof}
Let $G[U]$ with $U= \{u_1,\ldots, u_5\}$ be a 5-clique  such that
$u_1$ has degree four and there is no smaller
generalized separator, in particular, no separating triple. Vertex $u_1$ is isolated from the remainder
if three vertices $u_2, u_3$ and $u_4$ and edge $\edge{u_1}{u_5}$
are removed. In other words, there are  3-cycles and an edge in a
5-clique whose removal decomposes $G$.  Vertex $u_1$ is the corner
of two triangles and an X-quadrangle in $\Gamma(G)$, since a vertex
$v \not\in U$ in a triangle with  $u_1$ in its boundary would
increase its degree
  by the triangulation. Let $u_6 \not\in U$ be another vertex of $G$, which
 is in a triangle of $\Gamma(G[U])$ that is chosen as the outer face of
$\Gamma(G[U])$, which is as shown in Fig.~\ref{fig:K5variation}. Then $u_1$ is an inner vertex of $\Gamma(G[U])$. Let $u_2$
be the other inner vertex such that $C(u_3,u_4, u_5)$ is the outer
boundary of $\Gamma(G[U])$.

By 4-connectivity, there are four vertex disjoint paths between
$u_1$ and $u_6$. Hence, there is a path $\pi_{i-1}$ from $u_1$ to
$u_6$ that passes through $u_i$ and not through $u_j$ for $j \neq i$
and $i=2,\ldots,5$. Now $\pi_2$   crosses an outer side of
$\Gamma(G[U])$, for example $\edge{u_4}{u_5}$. Let $\edge{x}{y}$ be the
 edge crossing $\edge{u_4}{u_5}$. Since separating triples are excluded,
$x$ and $y$ are in a 5-clique
$G[u_4,u_5,x,y,z]$ for some vertex $z$. Since there is no $K_6$, we
have $y=u_6$ and $x=u_2$. Thus there are 5-cliques $G[u_1,\ldots,
u_5]$ and $G[u_2,\ldots,u_6]$, such that   $\edge{u_4}{u_5}$ crosses
$\edge{u_2}{u_6}$, see Fig.~\ref{fig:h2S-create}. Consider another vertex $u_7$ of $G$. Since it is
not in a face with $u_1$ in its boundary and edge $\edge{u_2}{u_6}$
is crossed, $u_7$ is outside  $C(u_3,u_4,u_6,u_5)$. If $G$ has order seven, then
there are edges $\edge{u_7}{u_i}$ for $i=3,4,5,6$ by 4-connectivity,
such that $G=hG2S_5$. Otherwise, we proceed by induction. By the
absence of separating triples and smaller separators, vertex
$u_{i+1}$ for $i \geq 7$ is in a 5-clique with vertices $p,q,
u_{i-1}, u_i, u_{i+1}$, where $p=u_3$ and $q =u_4$ or $q=u_5$.
Vertices $p$ and $q$ are distinguished by their degree, which is at
least seven, whereas all other vertices have degree at most six.
Then edge $\edge{u_{i-1}}{u_{i+1}}$ crosses either $\edge{p}{u_i}$
or $\edge{q}{u_i}$.   Thus $G=hG2S_k$ with vertices $p$ and $q$ and
vertex $u_k$ of degree four in the outer boundary of $\Gamma(G)$.
\qed
\end{proof}

\begin{figure}[t]   
  \centering
  \subfigure[] {
     \includegraphics[scale=0.5]{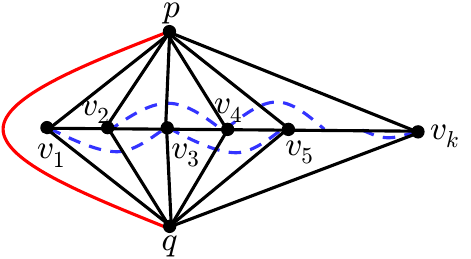}  
    \label{fig:hG2Sa}
  }
  \hfil
  \subfigure[] {
      \includegraphics[scale=0.5]{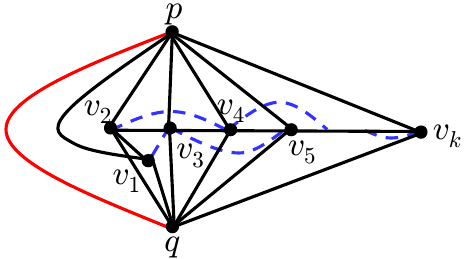}    
      \label{fig:hG2Sb}
  }
  \caption{Illustration for the proof of Lemma~\ref{lem:embed-h2S}, showing two embeddings
for $hG2S_k$. Two more embeddings are obtained if  $p$ and $q$ swap.
  }
  \label{fig:h2S-embeddings}
\end{figure}

\begin{lemma} \label{lem:embed-h2S}    
A  generalized handle two-star $hG2S_k$ with poles $p$ and $q$ and
vertices $v_1, \ldots, v_k$ for $k\geq 5$  has four, two or one  T1P
embeddings or there is an error if every marked edge is uncrossed. There is a single embedding if the
edges from one of the 3-cycles $C(p,v_1, v_2)$, $C(q, v_1, v_2)$,
$C(p,v_1, v_3)$ or $C(q, v_1, v_3)$ are marked, there are two embeddings if the edges of $C(p, q,v_1)$
are marked, and there are four embeddings, if none of these edges is
marked. There is an error, otherwise, for example, if edge
$\edge{p}{v_i}$ or $\edge{q}{v_i}$ for $i=4,\ldots, n-1$ is marked.
\end{lemma}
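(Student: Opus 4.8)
The plan is to work from the normal form supplied by Lemma~\ref{lem:create-h2S}: $G=hG2S_k$ is the union of the $5$-cliques $G[\{v_i,v_{i+1},v_{i+2},p,q\}]$ strung along the spine $v_1,\dots,v_k$, with $v_1$ and $v_k$ of degree four, and we may assume that $G$ has no separating edge and no separating triple since a $K_5$-triple is sought only afterwards. I would enumerate the T1P embeddings of $G$ by anchoring at the degree-four vertex $v_1$. Since $N(v_1)=\{p,q,v_2,v_3\}$ and $G[\{v_1,v_2,v_3,p,q\}]$ is a $5$-clique, in any T1P embedding $\Gamma(G)$ the link of $v_1$ is a $4$-cycle on $\{p,q,v_2,v_3\}$ and $\Gamma$ restricted to this $5$-clique is one of the $1$-planar embeddings of $K_5$ (Figs.~\ref{fig:allK5} and~\ref{fig:K5variation}). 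Using that the next vertex $v_4$ must lie outside that $5$-clique---the argument exploiting the absence of a separating triple and of $K_6$, exactly as in Lemma~\ref{lem:create-h2S}---one checks that at most two of these embeddings survive up to the interchange of the two poles: the spine leaves the neighbourhood of $v_1$ through the triangle $C(p,q,v_2)$ or through $C(p,q,v_3)$, and the remaining choices at $v_1$ are then determined.

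Next I would propagate along the spine, mirroring the induction of Lemma~\ref{lem:create-h2S}. For $i\ge 2$ the vertices $p,q,v_{i-1},v_i,v_{i+1}$ form a $5$-clique, so the arch $\edge{v_{i-1}}{v_{i+1}}$ crosses exactly one of the spokes $\edge{p}{v_i},\edge{q}{v_i}$ and all remaining vertices lie outside that clique. The point to establish carefully is that the choice of which spoke is crossed at step $i$ is \emph{forced} by the embedding of the preceding clique: consecutive X-quadrangles share the uncrossed spine edge $\edge{v_{i-1}}{v_i}$, the two spokes at $v_{i-1}$ are already committed, and so the faces around $v_i$ can close up in only one way; moreover the handle $\edge{p}{q}$ can be routed uncrossed only through the gap left at the degree-four end, which couples the two ends of the spine. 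Hence a T1P embedding of $G$ is determined by the labelling of the poles together with the near-end configuration of the previous paragraph, so there are at most four; displaying the four drawings---the two shapes of Fig.~\ref{fig:h2S-embeddings}, each with $p$ and $q$ swapped---shows there are exactly four when no relevant edge is marked.

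It remains to read off the influence of the marked edges, all of which are uncrossed in the embeddings that are counted. Marking all three edges of one of $C(p,v_1,v_2)$, $C(q,v_1,v_2)$, $C(p,v_1,v_3)$, $C(q,v_1,v_3)$ forces that triangle to be a face, which pins down the near-end configuration and the roles of the poles, leaving a single embedding. Marking the three edges of $C(p,q,v_1)$ forces $C(p,q,v_1)$ to be a face, which removes one of the two binary degrees of freedom but not the other, leaving two. If instead a marked edge is one that is crossed in every one of the four embeddings---an arch, a spoke $\edge{p}{v_i}$ or $\edge{q}{v_i}$ with $i\ge 4$, or an edge of a $3$-cycle that is never a face---then no T1P embedding is compatible and $\mathcal{N}$ reports an error; and if none of these effects is triggered, all four embeddings remain. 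I expect the main obstacle to be the forced-propagation claim of the second paragraph: one must exclude a T1P embedding in which consecutive arches cross spokes ``on opposite sides'', which would a priori let the number of embeddings grow with $k$. This should follow from a short local face analysis at each $v_i$ using triangulation and $1$-planarity, in the spirit of the $K_5$-placement arguments in the proofs of Lemmas~\ref{lem:single-bridge-lemma} and~\ref{lem:create-h2S}.
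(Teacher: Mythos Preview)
Your plan is essentially the paper's own argument: anchor at the degree-four end vertex inside the first $5$-clique, observe that the chain of $5$-cliques $G[p,q,v_{i-1},v_i,v_{i+1}]$ forces each arch $\edge{v_{i-1}}{v_{i+1}}$ to cross exactly one spoke, and then argue that the crossing pattern propagates rigidly so that only a $2\times 2$ choice remains (pole swap and a binary near-end configuration), matching Fig.~\ref{fig:h2S-embeddings}. The paper orders the analysis slightly differently---it first fixes the alternation using the three cliques on $v_1,\dots,v_5$ and only then places $v_1$ in one of two triangles---but this is cosmetic; your identification of the forced-propagation step as the crux is exactly right, and the paper treats it at the same level of sketch you propose.
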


\begin{proof}
If $k \geq 6$, then the degree of the vertices determines the poles
$p$ and $q$, which have degree at least seven. There are two
vertices of degree four, namely $v_1$ and $v_k$, vertices $v_2$ and
$v_{k-1}$ have degree five and are neighbors of $v_1$ and $v_k$,
respectively. The remaining vertices $v_j$ for $j=3,\ldots, k-2$
have degree six. Their order is determined inductively from $v_1$ to
$v_k$. Otherwise, check for an $hG2S_k$ by inspection.

First, suppose there are no restrictions for uncrossed edges, that is there are
no marked edges. There
are three 5-cliques $G[p,q,v_1,v_2,v_3], G[p,q,v_2,v_3,v_4]$
and $G[p,q,v_3,v_4,v_5]$. Then edges $\edge{v_2}{v_4}$ and
$\edge{v_3}{v_5}$ are crossed such that $\edge{v_2}{v_4}$ crosses
$\edge{p}{v_3}$ if  $\edge{v_3}{v_5}$ crosses $\edge{q}{v_4}$, and
vice versa.
 If $\edge{v_2}{v_4}$ crosses $\edge{p}{v_3}$, then
 vertex $v_1$ can be placed inside $C(p,q,v_2)$, as shown in Fig.~\ref{fig:hG2Sa}, such that
 $\edge{v_1}{v_3}$ is crossed and it crosses $\edge{q}{v_2}$,
or $v_1$ is inside $C(q,v_2, v_3)$ such that
$\edge{p}{v_1}$ and $\edge{q}{v_2}$ cross,   as shown in Fig.~\ref{fig:hG2Sb}.
We now proceed by induction and incrementally  add vertex $v_{i+1}$
from the 5-clique $G[p,q,v_{i-1}, v_i, v_{i+1}]$ in the outer face
of $C(p,q, v_i)$, such that $\edge{v_{i-1}}{v_{i+1}}$ crosses
$\edge{p}{v_i}$ if $\edge{v_{i-2}}{v_{i}}$ crosses
$\edge{q}{v_{i-1}}$, and $\edge{v_{i-1}}{v_{i+1}}$ crosses
$\edge{q}{v_i}$ if $\edge{v_{i-2}}{v_{i}}$ crosses
$\edge{p}{v_{i-1}}$ for $i=3,\ldots, n-1$. Thereby, we obtain four
\emph{T1P} embeddings. The first three 5-cliques do not allow
another
  \emph{T1P} embedding. By induction, there is a unique
extension of a \emph{T1P} embedding of $G[p,q, v_i,\dots, v_i]$ to a
\emph{T1P} embedding of $G[p,q, v_i,\dots, v_{i+1}]$.

 If
one of the edges $\edge{p}{v_i}, \edge{q}{v_i}, i=1,\ldots, k-1$ or
$\edge{v_1}{v_3}$ is marked, such that it is  uncrossed in the
embedding, then just two \emph{T1P} embeddings remain.
There is a unique \emph{T1P} embedding if $\edge{v_1}{v_3}$  and one
of $\edge{p}{v_i}, \edge{q}{v_i}, i=1,\ldots, k-1$ are uncrossed.
Otherwise, there is an error, for example, if $\edge{p}{v_i}$ and
$\edge{q}{v_i}$ are marked and shall be uncrossed. Note that, in total, there are
six triangles in the possible \emph{T1P} embeddings of $h2S_k$, and
that the edges of a triangle are marked if there is a separating
3-cycle (at an earlier step).
\qed
\end{proof}

In general, a generalized handle two-star $G2S_k$ occurs as a component after using
a separating 3-cycle.
If there is a
$K_5$-triple, then
algorithm $\mathcal{N}$ may immediately check  whether or not
  $G=hG2S_k$ by using the degree of the vertices if $n \geq 7$.
If so, it removes all edges
$\edge{v_{i-1}}{v_{i+1}}$
  towards planarity  and sets
  $\lambda(e)=t$ for a single edge of $h2S_k$ if it has $t$
embeddings, where $t=1, 2$ or $t=4$. The other variations of generalized handle two-stars
occur after using a particular generalized separator that   removes $\edge{v_1}{v_k}$
or $\edge{v_1}{v_{k-1}}$.

\subsection{Separating 4-Cycles} \label{sect:4cycle}

Next we search for 4-separators. By
Proposition~\ref{prop:all-separators}, a 4-separator with four
vertices induces a separating 4-cycle $C$ in a \emph{T1P} graph  if
there are no smaller separators, in particular, if there is no separating
triple. The edges of $C$ are marked, which means that they are
uncrossed,  and the set with pairs
$(e, \mathcal{E}(e))$ is updated. For each component, a marked chord
is added for the triangulation. It can be chosen
such that it does not help to create a 5-clique,   since
a 5-clique $K(a,b,c, x,y)$ excludes a 5-clique with vertices $a,c,d$
by 1-planarity.

\subsection{Separating  Quadruples} \label{sect:quadruple}

From now on, we assume that graphs are 5-connected \emph{T1P} graphs   of order at
least seven. Then there are no smaller separators, that is no
 separating edges,  separating triples  or separating 4-cycles.

As an extension from separating triples, the following is obvious:
If $C(a,b,c,d)$ is a 4-cycle such that
edge $\edge{a}{b}$ is crossed by an edge $\edge{x}{y}$ and the other
edges $\edge{a}{d}, \edge{b}{c}$ and $\edge{c}{d}$ are uncrossed
in a \emph{T1P} embedding $\Gamma(G)$, and there is no 5-clique containing
vertices $a,b,x,y$,
then $C(a,b,c,d)$  is a separating quadruple.
Chen et al.~\cite{cgp-rh4mg-06} state this as Fact 7.8.
For the converse, they write:
 "So we can modify the proof of Lemma 7.7
(on separating triples)  to prove the following: (Lemma 7.9)
  \emph{If $C(a,b,c,d)$ is a separating quadruple, that is, $G'=
G-C(a,b,c,d)-\mathcal{E}\edge{a}{b}$ is disconnected, then  $G'$ has exactly
two connected parts,  exactly one bridge $\edge{x}{y}$ connects the parts,
and edges   $\edge{a}{b}$
and  $\edge{x}{y}$ cross, such that there is an X-quadrangle $Q(a,x,b,y)$."}

However, that's wrong!
Fig.~\ref{fig:quad-D} shows that three bridges are
possible and Figs.~\ref{fig:quad-many-bridges}, \ref{fig:quad-two-bridges}
and \ref{fig:quadrangles}
illustrate that an edge is not necessarily crossed by its bridge.
Edge $\edge{x}{y}$ is the bridge of any edge in $C(a,b,c,d)$ for the
graph in Fig.~\ref{fig:quad-C}, but it is uncrossed in every 1-planar
embedding. The problems are caused by the shown graphs, as we shall prove.
Second, there are ambiguous separating quadruples, as shown in
Fig.~\ref{fig:quad-ambig}, in which   bridge $\edge{x}{y}$
can choose to cross one of $\edge{a}{b}$ or $\edge{b}{c}$. There is no chord $e=\edge{a}{c}$, such that $G[a,b,c,x,y]$ is $K_5$-$e$.
Then the rules by Chen et al. can be applied
such that their algorithm $\mathcal{A}$ reports a failure for
generalized circle two-star graphs $cG2S_k$   although
the  graphs are \emph{T1P} graphs. All in all,  algorithm $\mathcal{A}$
is false positive.\\

  We consider separating quadruples in more detail, specialize them, and  distinguish
unique and ambiguous ones.
Let $C=C(v_1,v_2,v_3,v_4)$ be a 4-cycle with edges $e_i = \edge{v_i}{v_{i+1}}$ with
$v_5=v_1$ for $i=1,2,3,4$, such that $v_1$ is the first vertex     and $e_i$ is the $i$-th edge.
Let $G'_i=G-C-\mathcal{E}(e_i)$, that is $G'_i$ is the decomposition
of $G$ by a separating quadruple with the crossable edges of edge $e_i$.
Let $\ell_i$ denote the number of bridges of $G'_i$. For $j\neq i$ let $\ell_j$ denote the
number of bridges of $G'_i$ that are also a bridge of $e_j$, that is, there is
a 4-clique $K(v_j, v_{j+1}, x,y)$ for vertices $x,y$ not in $C$.

For $i=1,2,3,4$, the $i$-th \emph{bridge pattern} $\beta_i(C)$ is a
4-tuple $(\ell_1, \ell_2, \ell_3, \ell_4)$, such that $\ell_j$ is the number
of bridges of $e_j$ in $G'_i$ if $G'_i$ is disconnected. Otherwise,
if $G'_i$ is connected, set $\beta_i(C)=($-,-,-,-$)$.
Then $C$ is a separating 4-cycle if and only if  $\beta_i(C)=(0,0,0,0)$. It is
a unique separating quadruple if    $\beta_1(C)=(1,0,0,0)$ and
$\beta_j(C)=($-,-,-,-$)$ for $j=2,3,4$,
and an ambiguous separating quadruple if    $\beta_1(C)=\beta_2(C)=(1,1,0,0)$
and $\beta_3(C)=\beta_4(C)=($-,-,-,-$)$.
Even three bridges are possible, such that $G'_1 = G-C-\mathcal{E}(e_1)$ has parts
$x$, $y$, and a remainder (outside $C$), as shown in Figs.~\ref{fig:quad-D}
and \ref{fig:quad-2-bridges}.

\begin{figure}[t]    
  \centering
\subfigure[$H_7$] {
     \includegraphics[scale=0.5]{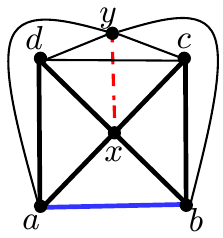}  
    \label{fig:quad-A}
  }
  \hspace{2mm}
  \subfigure[] {
     \includegraphics[scale=0.45]{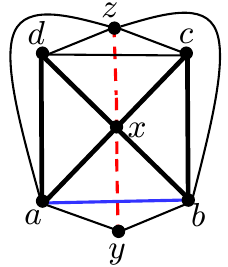}  
    \label{fig:quad-B}
  }
  \subfigure[] {
      \includegraphics[scale=0.45]{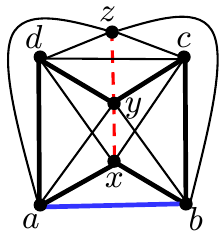} 
      \label{fig:quad-C}
  }
  \subfigure[] {
      \includegraphics[scale=0.45]{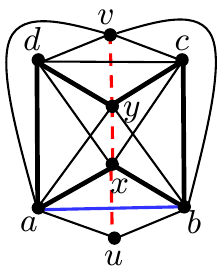} 
      \label{fig:quad-D}
  }
  \subfigure[] {
      \includegraphics[scale=0.45]{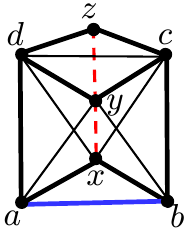} 
      \label{fig:quad-E}
  }
  \subfigure[] {
      \includegraphics[scale=0.45]{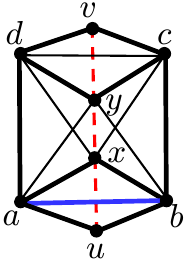} 
      \label{fig:quad-F}
  }
  \caption{
  A separating quadruple $C=C(a,b,c,d)$ with first edge $\edge{a}{b}$, where
 \newline (a) $\edge{x}{y}$ (drawn dashed and red) is a   bridge of  each edge of $C$,
and it is the only bridge, such that
  $\beta_i(C)=(1,1,1,1)$ for $i=1,2,3,4$.
\newline (b)
The first edge $\edge{a}{b}$ has two bridges and no other edge
has a bridge, such that $\beta_1(C)=(2,0,0,0)$ and $\beta_i(C)=($-,-,-,-$)$, otherwise.
\newline (c)
Edges $\edge{x}{y}$ and $\edge{y}{z}$ are bridges of $\edge{a}{b}$ and $\edge{c}{d}$
and $\edge{x}{y}$ is a bridge of $\edge{b}{c}$ and $\edge{a}{d}$, such that
$\beta_1(C)=\beta_3(C)=(2,0,2,0)$ and   $\beta_2(C)=\beta_2(C)=(1,1,1,1)$.
Graph $G'_1$ has three parts, namely vertices $x$ and $y$ and a remainder
outside $C(a,b,c,d)$,
\newline (d)
Edge $\edge{a}{b}$ has three bridges, and  $\edge{x}{y}, \edge{y}{v}$ are bridges of
$\edge{c}{d}$, such that $\beta_1(C)=(3,0,2,0), \beta_3(C)=(2,0,2,0)$ and
$\beta_2(C)=\beta_4(C)=($-,-,-,-$)$.
\newline (e) Edge $\edge{x}{y}$ is a bridge of  each edge of $C$
  and $\edge{c}{d}$ has two bridges, such that
$\beta_3(C)=(1,1,2,1), \beta_j(C)=(1,1,1,1)$ for $j=1,2,4$.
(f) Edge  $\edge{x}{y}$ is crossable for  each edge of $C$, but only
$\edge{a}{b}$ and $\edge{c}{d}$ have bridges, such that  $\beta_1(C)=(2,0,1,0)$
and $\beta_3(C)=(1,0,2,0)$.
}
  \label{fig:quad-many-bridges}
\end{figure}

\begin{figure}[t]    
  \centering
  \subfigure[] {
 \includegraphics[scale=0.5]{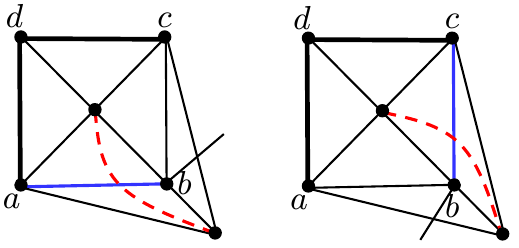} 
      \label{fig:quad-ambig}
  }
  \hfil
  \subfigure[] {
 \includegraphics[scale=0.5]{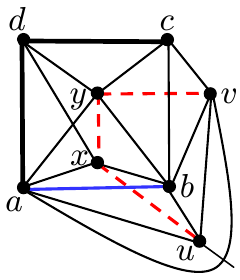}  
    \label{fig:quad-2-bridges}
  }
  \caption{  (a) An ambiguous   separating quadruple $C(a,b,c,d)$
  with a single bridge for $\edge{a}{b}$ or $\edge{b}{c}$.
There is no $K_5$, since the chords $\edge{a}{c}$ and $\edge{b}{d}$
are missing.
\newline (b)
 edge $\edge{a}{b}$ has
 three bridges, drawn dashed and red, which are each a crossable edge of another edge of $C$, such that
$\beta_1(C)=(3,2,0,0)$.  However, $\beta_i(C)=($-,-,-,-$)$, for $i=2,3,4$.
  }
  \label{fig:quad-two-bridges}
\end{figure}

\begin{figure}[t]  
  \centering
     \includegraphics[scale=0.5]{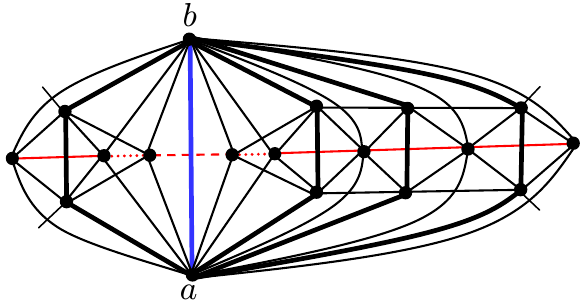}  
  \caption{Separating quadruples  with edge $\edge{a}{b}$ drawn
  bold and blue and crossable edges in $\mathcal{E}\edge{a}{b}$ drawn red.
  If the dashed edge is missing, then $\edge{a}{b}$ is not crossed
  by any of its bridges. Each bridge of $\edge{a}{b}$, drawn red and solid,
is also a crossable edge  of the opposite side of a 4-cycle.
Uncrossed crossable edges of $\edge{a}{b}$ are drawn dotted and red.
  }
  \label{fig:quadrangles}
\end{figure}

\begin{lemma} \label{lem:single-bridge-lemma-quad}    
 If $C=C(a,b,c,d)$ is a  unique separating quadruple,  that is $\beta_1(C)=(1,0,0,0)$, then
edge $\edge{a}{b}$  and its bridge $\edge{x}{y}$ cross, such
 that is there is an X-quadrangle $Q(a,x,b,y)$, whereas the other edges of $C(a,b,c,d)$ are uncrossed.
\end{lemma}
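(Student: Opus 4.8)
The plan is to mirror the structure of the Single Bridge Lemma (Lemma~\ref{lem:single-bridge-lemma}) for separating triples, adapting each case to the 4-cycle setting. Since $G$ is now $5$-connected and has no separating edge, no separating triple, and no separating 4-cycle, we have strong leverage: removing any four vertices cannot disconnect $G$, and removing a 4-cycle alone cannot disconnect $G$. Let $\edge{x}{y}\in\mathcal{E}\edge{a}{b}$ be the unique bridge guaranteed by $\beta_1(C)=(1,0,0,0)$, so there is a maximal 4-clique $K(a,b,x,y)$ and $x,y$ lie in different parts of $G'=G-C-\edge{x}{y}$. First I would fix a \emph{T1P} embedding $\Gamma(G)$ and argue that $C(a,b,c,d)$ separates the plane into an interior and an exterior, one part of $G'$ lying strictly inside and the other strictly outside, with neither $c$ nor $d$ being on the bridge; this uses that $C$ is induced (no chord, since otherwise a separating triple would exist) and that removing $C$'s vertices plus the bridge disconnects $G$.

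Next I would run the same case analysis on how $K(a,b,x,y)$ is realized in $\Gamma(G)$: either as a tetrahedron or as one of the three X-quadrangles $Q(a,x,b,y)$, $Q(a,b,x,y)$ (i.e.\ with $\edge{a}{x}$ or $\edge{a}{y}$ crossed). The goal is to exclude everything except $Q(a,x,b,y)$, where $\edge{a}{b}$ crosses $\edge{x}{y}$. For the tetrahedron case, whichever of $a,b,x,y$ is the center, the boundary 3-cycles are uncrossed (any crossing edge would, by the ``new common neighbor'' argument recalled just before Lemma~\ref{lem:single-bridge-lemma}, either force a 5-clique containing $K(a,b,x,y)$ — contradicting maximality — or connect $x$ to $y$ in $G'$, contradicting that $\edge{x}{y}$ is a bridge). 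With all these boundary edges uncrossed, $C$ together with part of the tetrahedron exhibits a smaller separator: one of $C(a,b,x)$, $C(a,b,y)$, $C(b,c,\cdot)$, or a separating 3-cycle / separating edge isolating a vertex of the part not containing the center — each of which is forbidden by the standing assumptions. For the X-quadrangles $Q(a,b,x,y)$ with $\edge{a}{x}$ crossed by $\edge{b}{y}$ (and symmetrically the other one), I would show $\edge{x}{y}$ then lies on the quadrangle boundary uncrossed and again $x$, $y$ end up connected through the quadrangle's triangulation or a chord, contradicting the bridge property; alternatively the resulting configuration forces a 5-clique on $\{a,b,x,y\}$.

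Having forced $Q(a,x,b,y)$ with $\edge{a}{b}$ crossing $\edge{x}{y}$, the remaining task is to show the other three edges $\edge{b}{c},\edge{c}{d},\edge{a}{d}$ of $C$ are uncrossed. Here I would argue by contradiction: if, say, $\edge{a}{d}$ were crossed by some edge $\edge{p}{q}$, then since the embedding is triangulated there is a 4-clique $K(a,d,p,q)$; I would check that $\edge{p}{q}$ (or one of $\edge{a}{p},\edge{d}{p}$, etc.) lies in $\mathcal{E}$ of one of the other three edges of $C$, producing a bridge counted by $\beta_1(C)$ in a nonzero coordinate other than $\ell_1$ — say $\ell_4\geq 1$ — contradicting $\beta_1(C)=(1,0,0,0)$; or else $p,q$ get connected to $x,y$'s part without passing through $C$'s vertices, contradicting that $G'=G-C-\edge{x}{y}$ is disconnected. (One must separately handle the case where the crossing edge on $\edge{a}{d}$ emanates entirely into one part and does not cross $C$, which is where $5$-connectivity and the absence of a separating 4-cycle are used to derive that $C$ alone would then separate $G$.) The main obstacle I anticipate is exactly this last step: bookkeeping which 4-clique and which crossing edge corresponds to which coordinate of the bridge pattern, and ruling out the ``degenerate'' placements (crossing edges that run inside a single part, or that involve $c$ or $d$ as a shared neighbor of $x$ and $y$, with $c$ or $d$ playing the role that $c$ played in Lemma~\ref{lem:single-bridge-lemma}). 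Everything else is a fairly direct transcription of the separating-triple argument with ``$C(a,b,c)$'' replaced by ``$C(a,b,c,d)$'' and ``separating triple / separating 3-cycle'' upgraded to ``separating quadruple / separating 4-cycle.''
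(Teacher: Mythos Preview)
Your plan to transplant the separating-triple argument wholesale does not survive the jump to $5$-connectivity, and the missing ingredient is precisely the \emph{uniqueness} hypothesis in the definition of a unique separating quadruple. In Lemma~\ref{lem:single-bridge-lemma} the tetrahedron case was killed by producing a separating $3$-cycle $C(a,b,z)$ or a separating edge; but here $G$ is $5$-connected, so no $3$-cycle can separate and neither can a separating edge or triple. Your list ``one of $C(a,b,x)$, $C(a,b,y)$, $C(b,c,\cdot)$, or a separating $3$-cycle / separating edge'' therefore yields no contradiction at all: those configurations are already impossible regardless of how $K(a,b,x,y)$ sits. The paper does not do a case split on the representation of $K(a,b,x,y)$; it splits on whether $x,y$ lie on the same or on opposite sides of $C$. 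On opposite sides, $\edge{x}{y}$ must cross \emph{some} edge of $C$, and uniqueness ($\beta_j(C)=(-,-,-,-)$ for $j\neq 1$, i.e.\ $G-C-\mathcal{E}(e_j)$ stays connected) forces that edge to be $\edge{a}{b}$. On the same side, the common-neighbor argument together with $\deg\ge 5$ pins down $N(x)=\{a,b,c,d,y\}$ and $N(y)=\{a,b,c,d,x\}$; then $\edge{x}{y}$ is crossable for \emph{every} edge of $C$, giving $\beta_i(C)=(1,1,1,1)$ for all $i$, which contradicts uniqueness directly. You never invoke this half of the hypothesis, and without it the ``same side'' case simply does not close.

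For the final step (the other three edges of $C$ are uncrossed), your first idea --- that a crossing on $\edge{a}{d}$ would register as $\ell_4\ge 1$ in $\beta_1(C)$ --- does not work as stated: $\ell_4$ in $\beta_1$ counts bridges of $G'_1$ (hence edges in $\mathcal{E}\edge{a}{b}$) that are also crossable for $\edge{a}{d}$, and a generic edge crossing $\edge{a}{d}$ need not lie in $\mathcal{E}\edge{a}{b}$. Your ``or else'' clause, however, is exactly the paper's argument: if $\edge{u}{v}$ crosses some $e\neq\edge{a}{b}$, then $u$ and $v$ sit on opposite sides of $C$ and, not being a bridge of $G'_1$, furnish a path from $x$'s side to $y$'s side avoiding $C$ and $\edge{x}{y}$, contradicting disconnection of $G''$. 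So keep that part, but for the body of the proof you need to organize around the side-of-$C$ dichotomy and use the uniqueness assumption as the actual source of the contradiction rather than hunting for forbidden small separators that $5$-connectivity already rules out.
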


\begin{proof}
 Since $G$ is 5-connected, the removal of $C=C(a,b,c,d)$ does not decompose $G$,
whereas $G''=G-C - \edge{x}{y}$ is disconnected.
Then $G''$ has two parts that are connected by $\edge{x}{y}$.
If there is a 5-clique $K(a,b,x,y,z)$ for some vertex $z$, then $x$ and $y$ are
connected via $z$ in $G''$, so that $\edge{x}{y}$ is not a bridge.
Hence, $K(a,b,x,y)$ is a 4-clique, so that
$\edge{x}{y}$ is a crossable edge of $\edge{a}{b}$ and
 $G'=G-C -\mathcal{E}\edge{a}{b}$ is disconnected.

 If $x$ and $y$ are on opposite sides of $C$, then
 $\edge{x}{y}$ crosses      $\edge{a}{b}$,
since the other edges of $C$ have no crossable edges,
such that there is a decomposition. In particular, an ambiguous
separating quadrangle and a 5-clique containing $a,b,x,y$ are excluded.
Then there is an X-quadrangle $Q(a,x,b,y)$.
Otherwise, assume that $x$ and $y$ are in the interior of $C(a,b,c,d)$.
If   $w \neq a,b,c,d$ is a further common neighbor of $x$ and $y$,
then one of $\edge{x}{w}$ and $\edge{y}{w}$ must be a crossable edge,
that is a bridge, which is excluded, since only $\edge{x}{y}$ is a bridge.
Then $N(x) = \{a,b,c,d,y\}$ and $N(y) = \{a,b,c,d,x\}$, so that $G[a,b,c,d,x,y] = D_7$,
which is the graph in Fig.~\ref{fig:quad-E}.
Then $G-C(a,b,c,d)-\mathcal{E}(e)$ decomposes for every edge $e$ in $C(a,b,c,d)$,
such that $\beta_i(C) = (1,1,1,1)$ for $i=1,2,3,4$,
which contradicts the uniqueness of $C(a,b,c,d)$.

At last, suppose edge $e$ of $C(a,b,c,d)$ with $e \neq \edge{a}{b}$ is crossed
by some edge  $\edge{u}{v}$. Then $\edge{u}{v}$ is not a bridge by our assumption on unique separating  quadrangles. The interior (exterior)
of $C(a,b,c,d)$ has at least one vertex, since $C(a,b,c,d)$ has no chords
and $C(a,b,c,d)$ is not a separating 4-cycle.  Then $x$ and $y$ are connected
via $\edge{u}{v}$ in $G''$, since
vertices $x$ and $u$ are connected in $G''$ if they are on the same
side of $C(a,b,c,d)$ and so are $y$ and $v$. By 5-connectivity, there is a path
between $x$ and $u$ in $G$ that avoids $a,b,c,d$, and the path does not contain
edge $\edge{x}{y}$, since $y$ is on the other side of $C(a,b,c,d)$.
Hence, the Lemma holds as stated.
 \qed
\end{proof}

There is a clear decision at a unique separating quadruple, such that the
bridge is crossed and the other edges of the 4-cycle and the X-quadrangle are marked,
and their set of crossable edges is set to $\emptyset$.
A unique separating quadruple is searched by considering all 4-cycles $C$, and then
computing $G_e=G-C- \mathcal{E}(e)$ for every edge $e$ of $C$. For exactly
one edge $e$, graph $G_e$ is disconnected, and if so it
 has two parts that are connected by a single edge.
Observe that a separating quadruple is either
unique or ambiguous or there are subgraphs as shown in Figs.~\ref{fig:quad-ambig}
and \ref{fig:quadrangles}.
\\

It remains to consider separating quadruples with
other bridge pattern and fewer restrictions. For convenience, we  search for them after
  unique separating quadruples.

Assume  an edge $e$ of $C(a,b,c,d)$  is a bridge for two consecutive edges,
such that $e$ is a crossable edge of $\edge{a}{b}$ or of $\edge{b}{c}$, whereas
$e$ is not a crossable edge for an edge of $C(a,b,c,d)$ incident to $d$. For 3-cycles
$C(a,b,c)$ this is excluded, since $K(a,b,c,x,y)$ is a 5-clique. But, separating quadruples have no chord, such that edge $\edge{a}{c}$ is missing.
We have that $G'=G-C(a,b,c,d)-e$ partitions into two parts that are only
connected by $e$. Then there is ambiguous separating quadruple,
which has the bridge pattern $\beta_1(C)=\beta_2(C)=(1,1,0,0)$ and
$\beta_3(C)=\beta_4(C)=($-,-,-,-$)$. Ambiguous separating quadruples
are searched later, after
singular separating triangles and $K_5$-destroyers.

\begin{lemma} \label{lem:single-bridge-in-quad}
Let $C=C(a,b,c,d)$ be a  separating quadruple such that $G''=G-C-\edge{x}{y}$
decomposes. Then $G''$ has two parts $G_x$ and $G_y$ with
$x$ in $G_x$ and $y$ in $G_y$ and $\edge{x}{y}$ is a crossable
edge of an edge in $C(a,b,c,d)$. In addition,

(i) if $\edge{x}{y}$ is a bridge of each edge $e_i$ of  $C$, such that
$\beta_i(C)=(1,1,1,1)$ for $i=1,2,3,4$, then
$G_x$ consists of vertex $x$ of degree five, as shown in Fig.~\ref{fig:quad-A}

(ii) if $\edge{x}{y}$ is a bridge of $\edge{a}{b}$ or $\edge{b}{c}$ and not of
$\edge{a}{d}$ or $\edge{c}{d}$, such that $\beta_1(C)=\beta_2(C)=(1,1,0,0)$
and $\beta_3(C)=\beta_4(C)=($-,-,-,-$)$,
then $C(a,b,c,d)$ is an ambiguous separating quadruple,
as shown in Fig.~\ref{fig:quad-ambig}

(iii) if edge $\edge{x}{y}$ is the single bridge of $\edge{a}{d}, \edge{a}{b}$ and
$\edge{b}{c}$, whereas $\edge{c}{d}$ has    two bridges, that is
$\beta_i(C)=(1,1,1,1)$ for $i=1,2,4$ and $\beta_3(C)=(1,1,2,1)$,
then  $G[a,b,c,d,x,y,z]$ is the graph shown in  Fig.~\ref{fig:quad-E}.

(iv) if edge $\edge{x}{y}$ is the single bridge of $\edge{a}{d}$ and
$\edge{b}{c}$, whereas $\edge{a}{b}$ and $\edge{c}{d}$ have    two bridges, that is
$\beta_2(C)=\beta_4(C)=(1,1,1,1)$  and $\beta_1(C)=\beta_3(C)=(2,1,2,1)$
then  $G[a,b,c,d,x,y,z]$ is the graph shown in  Fig.~\ref{fig:quad-C}.
\end{lemma}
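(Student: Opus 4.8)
The plan is to argue structurally from a fixed \emph{T1P} embedding $\Gamma(G)$, using the ``new neighbour'' technique introduced before Lemma~\ref{lem:single-bridge-lemma} together with the facts already established for separating quadruples. First I would dispense with the common prelude: if $G''=G-C-\edge{x}{y}$ decomposes, then, exactly as in the proof of Lemma~\ref{lem:single-bridge-lemma-quad}, a $5$-clique $K(a,b,x,y,z)$ would reconnect $x$ and $y$ through $z$, so $K(a,b,x,y)$ (or whichever two consecutive vertices of $C$ witness the crossable edge) is a maximal $4$-clique; hence $\edge{x}{y}\in\mathcal{E}(e)$ for some edge $e$ of $C$, and $G''$ has exactly the two parts $G_x\ni x$, $G_y\ni y$ — there can be no third part, since any vertex not separated from both $x$ and $y$ by the $4$-set $C$ together with the single edge $\edge{x}{y}$ lies on one of the two sides. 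I would also note, once, that $\beta_i(C)=(1,1,1,1)$ means $\edge{x}{y}$ is the \emph{only} bridge over each $e_i$; this monopoly is what forces the degree bounds below.

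Then I would handle the four cases by locating $x$ (or the analogous degree-five vertex) in $\Gamma(G)$ and reading off its neighbourhood. For (i): since $\edge{x}{y}$ is a crossable edge of every $e_i$, there are $4$-cliques $K(v_i,v_{i+1},x,y)$ for $i=1,2,3,4$, so $x$ is adjacent to all of $a,b,c,d,y$; if $x$ had a further neighbour $w\notin\{a,b,c,d,y\}$, then one of $\edge{x}{w},\edge{y}{w}$ would be a second bridge over some $e_i$ (the connectivity/maximality dichotomy of the pre-Lemma paragraph), contradicting $\beta_i(C)=(1,1,1,1)$; hence $N(x)=\{a,b,c,d,y\}$, $G_x=\{x\}$, and $\deg(x)=5$, which is exactly $H_7$ of Fig.~\ref{fig:quad-A}. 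For (ii): here $\edge{x}{y}$ is crossable for $\edge{a}{b}$ and $\edge{b}{c}$ but not for the two edges at $d$, so $d$ is \emph{not} adjacent to both $x$ and $y$; this is precisely the defining configuration of an ambiguous separating quadruple in Definition~\ref{def:separators} (the bridge $\edge{x}{y}\in\mathcal{E}\edge{a}{b}\cap\mathcal{E}\edge{b}{c}$, no chord $\edge{a}{c}$ since separating quadruples are chordless), giving Fig.~\ref{fig:quad-ambig}. For (iii) and (iv) I would do the same neighbourhood bookkeeping on the side(s) where a \emph{second} bridge is allowed: the extra bridge over $\edge{c}{d}$ (resp.\ over $\edge{a}{b}$ and $\edge{c}{d}$) must, by the same maximality argument, be an edge incident to the forced common neighbour, which I would call $z$; tracking which of $a,b,c,d$ are adjacent to $x$, to $y$, and to $z$ — as dictated by the stated $\beta$-patterns — pins down $G[a,b,c,d,x,y,z]$ uniquely as the graph in Fig.~\ref{fig:quad-E} (resp.\ Fig.~\ref{fig:quad-C}), after checking that no additional vertex or edge can be present without either creating a $K_5$ or a third bridge.

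The main obstacle I anticipate is \textbf{ruling out extra vertices and edges} in cases (iii) and (iv): the $\beta$-pattern constrains how many bridges each $e_i$ has, but I must convert this into a complete description of the induced subgraph on seven vertices, and in particular show that $G_x$ (and, in (iv), also $G_y$) is a single vertex and that no chord of $C$ or extra edge among $\{x,y,z\}$ sneaks in. The key levers are: (a) chordlessness of separating quadruples, which kills $\edge{a}{c}$ and $\edge{b}{d}$; (b) $1$-planarity plus triangulation of $\Gamma(G)$, which, once the outer $4$-cycle $C$ and the crossing pair inside are fixed, leaves only a bounded number of faces into which $z$ (or any further vertex) could be placed — and each such placement either forces a $5$-clique or a second crossable edge, i.e.\ a bridge not permitted by the pattern; (c) the observation, already used in Lemma~\ref{lem:single-bridge-lemma-quad}, that any vertex strictly inside $C$ with only neighbours in $C\cup\{y\}$ has degree five and is therefore its own part of $G''$. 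I would organise the argument so that the generic ``new neighbour $\Rightarrow$ extra bridge or $K_5$'' step is stated once and invoked in all four cases, and I expect the remaining work to be a short, essentially pictorial, case check against Figs.~\ref{fig:quad-A}, \ref{fig:quad-ambig}, \ref{fig:quad-E} and \ref{fig:quad-C}.
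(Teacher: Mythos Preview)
Your plan matches the paper's proof closely: the same prelude (single bridge gives exactly two parts; chordlessness rules out a $K_5$ on three cycle vertices, so $\edge{x}{y}$ is crossable for some edge of $C$), the same argument for (i) via ``extra neighbour forces an extra bridge, contradicting $\beta_i=(1,1,1,1)$'', and the same reduction of (iii)/(iv) to reading off the induced subgraph on $\{a,b,c,d,x,y,z\}$ from the bridge pattern once one notes $\{a,b,c,d\}\subseteq N(x)\cap N(y)$ and that one of $x,y$ must have degree five.

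The one place the paper does something you have not flagged is case (ii): to conclude that $C(a,b,c,d)$ is an \emph{ambiguous} separating quadruple you must also establish \emph{strongness} (one part is a single vertex), and here the paper does not reuse the neighbour/$K_5$ dichotomy but instead observes that a further neighbour $z$ of $x$ would make $C(a,x,c,d)$ itself a separating quadruple, contrary to the standing assumption on the order in which separators are searched. You list strongness as an obstacle only for (iii)/(iv); add this minimality step to your treatment of (ii) as well.
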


\begin{proof}
Let $C=C(a,b,c,d)=C(v_1,v_2,v_3,v_4)$.
Clearly, $G''$ has exactly two parts containing $x$ and $y$, respectively, since there
is a single bridge. Edge  $\edge{x}{y}$ crosses an edge of $C(a,b,c,d)$, since there is a separating  4-cycle, otherwise. Hence, $x$ and $y$ are on opposite sides of $C(a,b,c,d)$.
There are no chords $\edge{a}{c}$
and $\edge{b}{d}$, so that there is no    5-clique containing three vertices of $C$.
Hence, $\edge{x}{y}$ is a crossable
edge for an edge of $C(a,b,c,d)$.

 For (i), vertices  $a,b,c,d$  are neighbors of $x$  and of $y$, since  $\edge{x}{y}$
is a bridge for each edge of $C(a,b,c,d)$. If $x$ has a further neighbor
$w \not\in\{a,b,c,d,y\}$,
an edge of $C(a,b,c,d)$ is crossed by an edge incident to $w$ due to 5-connectivity
and the absence of separating triples, so that there is another bridge pattern.
Hence, $G_x$ consists of vertex $x$, which has degree five.

For (ii), vertices $a,b,c$ are neighbors of $x$ and $y$ and $d$ is not in $N(x) \cap N(y)$.
There is no $K_5$ containing $a,b,x,y$ or $b,c,x,y$, and also $a$ and $c$ are not
in a $K_5$, since there is no chord $\edge{a}{c}$. Then $\edge{x}{y} \in
\mathcal{E}\edge{a}{b} \cap \mathcal{E}\edge{b}{c}$.  If $z$ is a further neighbor
of $x$, such that $x$ is no a part of $G''$, then $C(a,x,c,d)$ is a separating quadruple,
which is excluded by the assumption.
Hence, $C(a,b,c,d)$ is an ambiguous separating quadruple, which proves (ii).

For (iii) and (iv)  observe that
 $\{a,b,c,d\} \subseteq N(x) \cap N(y)$.
As a single bridge suffices for a decomposition, at most one
edge of $C$ is crossed and vertex $x$ (one of $x$ and $y$) has degree five.
Then only the graphs in Figs.~\ref{fig:quad-A}, \ref{fig:quad-C}
and \ref{fig:quad-E} remain, in which vertex $x$ is isolated for $G''$.
Now the bridge pattern determines the graph.
\qed
\end{proof}

If $C(a,b,c,d)$ is a separating quadruple and one part of $G'=G-C(a,b,c,d)-\mathcal{E}\edge{a}{b}$ is a small graph consisting a single vertex  $x$ or of two vertices
$x$ and $y$, then  the crossed edges are determined by the bridge pattern.
Algorithm $\mathcal{N}$ removes the edge that crosses $C(a,b,c,d)$,
such that thereafter $C(a,b,c,d)$ is a separating 4-cycle (or all crossed edges
of the small component) and marks edges from all detected X-quadrangles.

So far, a single bridge of any edge of $C(a,b,c,d)$  suffices  to decompose graph $G$.
Next, we search for separating quadrangles, such that some edges of $C(a,b,c,d)$
need at least two bridges, whereas there is no decomposition by the removal of
the crossable edges for other edges of $C(a,b,c,d)$.
Then only the graphs in Figs.~\ref{fig:quad-B}, \ref{fig:quad-D} and \ref{fig:quad-F} remain.

\begin{figure}[t]   
  \centering
  \subfigure[] {
     \includegraphics[scale=0.5]{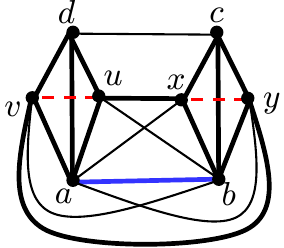}  
    \label{fig:sepquad1}
  }
  \subfigure[] {
      \includegraphics[scale=0.5]{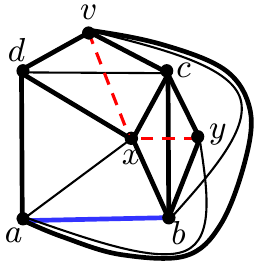}  
      \label{fig:sepquad2}
  }
  \subfigure[] {
      \includegraphics[scale=0.5]{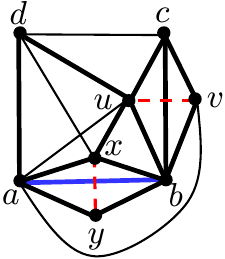}
      \label{fig:sepquad3}
  }
  \subfigure[] {
      \includegraphics[scale=0.5]{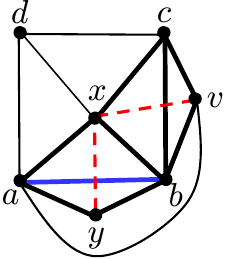}
      \label{fig:sepquad4}
  }
  \subfigure[] {
      \includegraphics[scale=0.5]{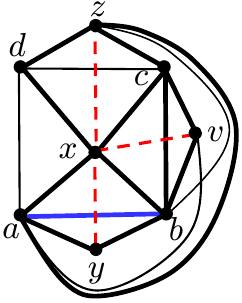}
      \label{fig:sepquad5}
  }
  \caption{Illustration for the proof of Lemma~\ref{lem:two-bridges-quad}.
Separating quadruples with two bridges for edge $\edge{a}{b}$.
(a) Edge $\edge{x}{y}$ crosses $\edge{b}{c}$ and $\edge{u}{v}$ crosses $\edge{a}{d}$,
 (b)  $\edge{x}{y}$ crosses $\edge{b}{c}$, $\edge{u}{v}$ crosses $\edge{c}{d}$ and $u=x$.
(c) Edge $\edge{a}{b}$ is crossed. The second bridge $\edge{u}{v}$ crosses $\edge{b}{c}$.
There are no further vertices or edges inside $C(a,b,c,d)$ if there are no smaller separators. $C(a,d,c,v)$ is a unique separating quadruple.
(d) Case $x=u$. (e) Three bridges and a separating triple $C(a,d,z)$.
  }
  \label{fig:two-bridges-quad}
\end{figure}


\begin{lemma} \label{lem:two-bridges-quad}       
Suppose  $G'=G-C(a,b,c,d)- \mathcal{E}\edge{a}{b}$ is disconnected
but $G$ does not decompose if $C(a,b,c,d)$ and a single edge are removed.
 Then $\edge{a}{b}$ is crossed by one of its bridges.
\end{lemma}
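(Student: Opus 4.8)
I would argue inside an arbitrary fixed \emph{T1P} embedding $\Gamma(G)$, reusing the ``new common neighbour'' device introduced before Lemma~\ref{lem:single-bridge-lemma} and the closed-curve merging trick of Lemma~\ref{then-seperatingedge}. We keep the standing assumptions of this section: $G$ is $5$-connected of order at least seven and has no separating edge, separating triple, separating $4$-cycle, or (these being searched first) unique separating quadruple. By the hypothesis, $C=C(a,b,c,d)$ is chordless --- a chord $\edge{a}{c}$ or $\edge{b}{d}$ would create a smaller separator --- and $\edge{a}{b}$ has at least two bridges, since $C$ together with a single edge never disconnects $G$; so the single-bridge analysis of Lemma~\ref{lem:single-bridge-in-quad} does not apply and we are in one of the configurations sketched in Fig.~\ref{fig:two-bridges-quad}. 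Two facts are used throughout: an X-quadrangle has no fifth vertex adjacent to all four of its corners and $G$ has no $K_6$, so whenever an edge $e$ is crossed by $f$ in $\Gamma(G)$ the $4$-clique on their endpoints is maximal and hence $f\in\mathcal{E}(e)$; and, $C$ being chordless, a crossing partner of an edge of $C$ never lies on $C$.

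First I would check that some edge of $C$ is crossed in $\Gamma(G)$: if all four were uncrossed, $C$ would be a Jordan curve, neither side of it could be a single face ($C$ has four vertices and no chord, so no side is a triangle or an X-quadrangle), hence both sides contain a vertex, and then every path joining the two sides runs through $\{a,b,c,d\}$, making $C$ a separating $4$-cycle --- excluded. Next I would rule out $\edge{a}{b}$ being uncrossed. Fix a bridge $\edge{x}{y}$ of $\edge{a}{b}$ with $x,y$ in distinct components of $G'$; then $K(a,b,x,y)$ is a maximal $4$-clique in which $\edge{a}{b}$ is not the crossed diagonal, hence it is drawn as a tetrahedron or as an X-quadrangle with crossing pair $\{\edge{a}{x},\edge{b}{y}\}$ or $\{\edge{a}{y},\edge{b}{x}\}$. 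In each case, exactly as in the proof of Lemma~\ref{lem:single-bridge-lemma} (carried out now for a $4$-cycle), $5$-connectivity and the triangulation either leave $x$ or $y$ with too small a degree, or force a common neighbour $w\notin\{a,b,c,d,x,y\}$ of $x$ and $y$; then one of $\edge{w}{x},\edge{w}{y}$ is a crossable edge of $\edge{a}{b}$, which makes $\{a,b,x,y,w\}$ a $5$-clique contradicting maximality, or $x$ and $y$ stay joined through $w$ in $G'$, contradicting that $\edge{x}{y}$ is a bridge; and when no such $w$ exists, one of $C(a,b,x)$, $C(a,b,y)$ is a separating $3$-cycle or $\edge{a}{b}$ is a separating edge. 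All outcomes are excluded, so $\edge{a}{b}$ is crossed, say by $\edge{p}{q}\in\mathcal{E}\edge{a}{b}$, forming the X-quadrangle $Q(a,p,b,q)$ with $p,q\notin C$.

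It then remains to show that $\edge{p}{q}$ is one of the bridges of $\edge{a}{b}$. Since $c,d$ lie outside $Q(a,p,b,q)$ and $b,c,d,a$ is a path of $C$ avoiding its interior, $C$ is a closed curve separating $p$ from $q$, crossed on $\edge{a}{b}$ by $\edge{p}{q}$; hence any $p$--$q$ path in $G-\{a,b,c,d\}$ either uses $\edge{p}{q}$ or crosses one of $\edge{b}{c},\edge{c}{d},\edge{a}{d}$, each of which is crossed at most once (again by the X-quadrangle argument). If $\edge{p}{q}$ were not a bridge, such a ``detour'' edge $\edge{u}{v}$ would have to exist with $\edge{u}{v}\notin\mathcal{E}\edge{a}{b}$; bounding the detours by the three remaining sides of $C$, using $5$-connectivity to pin down the vertices outside $C$ and inside $Q(a,p,b,q)$, and invoking the hypothesis that $C$ plus any single edge leaves $G$ connected, one is forced into one of the small configurations of Figs.~\ref{fig:quad-B}, \ref{fig:quad-D} and~\ref{fig:quad-F} (cf.\ Fig.~\ref{fig:two-bridges-quad}), possibly up to a symmetric variant in which another side of $C$ also carries bridges, and in each of those $\edge{a}{b}$ is checked directly to cross one of its (at least two) bridges, which completes the proof. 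The delicate part of the whole argument is precisely this last bookkeeping --- tracking which edges of $C$ are crossed, whether their partners are bridges of $\edge{a}{b}$, and how the extra crossings reconnect $G'=G-C-\mathcal{E}\edge{a}{b}$ --- which is exactly where Lemma~7.9 of Chen et al.\ fails, so the case distinction must be exhaustive and must explicitly surface these exceptional graphs.
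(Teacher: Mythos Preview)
Your route differs substantially from the paper's, and step~2 carries a genuine gap.

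The paper never analyses the representation of the $4$-clique $K(a,b,x,y)$ of a single bridge. It argues globally and very briefly: at least two sides of $C$ are crossed in $\Gamma(G)$ (one by $5$-connectivity, a second because $G-C$ minus a single edge stays connected); assuming for contradiction that $\edge{a}{b}$ is uncrossed, two bridges of $\edge{a}{b}$ then cross two of the three remaining sides of $C$, and a two-case split on whether those two sides are opposite or consecutive produces, from the uncrossed boundary edges of the two resulting X-quadrangles together with an uncrossed side of $C$, a separating $4$-cycle in the first case and a separating $3$-cycle in the second. That is the whole proof; the paper does not separately verify that the edge crossing $\edge{a}{b}$ is a bridge, because the contradiction is reached before $\edge{a}{b}$ is ever allowed to be crossed.

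Your step~2 transplants the conclusion of Lemma~\ref{lem:single-bridge-lemma} into a setting where it no longer holds. In the $4$-cycle case the vertices $c$ and $d$ are legitimate common neighbours of $x$ and $y$: they lie on $C$, so they do not reconnect $x$ and $y$ in $G'$, and since $C$ is chordless they create no $5$-clique. Hence one can perfectly well have $N(x)=\{a,b,c,d,y\}$ with $\deg(x)=5$, which is not ``too small'', and then neither $C(a,b,x)$ nor $C(a,b,y)$ is a separating $3$-cycle, nor is $\edge{a}{b}$ a separating edge. The correct observation at that point is that $G-C-\edge{x}{y}$ already isolates $x$, contradicting the lemma's own hypothesis that a single edge never suffices; this is exactly the $H_7$ configuration of Lemma~\ref{lem:single-bridge-in-quad}, and it must be eliminated by that hypothesis rather than by the triple-lemma machinery you invoke. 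As written, your case ``no such $w$ exists'' does not close.

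Your step~3 then undertakes to prove a statement the paper never isolates, and the appeal to Figs.~\ref{fig:quad-B}, \ref{fig:quad-D}, \ref{fig:quad-F} (``one is forced into one of the small configurations \ldots\ and in each of those $\edge{a}{b}$ is checked directly'') is a description of work to be done, not an argument. If you want to keep your two-stage structure, you will have to make both stages explicit; the paper's one-stage route via smaller separators is considerably shorter.
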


\begin{proof}
At least two edges of $C=C(a,b,c,d)$ are crossed, since
at least one edge of $C$ is crossed  by 5-connectivity and $G$ does not decomposes by a single edge by assumption.

For a contradiction, suppose that $\edge{a}{b}$ is uncrossed.
Then $\edge{b}{c}$ is crossed by a bridge $\edge{x}{y}$ of $\edge{a}{b}$.
If also $\edge{a}{d}$ is crossed by a bridge $\edge{u}{v}$ of $\edge{a}{b}$,
as shown in Fig.~\ref{fig:sepquad1},
then there is a separating 4-cycle $C(v,c,d,y)$, since its edges are uncrossed.
The case in which $u=x$ is similar.
If bridges $\edge{x}{y}$ and $\edge{u}{v}$ of $\edge{a}{b}$ cross
$\edge{b}{c}$ and $\edge{c}{d}$, respectively, that is two consecutive edges of $C$
edges $\edge{a}{y}$ and $\edge{b}{v}$ cross, such that there is a separating 3-cycle
$C(a,v,d)$, contradicting the assumption.
\qed
\end{proof}

In consequence, there is a clear decision and a unique 1-planar embedding
if the requirements for Lemmas~\ref{lem:single-bridge-in-quad}
and \ref{lem:two-bridges-quad} are fulfilled. In fact, a simple case analysis shows that
two bridges of $\edge{a}{b}$ cannot cross edge $\edge{a}{b}$ and the adjacent edge
$\edge{b}{c}$ in the absence of smaller separators, in
particular unique separating quadrangles, as Figs.~\ref{fig:sepquad3}-\ref{fig:sepquad5}
illustrate. Hence, only the graphs
in Figs.~\ref{fig:quad-B}, \ref{fig:quad-D} and \ref{fig:quad-F}
remain if two or three bridges are necessary to decompose the graph.
In any case, algorithm $\mathcal{N}$
removes the edges that cross edges from the 4-cycle $C(a,b,c,d)$ and later recognizes
$C(a,b,c,d)$ as a separating 4-cycle.\\

\noindent \textbf{Remark 1}
Note that the algorithm by Chen et al.~\cite{cgp-rh4mg-06} fails on separating
quadruples with two or three bridges if it chooses a bridge
that does not cross $\edge{a}{b}$. Then it marks some wrong edges, such that
 the final planarity test fails,
although the graph   is a \emph{T1P} graph. For example, it may classify edge
$\edge{x}{y}$ in Fig.~\ref{fig:quad-C} or \ref{fig:quad-E} as crossed.

Second, there is an error for circle  generalized two-stars $cG2S_k$
or $sG2S_k$ in~\cite{cgp-rh4mg-06}, see Fig.~\ref{fig:DP}. This special case has
been discarded by Chen et al.
If $p$
and $q$ are the poles and $v_1,\ldots, v_k$ the other vertices of
$cG2S_k$, then there is a failure if the algorithm first chooses
e.g., edge $\edge{p}{v_8}$ to be crossed and later $\edge{q}{v_4}$
and marks edges $\edge{p}{v_7}$ and $\edge{q}{v_5}$. Then
$G[p,q,v_4,\ldots,v_8]$ does not admit a 1-planar drawing with the
edges of $C(p,v_4,q, v_8)$ and $\edge{p}{v_7}$ and $\edge{q}{v_5}$
uncrossed.  With such a choice, algorithm $\mathcal{A}$
 rejects $G$, so that $\mathcal{A}$ is false positive.

\subsection{Separating Triangle} \label{sect:triangle}

Suppose   $C(a,b,c)$ is a   separating triangle such that
$G'=G-C(a,b,c)- \mathcal{E}\edge{a}{b}-\mathcal{E}\edge{b}{c}$ is disconnected.
 Intuitively, the following is clear.
If $u$ and $v$ are vertices on opposite sides of $C(a,b,c)$, then
there are five vertex disjoint  paths between   $u$ and $v$ that pass through one of the vertices $a,b,c$ or through one of the two edges crossing $C(a,b,c)$. In consequence, $G'$ has two
parts, one on either side of $C(a,b,c)$ if the separating triangle decomposes along $C(a,b,c)$.
However, this may not be true. Two edges of $C(a,b,c)$ and their crossable edges must be
chosen appropriately. Otherwise, an edge and a bridge
may not cross, as edges $\edge{x}{y}$ and $\edge{x}{w}$ in Fig.~\ref{fig:counterex-triple1}, or all three edges of $C(a,b,c)$
are crossed, as in Fig.~\ref{fig:counterex-triple2}.
Chen et al.~\cite{cgp-rh4mg-06} (Lemma 7.15) circumvent this problem and search for separating
triangles after searching separating quadrangles and strongly separating triangles. Note that
vertex $y$ is the center of a strongly separating triangle in Figs.~\ref{fig:counterex-triple1}
and  \ref{fig:counterex-triple2}.


We proceed in a different way and specialize separating triangles, such that
there are two independent bridges or  the separating triangle is   singular or  ambiguous.
Thereby, we can relax the use of the generalized separators,
such that separating quadrangles and (strong) separating triangles can be searched
in any order, whereas ambiguous separating quadrangles and triangles are searched last.
This is due to the  the absence of  the respective separators as a requirement
in the  proofs, as opposed to \cite{cgp-rh4mg-06}.
At last, we use separating triangles to destroy all remaining 5-cliques.\\

\begin{figure}[t]   
  \centering
     \subfigure[] {
     \includegraphics[scale=0.45]{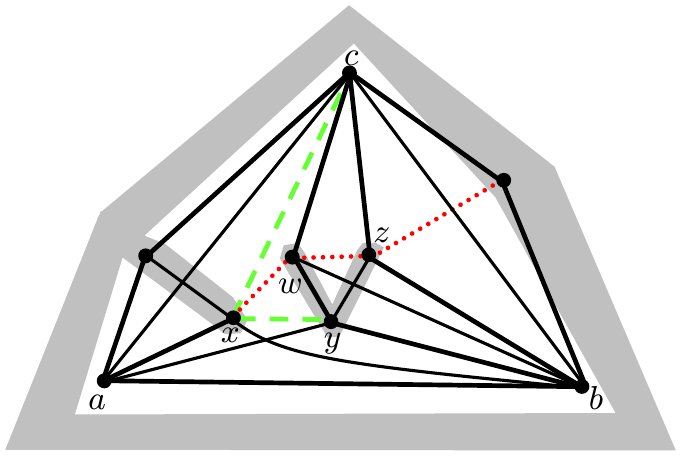}  
    \label{fig:counterex-triple1}
  }
  \hfil
  \subfigure[] {
      \includegraphics[scale=0.45]{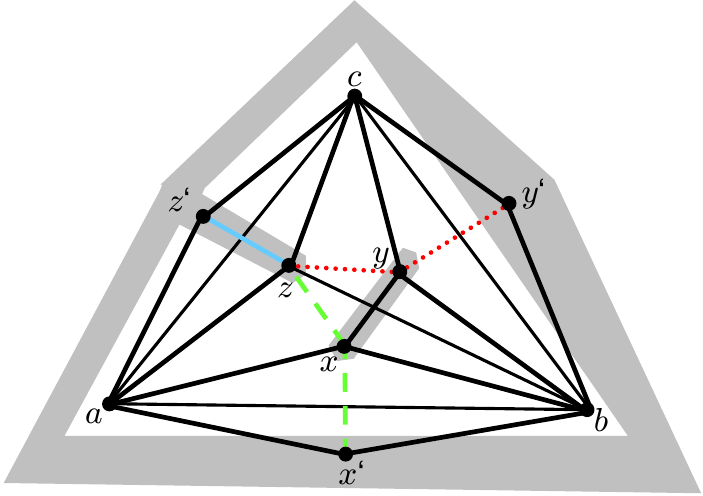} 
      \label{fig:counterex-triple2}
  }
  \label{fig:counterexample-triple}
\caption{
A separating triangle $C(a,b,c)$ such that $G'=G-C(a,b,c)-
\mathcal{E}\edge{a}{b}-\mathcal{E}\edge{b}{c}$ decomposes into two parts, drawn shared.
 Crossable edges of $\edge{a}{b}$
are drawn dashed and green and crossable edges of $\edge{b}{c}$ are drawn dotted and red.
Bridges $\edge{x}{y}$ and $\edge{x}{w}$ are  not independent. Vertex $y$ is the center
of a strongly separating triangle and $w$ is the center of a strongly separating quadrangle.
(b) Each edge of $C(a,b,c)$ is crossed, but the bridges, drawn dashed and green and
dotted and red,   are not
indpendent. Vertices $x$ and $y$ are a center of a strongly separating triangle.
\newline
Algorithm $\mathcal{A}$ by Chen et al.~\cite{cgp-rh4mg-06} finds the edge that crosses
$\edge{b}{c}$
by a separating quadrangle $C(b,c,w,y)$ and $C(b,c,z,x)$, respectively.
Algorithm $\mathcal{N}$ finds a bridge-independent separating triangle with edges
$\edge{a}{c}$ and $\edge{b}{c}$ or the separating quadrangle $C(b,c,w,y)$ for the
graph in (a), which  ever comes first,  and similarly for the graph in (b) if edge
$\edge{z}{z'}$, drawn blue, is omitted. It uses the unique independent quadrangle
$C(b,c,z,x)$, otherwise, and then the separating quadrangle $C(b,c,z,x)$.
}
\end{figure}

Let $C(a,b,c)$ be a 3-cycle such that   $\edge{a}{b}$ is crossed by $\edge{x}{y}$,
$\edge{b}{c}$ is crossed by $\edge{u}{v}$, and the third edge $\edge{a}{c}$ is uncrossed
in a 1-planar embedding $\Gamma(G)$, as shown in Fig.~\ref{fig:triangle}.
Then $C(a,b,c)$ is a separating triangle.
The edges
$\edge{x}{y}$ and $\edge{u}{v}$ are independent by 5-connectivity if both parts
of $G'=G-C(a,b,c)-\edge{x}{y}-\edge{u}{v}$ contain
at least two vertices. Otherwise, $C(a,b,c)$ is a
strongly separating triangle or $C(a,x,c)$ is a separating 3-cycle if $x=u$.
The  decomposition of $G$  is enforced by a bridge-independent separating triangle, where
 $G'=G-C(a,b,c)-
\mathcal{E}\edge{a}{b}-\mathcal{E}\edge{b}{c}$ partitions into nontrivial parts and
edges $\edge{x}{y} \in \mathcal{E}\edge{a}{b}$ and $\edge{u}{v} \in \mathcal{E}\edge{b}{c}$
are  independent bridges, that is
none of $x, y, u, v$  is not incident to another  bridge in $\mathcal{E}\edge{a}{b} \cup   \mathcal{E}\edge{b}{c}$.

In the proofs, we use the following observation. If $\edge{x}{y}$ is a bridge
and $w$ is a neighbor of $x$ and $y$ with $w \neq a,b,c$, then $\edge{x}{w}$
or $\edge{y}{w}$ is a crossable edge of one of the edges of  $C(a,b,c)$. Otherwise,
$x$ and $y$ are connected via $w$. However, these properties can only be fulfilled
by graph $H_7$, shown in Figs.~\ref{fig:quad-A} and \ref{fig:small-H7},
respectively.

\begin{lemma} \label{septriangle-independentbridges}  
Let $C(a,b,c)$ be a 3-cycle such that $\edge{a}{c}$ is uncrossed,
  $\edge{a}{b}$ is crossed by $\edge{x}{y}$ and
$\edge{b}{c}$ is crossed by $\edge{u}{v}$,
in a  T1P  embedding $\Gamma(G)$, such that vertices $x,y,u$ and $v$ are distinct
and there is no 5-clique containing $a,b,x,y$ and $b,c,u,v$, respectively.
Then $C(a,b,c)$ is a bridge-independent separating triangle  or
one component is graph $H_7$.
\end{lemma}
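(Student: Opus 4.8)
The plan is to argue that, starting from the given T1P embedding $\Gamma(G)$, the removal of $C(a,b,c)$ together with the two crossing edges $\edge{x}{y}$ and $\edge{u}{v}$ already disconnects $G$, and then to upgrade this to the statement for crossable-edge sets. First I would observe that in $\Gamma(G)$ the uncrossed edge $\edge{a}{c}$ together with the two X-quadrangles $Q(a,x,b,y)$ and $Q(b,u,c,v)$ forms a closed curve $J$ in the plane/sphere that separates the embedding into an ``inside'' and an ``outside'': indeed, $\edge{a}{c}$ plus the two arcs $a\!-\!x/y\!-\!b$ (halves of the quadrangle boundaries, which are uncrossed by maximality of the 4-cliques) and $b\!-\!u/v\!-\!c$ bound a region, and every vertex of $G$ other than $a,b,c,x,y,u,v$ lies strictly on one side of $J$ because the embedding is triangulated and 1-planar. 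Since $G$ is 5-connected it has at least $8$ vertices beyond what the curve uses in the worst case (or I would handle the small-order cases separately), so both sides are nonempty; hence $G-C(a,b,c)-\edge{x}{y}-\edge{u}{v}$ is disconnected, with the two parts $G_x, G_y$ living on the two sides of $J$.

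Next I would promote this to $G' = G - C(a,b,c) - \mathcal{E}\edge{a}{b} - \mathcal{E}\edge{b}{c}$. Since $\mathcal{E}\edge{a}{b}$ and $\mathcal{E}\edge{b}{c}$ only contain edges, removing them from $G$ can only remove more, so $G'$ is a subgraph of $G-C(a,b,c)-\edge{x}{y}-\edge{u}{v}$ and is therefore also disconnected — but I must check it is disconnected \emph{in the right way}, i.e., that the two nontrivial parts remain nontrivial and that the bridges are exactly the crossing edges and are independent. For this I would use the ``bridge'' observation stated just before the lemma: if $\edge{x}{y}$ (or $\edge{u}{v}$) has a common neighbor $w \notin\{a,b,c\}$, then $\edge{x}{w}$ or $\edge{y}{w}$ is a crossable edge of some edge of $C(a,b,c)$, hence another bridge, and then by 1-planarity and triangulation the local structure forces $G[a,b,c,x,y,u,v]$ (or a subset) to close up with no room for further vertices — which pins down the graph to $H_7$. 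So either there is no such $w$, in which case the four endpoints $x,y,u,v$ are pairwise the only ``crossing'' connections between the two sides and the bridges $\edge{x}{y}\in\mathcal{E}\edge{a}{b}$, $\edge{u}{v}\in\mathcal{E}\edge{b}{c}$ are independent, giving a bridge-independent separating triangle; or one component is $H_7$.

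The main obstacle I anticipate is the case analysis showing that \emph{no other} edge of $\mathcal{E}\edge{a}{b}\cup\mathcal{E}\edge{b}{c}$ can act as an extra bridge touching $\{x,y,u,v\}$ without collapsing the configuration to $H_7$. Concretely, I would enumerate the possible positions of a purported extra crossable edge relative to the curve $J$ and the two X-quadrangles, using (a) 1-planarity (each of $\edge{a}{b}$ and $\edge{b}{c}$ is already crossed, so any further crossable edge would have to be routed uncrossed, which constrains it to lie inside one quadrangle face), (b) the triangulation requirement (no large faces), and (c) the hypothesis that there is no 5-clique on $\{a,b,x,y\}$ or $\{b,c,u,v\}$, which kills the tetrahedral alternatives. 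In each sub-case I would show either that the extra edge forces a vertex count small enough that $G[a,b,c,x,y,u,v]=H_7$ (matching Fig.~\ref{fig:quad-A}/\ref{fig:small-H7}), or that it contradicts 5-connectivity (producing a separating 3-set) or the absence of smaller separators. A secondary subtlety is confirming that both parts of $G'$ are genuinely nontrivial, i.e. neither side degenerates to a single vertex; if one side were a single vertex the configuration would be a strongly separating triangle rather than bridge-independent, and I would argue this again forces the $H_7$ exception or is excluded by the distinctness of $x,y,u,v$ together with triangulation.
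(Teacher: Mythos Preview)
Your plan matches the paper's proof: first note that $G''=G-C(a,b,c)-\edge{x}{y}-\edge{u}{v}$ is disconnected (the paper simply asserts this; your Jordan-curve argument through the two X-quadrangle boundaries and $\edge{a}{c}$ is a clean way to see it), then argue that passing to $G'=G-C(a,b,c)-\mathcal{E}\edge{a}{b}-\mathcal{E}\edge{b}{c}$ does not create further parts unless one side is $H_7$, via a case analysis on any extra crossable edge $\edge{x'}{y'}$.

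One refinement of your framing: the danger is \emph{not} that an extra crossable edge becomes a new bridge between the two sides and touches $\{x,y,u,v\}$ --- it cannot even reach across, since $\edge{a}{b}$ and $\edge{b}{c}$ are already crossed and $\edge{a}{c}$ is uncrossed by hypothesis, so both endpoints of any further $\edge{x'}{y'}\in\mathcal{E}\edge{a}{b}$ must lie on the \emph{same} side of $C(a,b,c)$. The real danger is that removing $\edge{x'}{y'}$ (together with the rest of $\mathcal{E}\edge{a}{b}\cup\mathcal{E}\edge{b}{c}$) splits that side into further pieces, giving three or more parts in $G'$. The paper handles exactly this: it observes that $K(a,b,x',y')$ must be embedded as a tetrahedron (since $\edge{a}{b}$ is already crossed), forces $x',y'$ onto one side, then chases a common neighbour $w'$ of $x',y'$ through the triangulation, using the absence of separating triples and of a 5-clique on $a,b,x',y'$, until either $x',y'$ are shown connected in $G'$ or the interior of $C(a,b,c)$ is pinned down to the six-vertex graph $H_7$ (with $x'=x$, $w'=u$). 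So orient your case analysis toward ``does removing $\edge{x'}{y'}$ disconnect its own side?'' rather than ``does $\edge{x'}{y'}$ share a vertex with the known bridges?''. Your nontriviality worry is easily dispatched: since $x,y,u,v$ are distinct and one endpoint of each crossing pair lies on each side of $C(a,b,c)$, each side already contains at least two vertices.
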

\begin{proof}
Clearly, $G''=G-C(a,b,c)-\edge{x}{y}-\edge{u}{v}$ decomposes  such that
$\edge{x}{y}$ and $\edge{u}{v}$ are bridges of $G''$. Then also
$G'=G-C(a,b,c)-\mathcal{E}\edge{a}{b}-\mathcal{E}\edge{b}{c}$ is disconnected,
since 5-cliques are excluded.

It remains to show that $G'$ has two parts or there  graph $H_7$,
which is a 6-clique without two independent edges, see
Figs.~\ref{fig:quad-A} and \ref{fig:small-H7}.
A further partition
of $G'$ is due to further crossable edges.
We claim that the parts of $G'$ and $G''$ coincide up to crossable edges of $\edge{a}{b}$
and $\edge{b}{c}$.  Therefore,
consider a crossable edge $\edge{x'}{y'}$ of $\edge{a}{b}$ with $\edge{x'}{y'} \neq \edge{x}{y}$.
Since $\edge{a}{b}$ is crossed by $\edge{x}{y}$, the
4-clique $K(a,b,x',y')$ is represented as a tetrahedron. It cannot be represented as an
X-quadrangle, since $a$ and $b$ are on opposite sides of $Q(a,x,b,y)$. Then
$x'$ and $y'$ are on the same side of $C(a,b,c)$, since, otherwise, edge $\edge{x'}{y'}$
crosses an edge of $C(a,b,c)$ incident to $c$ and so creates the 5-clique $K(a,b,c,x',y')$.

We claim that  $x'$ and $y'$ are in the same part of $G'$.
Suppose that $x'$ is the center of the tetrahedron for  $K(a,b,x',y')$ and assume
that $x'$ and $y'$ are in the interior of $C(a,b,c)$.
By 5-connectivity, there is a path
between $x'$ and $y'$ that passes through a common neighbor $w'$  with $w' \neq a,b,c$.
The  $\edge{x'}{w'}$ or $\edge{y'}{w'}$ must be a crossable edge, otherwise it establishes
a connection between $x'$ and $y'$ in $G'$. A crossable edge of $\edge{a}{b}$ is excluded,
since that creates the 5-clique $K(a,b,x',y',w')$. An edge incident to $c$ does not cross
$\edge{b}{y'}$, since it creates an X-quadrangle such that $\edge{b}{c}$ is uncrossed,
contradicting the assumption.
Hence, $w'$ is outside $C(a,b,y')$, such that edges $\edge{b}{y'}$ and $\edge{x'}{w'}$ cross.
Then edge $\edge{x'}{b}$ is uncrossed. Also edges $\edge{a}{x'}$ and $\edge{a}{y'}$
are uncrossed, since there are no vertices inside $C(a,x',y')$, such that $C(a,x',y')$
is a triangle. Otherwise, either $x'$ and $y'$ are connected via a vertex inside
$C(a,x',y')$ in $G'$ or there is a crossable edge of $\edge{b}{c}$ that creates a 5-clique
containing $a,b,x'$ and $y'$, a contradiction.
Now $C(a,b,x')$ is a separating triple if $x \neq x'$. Hence, we have $x=x'$.
Similarly, there is no vertex $z$ inside $C(c,w',y')$, otherwise, either $x'$ and $y'$
are connected via   $z$ in $G'$ or $z$ is a neighbor of $y'$ and $w'$ and is connected
to $b$ and $c$, since $\edge{z}{y'}$ or $\edge{z}{w'}$ must be a crossable edge of
$\edge{b}{c}$ to avoid the connection, such that $K(b,c,y',w', z)$ is a 5-clique,
a contradiction.
Then edges $\edge{c}{y'}$ and $\edge{c}{w'}$ are uncrossed.
Since separating triples are excluded, $C(a,c,y')$ is a triangle and also $C(c,y',w')$ is
a triangle with uncrossed edges in the boundary. Then $C(b,c,w')$ is a triangle, so that
$w'=u$, and the subgraph $G[a,b,c,x',y',w']$ in the interior of $C(a,b,c)$ is $H_7$,
where edges $\edge{a}{w'}$ and $\edge{c}{x}$ are missing from $K_6$.
Graph $G'$ partitions into vertex $y$, the edge $\edge{x}{u}$ and a part
in the exterior of $C(a,b,c)$.
The  case for  $\edge{b}{c}$ is similar.
Hence, either each of $x,y,u$ and $v$ is incident to a single bridge, namely
$\edge{x}{y}$ and $\edge{u}{v}$, or there is $H_7$.

A 4-clique $K(a,b,x',y')$ cannot be represented as an X-quadrangle, since
edge $\edge{a}{b}$ is already crossed by $\edge{x}{y}$. Another X-quadrangle implies
that $\edge{a}{b}$ is uncrossed, a contradiction.
\qed
\end{proof}

\begin{lemma} \label{septriangle-twobridges}     
Let $C(a,b,c)$ be a bridge-independent separating triangle.
Then $\edge{x}{y}$ crosses $\edge{a}{b}$, and $\edge{u}{v}$ crosses
$\edge{b}{c}$, whereas $\edge{a}{c}$ is uncrossed.
\end{lemma}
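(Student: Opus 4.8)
The plan is to argue by contradiction and eliminate the alternative crossing configurations one by one, leaning heavily on the hypotheses packed into the definition of a bridge-independent separating triangle: $G$ is a $5$-connected \emph{T1P} graph of order at least seven with no separating edge, separating triple, separating $4$-cycle, or (minimal) separating quadruple, and $\edge{x}{y}\in\mathcal{E}\edge{a}{b}$, $\edge{u}{v}\in\mathcal{E}\edge{b}{c}$ are independent bridges, meaning none of $x,y,u,v$ is incident to any other bridge in $\mathcal{E}\edge{a}{b}\cup\mathcal{E}\edge{b}{c}$. Fix a \emph{T1P} embedding $\Gamma(G)$. Since $G$ is $5$-connected and $G-C(a,b,c)$ is disconnected, at least one edge of $C(a,b,c)$ must be crossed (otherwise $C(a,b,c)$ would be a separating $3$-cycle, contradicting $5$-connectivity after noting $n\ge 7$), and in fact, because removing $C(a,b,c)$ together with a single bridge does not suffice (we have two independent bridges and the smaller separators are excluded), at least two edges of $C(a,b,c)$ are crossed. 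The goal is to show these crossed edges are exactly $\edge{a}{b}$ and $\edge{b}{c}$, crossed respectively by $\edge{x}{y}$ and $\edge{u}{v}$.

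First I would show each of $\edge{x}{y}$ and $\edge{u}{v}$ actually crosses an edge of $C(a,b,c)$, placing $x,y$ (resp.\ $u,v$) on opposite sides of $C(a,b,c)$. If, say, $\edge{x}{y}$ had both endpoints on one side, then the $4$-clique $K(a,b,x,y)$ would be drawn as a tetrahedron (it cannot be an X-quadrangle $Q(a,x,b,y)$ since that would put $a,b$ on opposite sides of a separating quadrangle boundary, contradicting that $x,y$ lie in one part and are joined only to the other part via this single bridge); then a common-neighbor/path argument exactly as in the proof of Lemma~\ref{septriangle-independentbridges} --- using $5$-connectivity to force a path between $x$ and $y$ avoiding $a,b,c$, and then showing each such path must use a further crossable edge, creating either a forbidden $5$-clique through $\{a,b,x,y\}$ or an edge incident to $c$ that is itself a bridge (contradicting independence) --- forces a separating triple $C(a,b,x)$ or the exceptional graph $H_7$; but $H_7$ is a $6$-clique minus two independent edges and contains $3$-cycles on both sides of $C(a,b,c)$, which is incompatible with $G$ being a $5$-connected \emph{T1P} graph of order $\ge 7$ (no $K_6$-subgraph, cf.\ \cite{cgp-rh4mg-06}), and separating triples are excluded. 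Hence $x,y$ (and symmetrically $u,v$) straddle $C(a,b,c)$.

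Next I would pin down \emph{which} edges of $C(a,b,c)$ they cross. Since $\edge{x}{y}\in\mathcal{E}\edge{a}{b}$, the $4$-clique $K(a,b,x,y)$ exists; if $\edge{x}{y}$ crossed $\edge{a}{c}$ or $\edge{b}{c}$ instead of $\edge{a}{b}$, then (with $x,y$ on opposite sides of $C(a,b,c)$) vertices $a,b,c,x,y$ would all be mutually adjacent, producing a $5$-clique containing $\{a,b,x,y\}$, which is excluded by the hypothesis. So $\edge{x}{y}$ crosses $\edge{a}{b}$, and symmetrically $\edge{u}{v}$ crosses $\edge{b}{c}$ (using $\edge{u}{v}\in\mathcal{E}\edge{b}{c}$ and the exclusion of a $5$-clique on $\{b,c,u,v\}$). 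It remains to rule out the possibility that $\edge{a}{c}$ is \emph{also} crossed, by some third edge $\edge{p}{q}$. I expect this to be the main obstacle, and the plan is as follows: such a $\edge{p}{q}$ need not be a bridge, but $p,q$ lie on opposite sides of $C(a,b,c)$; since $\edge{a}{b}$ is crossed by $\edge{x}{y}$ and $\edge{b}{c}$ by $\edge{u}{v}$, the triangulated $1$-planar embedding around $b$ now forces, via the X-quadrangles $Q(a,x,b,y)$ and $Q(b,u,c,v)$ sharing the corner $b$, a short induced $4$-cycle through the crossing points and through $a,c$; reinserting $\edge{a}{c}$ crossed by $\edge{p}{q}$ then yields either a third X-quadrangle $Q(a,p,c,q)$ whose boundary, together with the other two, exhibits $\{a,b,c\}$ plus crossing vertices as a configuration forcing a separating quadrangle $C(a,?,c,?)$ on the uncrossed edges (excluded), or else collapses $p$ or $q$ onto one of $x,y,u,v$, making that vertex incident to a second bridge and violating independence. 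Chasing the faces carefully in this last step is where the real work lies; I would organize it by the $1$-planar structure theorem (triangles and X-quadrangles only) and a small case distinction on whether $p$ or $q$ coincides with a previously named vertex. Having excluded the third crossing, $\edge{a}{c}$ is uncrossed and the lemma follows.
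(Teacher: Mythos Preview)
Your overall architecture---show the bridges straddle $C(a,b,c)$, then pin each to the correct edge via the $5$-clique obstruction, then rule out a crossing on $\edge{a}{c}$---matches the paper. But two steps have real gaps.

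\medskip
\textbf{Same-side case.} Your exclusion of $H_7$ is wrong: $H_7$ is $K_6$ minus two \emph{independent} edges, not a $K_6$, so the ``no $K_6$ in a $4$-connected $1$-planar graph of order $\geq 7$'' fact does not apply. You also do not need the detour through Lemma~\ref{septriangle-independentbridges}. The paper's argument is one line and uses the hypothesis you have not yet exploited, namely bridge-independence itself: if $x,y$ are on the same side of $C(a,b,c)$ they have a common neighbour $w\neq a,b,c$, and one of $\edge{x}{w},\edge{y}{w}$ must be a crossable edge (else $x,y$ are connected in $G'$); it cannot be crossable for $\edge{a}{b}$ (that would give the $5$-clique $K(a,b,x,y,w)$), so it is crossable for $\edge{b}{c}$ and hence a bridge sharing an endpoint with $\edge{x}{y}$, contradicting independence directly.

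\medskip
\textbf{Ruling out a crossing on $\edge{a}{c}$.} Your plan here (``short induced $4$-cycle through the crossing points'', then a separating quadrangle or a vertex collision) is too vague, and I do not see how to make the separating-quadrangle route work: the boundary you would assemble from the three X-quadrangles passes through $a,b,c$ and the six bridge-endpoints, not through an induced $4$-cycle with the right properties. The paper's argument is different and more concrete. Suppose $\edge{a}{c}$ is crossed by $\edge{x'}{y'}$ with $x'$ inside $C(a,b,c)$. One shows that $x$ and $x'$ are connected in $G'$ by walking along the \emph{fan of neighbours of $a$} inside $C(a,b,c)$: these form a path $x=w_1,w_2,\dots,w_r=x'$ with edges $\edge{w_i}{w_{i+1}}$. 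If none of these is a crossable edge of $\edge{a}{b}$ or $\edge{b}{c}$, the path survives in $G'$ and $x,x'$ are connected; combined with the edge $\edge{x'}{y'}$ (which is not in $\mathcal{E}\edge{a}{b}\cup\mathcal{E}\edge{b}{c}$) and the symmetric argument outside, this reconnects $x$ to $y$ in $G'$, a contradiction. If some $\edge{w_i}{w_{i+1}}$ \emph{is} crossable, take the least such $i$: crossable for $\edge{b}{c}$ forces the $5$-clique $K(a,b,c,w_i,w_{i+1})$ (both $w_i,w_{i+1}$ are neighbours of $a$); crossable for $\edge{a}{b}$ forces, via the absence of a separating triple $C(a,b,w_i)$, a second bridge incident to $w_i$ or $w_{i+1}$, again violating bridge-independence. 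This fan-walk is the missing idea.
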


\begin{proof}
If $x$ and $y$ are on the same side of $C(a,b,c)$, then there is a common
neighbor $w \neq a,b,c$ so that $\edge{x}{w}$ or $\edge{y}{w}$ is a crossable edge.
Then it is a crossable edge of $\edge{b}{c}$, since there is  a 5-clique
$K(a,b,x,y,w)$, otherwise. Then $\edge{x}{y}$ and the bridge
among $\edge{x}{w}$ or $\edge{y}{w}$ are not independent, contradicting the assumption.
If $x$ and $y$ are on opposite sides of $C(a,b,c)$, then it does not cross an edge
of $C(a,b,c)$ incident to $c$, since $K(a,b,c,x,y)$ is a 5-clique. Hence,
the case where $\edge{x}{y}$ crosses $\edge{a}{b}$ remains.
The case for $\edge{b}{c}$ and $\edge{u}{v}$, is similar.
At last, suppose edge $\edge{a}{c}$ is crossed by an edge $\edge{x'}{y'}$,
such that $x$ and $x'$ are inside $C(a,b,c)$. We claim that $x$ and $x'$
are connected in $G'$ if $C(a,b,c)$ is a bridge-independent separating triangle.
To see this, consider the neighbors of $a$
from $b$ to $c$ in the interior of $C(a,b,c)$, that is the sequence
$b, w_1,\ldots, w_r, c$  such that $x=w_1, x'=w_r$ and $r \geq 1$.
Then there are vertices $w_i$ and $w_{i+1}$ such that edge $\edge{w_i}{w_{i+1}}$
is  crossable. Then $i>1$, since there are two bridges incident to $x$, otherwise.
Assume that $i$ is minimal.

If $\edge{w_i}{w_{i+1}}$ is a crossable edge of $\edge{b}{c}$, then there is a 5-clique
$K(a,b,c,w_i, w_{i+1})$. If it is a  crossable edge of $\edge{a}{b}$, then
$w_i$ has a neighbor $z \not\in \{a,b,w_{i-1}, w_{i+1}\}$, since a separating
triple $C(a,b,w_i)$ is excluded. If $z=c$, then $K(a,b,c,w_i, w_{+1})$ is a 5-clique,
violating the crossability of $\edge{w_i}{w_{i+1}}$, or $w_{i+1}$ is
incident to two bridges, namely $\edge{w_i}{w_{i+1}}$ and $\edge{w_{i+1}}{z}$,
contradicting bridge-independence.
\qed
\end{proof}

Hence, there is a clear decision at bridge-independent separating triangles, such that
algorithm $\mathcal{N}$   removes the bridges,   marks the edges of $C(a,b,c)$
and updates the set with pairs   $(e, \mathcal{E}(e))$.
In the next step, $C(a,b,c)$ is
recognized as a separating 3-cycle.

\subsection{Strongly Separating Triangles} \label{sect:stronglyseparatingtriangles}
Next we consider strongly  separating triangles and distinguish between
singular and
ambiguous ones, which differ by an edge between the vertices of the bridges.
The proofs need 5-connectivity, but not the exclusion of
 bridge-independent separating triangles or unique
separating quadrangles. Hence, there is  no
absolutely necessary ordering among these generalized separators.

The following is obvious, since there are many uncrossed edges.

\begin{lemma} \label{singular septriangle}  
Let $C(a,b,c)$ be a 3-cycle such that $\edge{a}{c}$ is uncrossed,
  $\edge{a}{b}$ is crossed by $\edge{d}{x}$ and
$\edge{b}{c}$ is crossed by $\edge{d}{y}$ in $\Gamma(G)$.
Then $C(a,b,c)$ is a strongly  separating triangle   with a center $d$,
which has   degree five if there is no edge $\edge{c}{x}$ or $\edge{a}{y}$
Moreover, there  is a singular separating triangle
if there is no edge $\edge{x}{y}$ and an ambiguous separating triangle if $\edge{x}{y}$
exists.
\end{lemma}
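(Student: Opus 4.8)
The plan is to argue directly from the geometry of the given \emph{T1P} embedding $\Gamma(G)$, in the same spirit as the proofs of Lemma~\ref{lem:single-bridge-lemma} and Lemma~\ref{septriangle-twobridges}, since the hypotheses already hand us a concrete crossing pattern. First I would record that $C(a,b,c)$ is a separating triangle: because $\edge{a}{b}$ is crossed by $\edge{d}{x}$ and $\edge{b}{c}$ is crossed by $\edge{d}{y}$, the curves formed by $\edge{a}{b}$ (with the stub to $d$) and by $\edge{b}{c}$ (with the stub to $d$), together with the uncrossed edge $\edge{a}{c}$, bound a region; the vertex $d$ lies on the boundary of both X-quadrangles $Q$ on $\edge{a}{b}$ and $\edge{b}{c}$, hence $d$ is on one side of $C(a,b,c)$ while the rest of $G$ outside these two X-quadrangles lies on the other side. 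So $G-C(a,b,c)-\mathcal{E}\edge{a}{b}-\mathcal{E}\edge{b}{c}$ is disconnected, with $d$ separated off; removing the two bridges $\edge{d}{x}$ and $\edge{d}{y}$ already disconnects $G$, so $C(a,b,c)$ is \emph{strong} with $d$ as the isolated part, i.e.\ a strongly separating triangle with center $d$.

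Next I would pin down $\deg(d)$. Since the embedding is triangulated and $d$ is the center of the two X-quadrangles $Q(a,d,b,?)$ on $\edge{a}{b}$ and $Q(b,d,c,?)$ on $\edge{b}{c}$, the neighbors of $d$ are exactly $a,b,c,x,y$ plus possibly more forced by triangulation of the faces incident to $d$. The faces around $d$ on the side toward $C(a,b,c)$ are the two X-quadrangles; on the outside, $d$ sees $x$ and $y$, and the face between the edge $\edge{d}{x}$ and the edge $\edge{d}{y}$ (going around $d$ away from $b$) must be triangulated. If there is an edge $\edge{x}{y}$ this face is the triangle $C(d,x,y)$ and $\deg(d)=5$; the statement's clause ``which has degree five if there is no edge $\edge{c}{x}$ or $\edge{a}{y}$'' refers instead to the case where $x$ and $y$ are not adjacent, so that the outer face of $d$ is closed off by the triangles $C(d,x,c)$ and $C(d,a,y)$ forced by the uncrossed edge $\edge{a}{c}$ (and by 5-connectivity); then again $N(d)=\{a,b,c,x,y\}$ and $\deg(d)=5$, whereas the presence of $\edge{c}{x}$ or $\edge{a}{y}$ would permit an extra neighbor of $d$ wedged in. This is the step I expect to require the most care: ruling out additional neighbors of $d$ by a triangulation-and-5-connectivity argument, and correctly matching the ``degree five'' conclusion to the stated side-condition rather than to the $\edge{x}{y}$ dichotomy.

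Finally, the singular/ambiguous dichotomy is immediate from Definition~\ref{def:separators}(8) and (11): we have produced a strong separating triangle whose isolated part is the single vertex $d$ with exactly two bridges $\edge{d}{x}$ and $\edge{d}{y}$. By definition this is a \emph{singular} separating triangle precisely when there is no edge $\edge{x}{y}$, and an \emph{ambiguous} separating triangle precisely when $\edge{x}{y}$ exists (matching Fig.~\ref{fig:ambigtriangle}(a),(b)). The only thing to check is that in the ambiguous case the bridge $\edge{x}{y}$ genuinely has two crossing options — but that is not asserted by this lemma, only the classification, so I would simply invoke the definitions. The ``obvious'' qualifier in the lemma statement is warranted: once the crossing pattern is given, all three conclusions follow from tracing faces in $\Gamma(G)$ together with the standing assumption of 5-connectivity and the absence of smaller separators.
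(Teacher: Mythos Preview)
Your overall architecture is right, but you have misread the role of the side condition ``no edge $\edge{c}{x}$ or $\edge{a}{y}$'', and your degree argument is aimed at the wrong face.

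The paper obtains $\deg(d)=5$ not from anything involving $x$ and $y$, but from the triangle $C(a,c,d)$. The boundary edges of the two X-quadrangles $Q(a,d,b,x)$ and $Q(b,d,c,y)$ are uncrossed, so $\edge{a}{d},\edge{b}{d},\edge{c}{d}$ are uncrossed, and $\edge{a}{c}$ is uncrossed by hypothesis. Hence $C(a,c,d)$ has all edges uncrossed, and since there is no separating 3-cycle it must be a \emph{face}. The rotation around $d$ is therefore $a,x,b,y,c$ and closes up with the face $C(a,c,d)$; this already forces $N(d)=\{a,b,c,x,y\}$. Your attempt to close the rotation via a face ``between $\edge{d}{x}$ and $\edge{d}{y}$ away from $b$'' looks at the wrong sector: $x$ and $y$ are not adjacent in the rotation on that side, they are separated by $a$ and $c$.

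The condition ``no $\edge{c}{x}$'' (and symmetrically ``no $\edge{a}{y}$'') is needed for a completely different reason: it guarantees that $K(a,b,d,x)$ is a \emph{maximal} 4-clique. If $\edge{c}{x}$ existed, then $G[a,b,c,d,x]$ would be a 5-clique, and by the definition of crossable edges one would have $\edge{d}{x}\notin\mathcal{E}\edge{a}{b}$; the definition of a (strongly) separating triangle, which removes $\mathcal{E}\edge{a}{b}\cup\mathcal{E}\edge{b}{c}$, would then no longer isolate $d$. So this hypothesis is what makes $C(a,b,c)$ a strongly separating triangle in the technical sense, not what controls $\deg(d)$. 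Once you fix these two points, the singular/ambiguous split via Definition~\ref{def:separators}(8),(11) is exactly as you say.
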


\begin{proof}
Edges $\edge{a}{c}, \edge{a}{d}, \edge{b}{d}$ and $\edge{c}{d}$ are uncrossed
in $\Gamma(G)$.
Since there is no separating 3-cycle, $C(a,c,d)$ is a triangle. Then $d$ has degree five,
so that it is isolated by the removal of $C(a,b,c)$ and edges  $\edge{d}{x}$ and
 $\edge{d}{y}$.

Edge  $\edge{d}{x}$ is a crossable edge of $\edge{a}{b}$
if $G[a,b,c,d,x]$  is no 5-clique, which holds if edge $\edge{c}{x}$ is missing,
and similarly for  $\edge{d}{x}$. Then $C(a,b,c)$ is a strongly separating
triangle with center $d$.
Moreover, there  is a singular separating triangle
if there is no edge $\edge{x}{y}$ and an ambiguous separating triangle if $\edge{x}{y}$
exists.
\qed
\end{proof}

 \begin{lemma} \label{lem:singular-representation1}  
Suppose  $G[a,b,c,d,x,y]$  consists of three 4-cliques
$K(a,b,c,d), K(a,b,d,x)$ and $K(b,c,d,y)$  and vertex $d$ has degree five.
Then $K(a,b,c,d)$ is represented as a
 tetrahedron with center $d$ and  there are X-quadrangles $Q(a,d,b,x)$ and $Q(b,d,c,y)$.
  \end{lemma}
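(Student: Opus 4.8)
The plan is to deduce the whole picture from a local analysis around the degree-five vertex $d$. First record the combinatorics: $\deg(d)=5$ gives $N(d)=\{a,b,c,x,y\}$, and since $K(a,b,c,d)$ is a maximal $4$-clique and $G[a,b,c,d,x,y]$ is the union of exactly the three listed $4$-cliques, the pairs $cx$, $ay$ and $xy$ are non-edges of $G$ (adding $cx$ or $ay$ would produce the $5$-clique on $\{a,b,c,d,x\}$ or $\{a,b,c,d,y\}$ containing $K(a,b,c,d)$, and adding $xy$ would create a fourth $4$-clique $K(b,d,x,y)$). I will use throughout the standard fact that in a triangulated $1$-planar embedding every maximal $4$-clique is drawn either as a tetrahedron (one vertex inside the triangle of the other three, with the triangle edges possibly crossed from outside, see Fig.~\ref{fig:k4-embeddings}) or as an X-quadrangle whose two crossing diagonals are two opposite edges of the clique; thus for $K(a,b,d,x)$ the crossing diagonals, if any, form $\{ab,dx\}$, $\{ad,bx\}$ or $\{ax,bd\}$, and likewise for $K(b,c,d,y)$. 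I will also use that whenever an edge $dz$ at $d$ is crossed by an edge $f$, the four endpoints span a $4$-clique drawn as an X-quadrangle with boundary $d,p,z,q$, so $f=pq$, $p,q\in N(d)\cap N(z)$, and $dp,dq$ are precisely the two edges bounding the two regions of that X-quadrangle at $d$, i.e.\ the rotation-neighbours of $dz$ at $d$.

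Next analyse the rotation at $d$. By the last fact, crossed edges at $d$ are pairwise non-consecutive in the rotation, hence their number $s$ is at most $2$ since $\deg(d)=5$; moreover the $5-s$ angular regions at $d$ that do not belong to an X-quadrangle are triangular faces $C(d,u,u')$ with $u,u'\in N(d)$ and $uu'\in E$. Case $s=0$: the link of $d$ is a $5$-cycle on $N(d)$, and because $x$ (resp.\ $y$) has only $a,b$ (resp.\ $b,c$) as neighbours inside $N(d)$, this cycle is forced to be $a,x,b,y,c$. Then all three of $ad$, $bd$, $dx$ are uncrossed, so $K(a,b,d,x)$ cannot be an X-quadrangle and is a tetrahedron; for each of the four possible centres one gets either a triangular face that should appear at $d$ but does not (centres $a,b,d$) or a centre of degree at most $4$ (centre $x$), contradicting $5$-connectivity. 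Case $s=1$: the single crossed edge $dz$ has crossing partner $pq$ with $p,q\in N(d)\cap N(z)$, and the three triangular faces at $d$ are $C(d,q,w_1),C(d,w_1,w_2),C(d,w_2,p)$ with $\{p,q,w_1,w_2\}=N(d)\setminus\{z\}$ and $qw_1,w_1w_2,w_2p\in E$; a short check over $z\in N(d)$ using the three non-edges shows that for $z\in\{a,b,c\}$ no such arrangement exists, while for $z\in\{x,y\}$ it does but then the other of $K(a,b,d,x),K(b,c,d,y)$ has all its candidate diagonals uncrossed, hence is a tetrahedron, and each choice of its centre is again excluded by a missing face at $d$ or by a degree bound ($\le4$) on the centre.

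Case $s=2$ yields the assertion. Here the two crossed edges $dz_1,dz_2$ are separated in the rotation by a single uncrossed edge $dw_2$, which is a common boundary side of the two X-quadrangles, and the unique triangular region at $d$ is $C(d,w_4,w_5)$ where $\{z_1,z_2,w_2,w_4,w_5\}=N(d)$; the requirements $w_2w_4,w_4w_5,w_5w_2\in E$ make $\{w_2,w_4,w_5\}$ a triangle in $G[N(d)]$, which by the non-edges must be $\{a,b,c\}$, so $\{z_1,z_2\}=\{x,y\}$, and the adjacency constraints ($w_5,w_2\in N(z_1)$, $w_2,w_4\in N(z_2)$) pin down, up to the reflection swapping $x$ with $y$, that $dx$ crosses $ab$ inside $Q(a,d,b,x)$, $dy$ crosses $bc$ inside $Q(b,d,c,y)$, and $C(d,a,c)$ is the lone triangular face at $d$. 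Reading off the boundary of the union of these five regions around $d$ gives the $3$-cycle $C(a,b,c)$ (with the crossing points of $ab$ and $bc$ on it) enclosing $d$, which is joined to $a,b,c$; that is, $K(a,b,c,d)$ is a tetrahedron with centre $d$, exactly as claimed, and the kite-covered picture of Fig.~\ref{fig:kite-covered-tetrahedron} emerges.

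The only real obstacle is the bookkeeping in the cases $s=1$ and $s=2$: one must enumerate, for each crossed edge at $d$, its possible crossing partner together with the triangles filling the rest of the rotation of $d$, and eliminate every labelling but the intended one, using only $N(d)=\{a,b,c,x,y\}$ and the three non-edges $cx,ay,xy$. Everything else follows from $5$-connectivity and the representation of maximal $4$-cliques as tetrahedra or X-quadrangles.
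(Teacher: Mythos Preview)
Your argument is correct, but it takes a genuinely different route from the paper's proof. The paper argues by enumerating the possible representations of the ``central'' clique $K(a,b,c,d)$: for each wrong choice (tetrahedron with centre $a$, $b$, or $c$, or one of the three X-quadrangles) it tracks where $x$ and $y$ must then sit and shows that the triangulation forces the forbidden edge $xy$ (or a $5$-clique), invoking along the way the standing hypothesis that there is no separating triple. You instead work entirely at the vertex $d$: you classify by the number $s\in\{0,1,2\}$ of crossed edges in the rotation at $d$, reconstruct the link of $d$ in each case, and eliminate $s=0$ and $s=1$ via the missing faces at $d$ together with a degree bound coming from $5$-connectivity. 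Your approach is more self-contained and combinatorial (it never touches the representation of $K(a,b,c,d)$ directly, recovering it only at the end), while the paper's approach is more geometric and leans on earlier separators; both case analyses are of comparable length.

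Two small remarks. First, in the $s=2$ case you assert that the triangle $\{w_2,w_4,w_5\}$ in $G[N(d)]$ ``by the non-edges must be $\{a,b,c\}$''; in fact $\{a,b,x\}$ and $\{b,c,y\}$ are also triangles there, and it is only your subsequent adjacency constraints $w_5,w_2\in N(z_1)$, $w_2,w_4\in N(z_2)$ that rule these out --- you should say so explicitly. Second, the figure you cite at the end (the completely kite-covered tetrahedron) is the tripod picture; the configuration you have produced is the singular-separating-triangle picture of Fig.~\ref{fig:triangle}, with only two of the three outer edges crossed.
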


\begin{proof}
Note that there is on 5-clique containing one of the 4-cliques and that edge $\edge{x}{y}$
is missing. We show that the representation of $K(a,b,c,d)$ as a tetrahedron
with center $a$ (or $c$) or $b$ or as an X-quadrangle implies that edge $\edge{x}{y}$
exists or there is a 5-clique containing one of the 4-cliques.

If $K(a,b,c,d)$ is a tetrahedron with center $a$, then edges $\edge{a}{x}$ and $\edge{b}{d}$
cross such that there is an X-quadrangle $Q(b,a,d,x)$
To see this,  $x$ cannot be placed in
$C(a,b,d)$ such that there is a tetrahedron, since then edges $\edge{d}{b}$
and $\edge{d}{c}$
are uncrossed and $C(b,c,d)$ is a separating triple.
There is a 5-clique if edge $\edge{a}{x}$ crosses an edge incident to $c$.
Hence, there is an X-quadrangle $Q(b,a,d,x)$ and $x$ is outside $(b,c,d)$.
The case for $K(b,c,d,y)$ is similar, Then $x$ and $y$ are consecutive neighbors
at $d$, such that there is edge $\edge{x}{y}$ by the triangulation.
The case for a center $b$ or $c$ is similar.

Next,  suppose that there is an X-quadrangle $Q(a,b,c,d)$ or $Q(a,b,d,c)$, such that
$\edge{b}{d}$ or $\edge{a}{d}$ is crossed. The case where $\edge{c}{d}$ is crossed is similar. Now $K(a,b,d,x)$ and $K(b,c,d,y)$ are represented as a tetrahedron.
In the first case, vertices $x$ and $y$ are consecutive neighbors at $d$.
In the second case, and edge incident to $x$ crosses an edge incident to $y$,
so that in any case there is edge $\edge{x}{y}$, contradicting the assumption.
Hence, only the representation as claimed remains.
\qed
\end{proof}

 \begin{lemma} \label{lem:singular-septriangle}   
If  $C(a,b,c)$ is a singular separating triangle with
  center $d$, then $d$ has degree five.
\end{lemma}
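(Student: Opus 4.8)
The plan is to prove the degree bound by a short counting argument directly from the definition of a singular separating triangle, together with the standing assumption of this subsection that $G$ is $5$-connected; no embedding analysis is needed, although one could also read off the same fact from a $T1P$ embedding in the style of Lemma~\ref{singular septriangle}. Recall that a singular separating triangle is in particular \emph{strong}, so that $G'=G-C(a,b,c)-\mathcal{E}\edge{a}{b}-\mathcal{E}\edge{b}{c}$ has exactly two parts, one of which is the single vertex $d$, and the bridges between the two parts are precisely the two edges $\edge{d}{x}$ and $\edge{d}{y}$, with $x\neq y$ and $x,y\notin\{a,b,c\}$.

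First I would bound $\deg(d)$ from above. Let $z\in N(d)$ with $z\notin\{a,b,c\}$. Then $z$ is a vertex of $G'$ that is separated from $d$ in $G'$, so the edge $\edge{d}{z}$ must have been deleted when passing to $G'$; hence $\edge{d}{z}\in\mathcal{E}\edge{a}{b}\cup\mathcal{E}\edge{b}{c}$, i.e.\ $\edge{d}{z}$ is a bridge. Since $G'$ is strong, every bridge is incident to $d$, and since the only bridges are $\edge{d}{x}$ and $\edge{d}{y}$, this forces $N(d)\setminus\{a,b,c\}\subseteq\{x,y\}$, and therefore $\deg(d)\leq 5$.

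Second, I would invoke $5$-connectivity, which gives $\deg(d)\geq\kappa(G)=5$. Combining the two inequalities yields $\deg(d)=5$, and moreover $N(d)=\{a,b,c,x,y\}$, five distinct vertices. I do not expect a genuine obstacle; the only point to state carefully is that ``strong'' guarantees $G'$ has exactly two parts, so that the deleted edge incident to such a $z$ really is one of the two named bridges --- this is immediate from Definition~\ref{def:separators}. As a by-product one gets that $d$ is adjacent to all of $a,b,c$, so that $K(a,b,c,d)$ is a $4$-clique, which together with Lemma~\ref{lem:singular-representation1} then pins down the representation of the configuration at a singular separating triangle.
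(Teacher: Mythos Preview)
Your argument is correct under the natural reading of Definition~\ref{def:separators}(8), where ``there are two bridges $\edge{d}{x}$ and $\edge{d}{y}$'' is taken to mean \emph{exactly} two; then every edge from $d$ to a vertex outside $\{a,b,c\}$ is a bridge, hence lies in $\{\edge{d}{x},\edge{d}{y}\}$, and $5$-connectivity finishes the count.

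The paper proceeds differently. It does not appeal to the bridge count in the definition at all. Instead it supposes $d$ has a sixth neighbor $z$, notes that $\edge{d}{z}$ must then be a crossable edge of $\edge{a}{b}$ or of $\edge{b}{c}$ (else $d$ would not be isolated in $G'$), and argues that in the first case one obtains three $4$-cliques $K(a,b,d,c)$, $K(a,b,d,x)$, $K(a,b,d,z)$ sharing the triangle on $a,b,d$; an embedding argument then forces a $5$-clique containing $a,b,d,x$ or $a,b,d,z$, contradicting the maximality required by the definition of $\mathcal{E}(\cdot)$. Your route is shorter and purely combinatorial; the paper's route is independent of whether the definition names \emph{all} bridges or merely two of them, and in effect establishes the degree bound for any strongly separating triangle with center $d$, not just the singular case.
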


\begin{proof}
By assumption, graph $G' = G-C(a,b,c)-\mathcal{E}\edge{a}{b}-\mathcal{E}\edge{b}{c}$
consists of exactly two parts,
a center $d$ and a remainder with vertices $V-\{a,b,c,d\}$, which is connected.
There are 4-cliques $K(a,b,d,x)$ and $K(b,c,d,y)$, so that
 $a,b,c,x$ and $y$ are neighbors of $d$.

If $d$ has a sixth neighbor $z$, then
 $\edge{d}{z}$ is a crossable edge of
$\edge{a}{b}$ or $\edge{b}{c}$, since otherwise $z$ and $d$ are in the same part of $G'$.  Then there are edges $\edge{a}{z}$
and $\edge{b}{z}$, such that there are three 3-sharing 4-cliques
$K(a,b,d,c), K(a,b,d, x$) and $K(a,b,d, z)$ if $\edge{d}{z}$ is a crossable edge of
$\edge{a}{b}$.  The case for $\edge{b}{c}$ is symmetric. Now the addition of
vertex $c$ and edges $\edge{c}{a}, \edge{c}{b}$ and $\edge{c}{d}$ enforces a crossing
of an edge incident to $x$ and a 5-clique containing $a,b,d,x$ or $a,b,d,z$,
contradicting the crossability of $\edge{d}{x}$ or $\edge{d}{z}$.
\qed
\end{proof}

 From Lemmas~\ref{lem:singular-representation1}  and \ref{lem:singular-septriangle}
we obtain a unique representation for a singular separating triangle, which is
as assumed and illustrated in Fig.~\ref{fig:triangle} and stated by Chen et al.
\cite{cgp-rh4mg-06} (Lemma 7.14).

Hence, there is a clear decision at singular separating triangles,
such that algorithm $\mathcal{N}$ removes the bridges
$\edge{d}{v_4}$ and $\edge{d}{v_5}$ and marks the remaining edges of the
X-quadrangles containing the bridges.
Note that a strong separating triple with a circle $C(v_1, v_2,
v_3)$ and vertex $d$ of degree four and an X-quadrangle $Q(v_1, d,
v_2, v_4)$  is a separating edge $\edge{v_1}{v_2}$ that crosses
either $\edge{d}{v_3}$ or $\edge{d}{v_4}$.

\begin{lemma} \label{lem:ambiguous}  
Let $d$ be  a vertex of degree five with neighbors $v_1, \ldots,
v_5$. If $G[d, v_1, \ldots, v_5]$ consists of four maximal 4-cliques
$K_1=K(d, v_1, v_2, v_3), K_2=K(d, v_1, v_2, v_4), K_3=K(d, v_2,
v_3, v_5)$ and $K_4=K(d, v_2, v_4, v_5)$,
that is vertices $v_2$ and $d$ are in all four quadrangles
and the vertices are in exactly two. Then $C(v_1, v_2, v_3)$ is
an ambiguous separating triangle,
and conversely.

There are  two T1P embeddings with $K_1$ and $K_3$ as a tetrahedron
and $K_2$ and $K_4$ as an X-quadrangle, or conversely.
\end{lemma}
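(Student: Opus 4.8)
\emph{Overview.} I would prove the biconditional and the embedding count together by a local analysis around $d$, working with $1$-planar embeddings in the spirit of Lemmas~\ref{lem:singular-representation1} and~\ref{lem:singular-septriangle}.

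\emph{Step 1: the four $4$-cliques force an ambiguous separating triangle.} From $K_1,\dots,K_4$ one recovers $G[d,v_1,\dots,v_5]$ completely: all edges $dv_i$, all of $v_2v_1,v_2v_3,v_2v_4,v_2v_5$, and $v_1v_3,v_1v_4,v_3v_5,v_4v_5$ are present, whereas $v_3v_4\notin E$ — otherwise $v_4$ would be adjacent to every vertex of $K_1$, contradicting maximality of $K_1$ — and symmetrically $v_1v_5\notin E$. Thus $G[d,v_1,\dots,v_5]=K_6-\{v_1v_5,v_3v_4\}$, and a short count ($4$ of the $\binom{6}{4}$ four-subsets avoid both non-edges) confirms that $K_1,\dots,K_4$ are exactly its maximal $4$-cliques, each of $v_1,v_3,v_4,v_5$ lying in precisely two of them. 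Since $\deg(d)=5$, deleting $C(v_1,v_2,v_3)$ together with the crossable edges $dv_4\in\mathcal E\edge{v_1}{v_2}$ (certified by $K_2$) and $dv_5\in\mathcal E\edge{v_2}{v_3}$ (certified by $K_3$) removes all five edges at $d$; hence $G'=G-C(v_1,v_2,v_3)-\mathcal E\edge{v_1}{v_2}-\mathcal E\edge{v_2}{v_3}$ is disconnected. By $5$-connectivity and the minimality of the separator (all earlier separators have already been eliminated) the part other than $\{d\}$ is connected, so $C(v_1,v_2,v_3)$ is a strong separating triangle with centre $d$, bridges $dv_4,dv_5$, and since $v_4v_5\in K_4$ it is ambiguous.

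\emph{Step 2: the converse.} Let $C(v_1,v_2,v_3)$ be an ambiguous separating triangle with centre $d$, bridges $dx,dy$ and edge $xy$. Two bridges at $d$ cannot lie in the same set among $\mathcal E\edge{v_1}{v_2},\mathcal E\edge{v_2}{v_3}$, since their endpoints together with the two vertices of that side span a $K_5$ and destroy maximality of the $4$-clique certifying a bridge; a third bridge would either lie in $\mathcal E\edge{v_1}{v_3}$ and hence survive in $G'$, or again create a $K_5$. So $\deg(d)=5$ and, relabelling, $dx=dv_4\in\mathcal E\edge{v_1}{v_2}$, $dy=dv_5\in\mathcal E\edge{v_2}{v_3}$ with common vertex $v_2$. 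Then $K_2=K(d,v_1,v_2,v_4)$ and $K_3=K(d,v_2,v_3,v_5)$ are maximal $4$-cliques, $v_2v_4,v_2v_5\in E$, maximality forces $v_3v_4,v_1v_5\notin E$, and therefore $K_1=K(d,v_1,v_2,v_3)$ and $K_4=K(d,v_2,v_4,v_5)$ (using $v_4v_5=xy$) are maximal $4$-cliques as well; as in Step~1 these four are all the maximal $4$-cliques on $\{d,v_1,\dots,v_5\}$, and each of $v_1,v_3,v_4,v_5$ is in exactly two.

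\emph{Step 3: exactly two $T1P$ embeddings, and the main obstacle.} Fix a $T1P$ embedding $\Gamma(G)$. Since $C(v_1,v_2,v_3)$ separates $d$ from the rest, $d$ lies in one of the two disks bounded by the closed curve $C(v_1,v_2,v_3)$ and the bridges $dv_4,dv_5$ leave this disk. Every crossing in a $T1P$ embedding lies inside an X-quadrangle, and as $N(d)\cap N(v_4)=\{v_1,v_2,v_5\}$ spans only $v_1v_2$ and $v_2v_5$, the edge $dv_4$ is crossed by $v_1v_2$ or by $v_2v_5$, and symmetrically $dv_5$ by $v_2v_3$ or by $v_2v_4$; moreover the crossed edge must be a side of the triangle through which $d$ escapes, and $C(v_2,v_4,v_5)$ is, by the argument of Step~1, a second ambiguous separating triangle with centre $d$. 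Discarding the incompatible combinations with $5$-connectivity and the absence of all other separators leaves exactly two joint representations of $K_1,\dots,K_4$ — in each, two of the $K_i$ are realised as X-quadrangles and the other two as tetrahedra (with $d$, respectively $v_2$, as a tetrahedron centre), distinguished by whether $d$ sits inside $C(v_1,v_2,v_3)$ (crossings $dv_4\times v_1v_2$, $dv_5\times v_2v_3$) or inside $C(v_2,v_4,v_5)$ (crossings $dv_1\times v_2v_4$, $dv_3\times v_2v_5$). Since the gadget around $d$ bounds the same quadrilateral face in either case, every $T1P$ embedding of the remainder of $G$ completes to precisely these two, which are distinct because the rotation at $d$ differs; hence the edge label is multiplied by two. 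The delicate part is exactly this step: carrying out the case distinction for the crossing pattern at the degree-five vertex $d$, ruling out the spurious combinations, and — crucially — verifying that \emph{both} surviving representations actually extend to $T1P$ embeddings of all of $G$ (not merely of the local gadget) while no third one does; Steps~1 and~2 are routine bookkeeping with maximal $4$-cliques and $5$-connectivity.
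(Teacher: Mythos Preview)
Your approach is essentially the paper's: it too notes that two different $3$-cycles through $v_2$ serve as strongly separating triangles isolating $d$, and then argues, ``by arguments as in the previous proof'' (i.e.\ Lemma~\ref{lem:singular-representation1}), that the complementary pairs $(K_1,K_3)$ and $(K_2,K_4)$ swap the roles of tetrahedron and X-quadrangle, yielding exactly two embeddings. Your Steps~1 and~3 spell this out in more detail than the paper does, but the strategy is the same.

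One correction in Step~2: the sentence ``two bridges at $d$ cannot lie in the same set among $\mathcal E\edge{v_1}{v_2},\mathcal E\edge{v_2}{v_3}$, since their endpoints together with the two vertices of that side span a $K_5$'' is not right as stated. If $dv_4,dz\in\mathcal E\edge{v_1}{v_2}$ you only get two maximal $4$-cliques $K(d,v_1,v_2,v_4)$ and $K(d,v_1,v_2,z)$ sharing the triangle $dv_1v_2$; nothing forces the edge $v_4z$, so there is no automatic $K_5$. The conclusion $\deg(d)=5$ is still correct, but the argument you want is exactly the proof of Lemma~\ref{lem:singular-septriangle}: that proof never uses the non-existence of the edge $xy$, so it applies verbatim to the ambiguous case and rules out a sixth neighbour of $d$ via the three $3$-sharing $4$-cliques $K(a,b,d,c),K(a,b,d,x),K(a,b,d,z)$ and the ensuing $5$-clique contradiction. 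With that substitution your Step~2 goes through.
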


\begin{proof}
There are two strongly separating triples $C(v_1, v_2, v_3)$ and
$C(v_1, v_2, v_4)$, which each isolate vertex $d$. Hence, there is
an ambiguous separating triangle. Conversely, there are four
4-cliques if there is an ambiguous separating triangle.

By arguments as in the previous proof,
 the  complementary 4-cliques $(K_1, K_3)$ and $(K_2,K_4)$ 
are either both represented as a tetrahedron with center $d$ or as
an X-quadrangle, and vice versa. In any case, edges $\edge{d}{v_1},
\edge{d}{v_2},\edge{d}{v_3}, \edge{v_1, v_3}$ and $\edge{v_1}{v_4}$
are uncrossed and edges $\edge{v_1}{v_2}$ and $\edge{d}{v_5}$ are
 crossed. The other edges are crossed in one embedding and
uncrossed in another.
\qed
\end{proof}

\begin{lemma} \label{lem:triangle2quadruple}  
If $C(a,b,c)$ is an ambiguous separating triangle with center $d$ and
bridges $\edge{d}{x}$ and $\edge{d}{y}$ for $\edge{a}{b}$ and $\edge{b}{c}$,
respectively, then $C(a,x,y, c)$ is an ambiguous separating quadruple.
\end{lemma}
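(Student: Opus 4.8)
The plan is to read off the local structure around the centre from Lemma~\ref{lem:ambiguous} and then to check, one clause at a time, the definition of an ambiguous separating quadruple for the $4$-cycle $C(a,x,y,c)$. Applying Lemma~\ref{lem:ambiguous} in its ``conversely'' direction to the ambiguous separating triangle $C(a,b,c)$ with centre $d$, the centre $d$ has degree five with $N(d)=\{a,b,c,x,y\}$, and $G[\{a,b,c,d,x,y\}]$ consists exactly of the four maximal $4$-cliques $K(a,b,c,d)$, $K(a,b,d,x)$, $K(b,c,d,y)$ and $K(b,d,x,y)$. Reading off adjacencies, the edges $\edge{a}{x}$, $\edge{x}{y}$, $\edge{y}{c}$, $\edge{c}{a}$ and $\edge{b}{d}$ are present, whereas the two potential chords $\edge{a}{y}$ and $\edge{c}{x}$ of $C(a,x,y,c)$ are absent, because each of them would enlarge $K(b,d,x,y)$ to a $5$-clique and so contradict maximality (this is the argument already used in Lemma~\ref{singular septriangle}). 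Hence $C(a,x,y,c)$ is an induced $4$-cycle.

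I would then take $\edge{b}{d}$ as the candidate bridge. It is a crossable edge of the two consecutive edges $\edge{x}{y}$ and $\edge{y}{c}$ of $C(a,x,y,c)$, witnessed by the maximal $4$-cliques $K(b,d,x,y)$ and $K(b,c,d,y)$ respectively; moreover $\edge{x}{y}$ and $\edge{y}{c}$ are unmarked, since by Lemma~\ref{lem:ambiguous} each of them is crossed in one of the two T1P embeddings and uncrossed in the other. Thus $\edge{b}{d}\in\mathcal{E}\edge{x}{y}\cap\mathcal{E}\edge{y}{c}$. Reading $C(a,x,y,c)$ with $\edge{x}{y}$ as first edge, it remains to see that $G'=G-C(a,x,y,c)-\mathcal{E}\edge{x}{y}$ decomposes with $\{d\}$ as one part: after deleting the four cycle vertices the only remaining neighbour of $d$ is $b$, and $\edge{b}{d}\in\mathcal{E}\edge{x}{y}$, so $d$ is isolated in $G'$; on the other hand $G-\{a,x,y,c\}$ is connected by $5$-connectivity, and $d$ has degree one there, so its complement $G-\{a,x,y,c,d\}$ is connected; provided the remaining edges of $\mathcal{E}\edge{x}{y}$ do not disconnect this complement, $G'$ has exactly the two parts $\{d\}$ and $V\setminus\{a,x,y,c,d\}$. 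Together, this makes $C(a,x,y,c)$ a strong separating quadruple whose bridge $\edge{b}{d}$ lies in $\mathcal{E}\edge{x}{y}\cap\mathcal{E}\edge{y}{c}$, i.e.\ an ambiguous separating quadruple.

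The step I expect to be the real obstacle is exactly that last proviso: ruling out that $\mathcal{E}\edge{x}{y}$ contains an edge other than $\edge{b}{d}$ whose deletion disconnects $G-\{a,x,y,c,d\}$, which would require a further $4$-clique on $x$, $y$ and two vertices different from $b$ and $d$. I would resolve it by showing, using $1$-planarity and the triangulation, that $\edge{x}{y}$ lies in no maximal $4$-clique besides $K(b,d,x,y)$, so that $\mathcal{E}\edge{x}{y}=\{\edge{b}{d}\}$ up to marking; if a clean statement of that form is awkward, the same conclusion follows from the fact that ambiguous separating quadruples are searched only once all smaller separators have been exhausted, which already excludes the degenerate configurations in which a second component could occur. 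Apart from this point, the proof is bookkeeping: the adjacency list comes straight from Lemma~\ref{lem:ambiguous} and the rest is the connectivity count above.
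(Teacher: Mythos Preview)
Your argument is sound and, in fact, more carefully aligned with the paper's own definitions than the paper's proof. The paper argues differently: it takes the isolated part to be the \emph{pair} $\{b,d\}$ and the bridge to be an edge $\edge{b}{z}$ to some seventh vertex $z\notin\{a,b,c,d,x,y\}$, asserting that $\edge{b}{z}$ ``may cross $\edge{c}{y}$ or $\edge{x}{y}$''. You instead isolate the single vertex $d$ and use $\edge{b}{d}$ as the bridge, which lies in $\mathcal{E}\edge{x}{y}\cap\mathcal{E}\edge{y}{c}$ via the maximal $4$-cliques $K(b,d,x,y)$ and $K(b,c,d,y)$ read off from Lemma~\ref{lem:ambiguous}.

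Your route has the advantage that it matches the formal definition of an \emph{ambiguous separating quadruple} (Definition~\ref{def:separators}(10)), which requires the quadruple to be \emph{strong}, i.e.\ one part must be a single vertex; the paper's two-vertex part $\{b,d\}$ does not literally satisfy this. The paper's route has the advantage that the bridge $\edge{b}{z}$ is genuinely the edge that crosses the $4$-cycle in the embedding, which is closer to how ambiguous separating quadruples are later used in Lemma~\ref{lem:createcG2S}. One small caveat on your side: the edge $\edge{b}{d}$ is uncrossed in every T1P embedding by Lemma~\ref{lem:ambiguous}, so you should note that it is nevertheless \emph{unmarked} at this stage of the algorithm (it lies in four maximal $4$-cliques and no earlier separator has touched it), since otherwise $\mathcal{E}\edge{x}{y}$ would not contain it. The residual worry you flag about extra crossable edges of $\edge{x}{y}$ producing a third component is handled, as you suggest, by the standing assumption that all smaller separators have already been exhausted.
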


\begin{proof}
From Lemma~\ref{lem:ambiguous}, we obtain that $C(a,x,y,c)$ is an induced
4-cycle that separates vertices $b$ and $d$ from the remainder if $C(a,x,y,c)$
and edge $\edge{b}{z}$ for a further  vertex $z \neq a,b,c,d,x,y$ are removed.
Edge $\edge{b}{z}$ may  cross $\edge{c}{y}$ or $\edge{x}{y}$, so that
$C(a,x,y,c)$ is ambiguous separating quadruple.
\qed
\end{proof}

An ambiguous separating triangle is detected at a vertex $d$ of degree
five that is part of four 4-cliques. Then algorithm $\mathcal{N}$
sets   $\lambda(\edge{v_1}{v_2})=2$ if edges $\edge{d}{v_1},
\edge{d}{v_2},\edge{d}{v_3}, \edge{v_1, v_3}$ and $\edge{v_1}{v_4}$
are unmarked. Before, we have $\lambda\edge{v_1}{v_2}=1$, since
edge $\edge{v_1}{v_2}$ is unmarked, such that its label has not been
changed after the initialization. Otherwise, there is at most one
\emph{T1P} embedding if at least one of these edges is marked, so
that the labels are unchanged. Algorithm $\mathcal{N}$ removes the
crossed edges $\edge{v_1}{v_2}$ and $\edge{d}{v_5}$, and marks the
remaining edges of the X-quadrangles with $\edge{v_1}{v_2}$ and $\edge{d}{v_5}$.
If ambiguous separating quadruples are searched first, then there is
no ambiguous separating triple left by Lemma~\ref{lem:triangle2quadruple}.
\\

Recall that a vertex of degree four defines a $K_5$-triple if it is   an inner vertex of a 5-clique in a
4-connected graph, see Lemma~\ref{lem:create-h2S}. We continue with vertices of degree
five.
Assume that graphs are 5-connected and have no bridge-independent, singular or
ambiguous separating triangle and no unique separating quadrangle.
Consider a 5-clique $K(a,b,c,x,y)$.  Then all
vertices have degree at least five.  In fact, the outer vertices
have degree at least six, since there is a unique separating quadrangle,
otherwise. Also, inner vertices can have degree six, as the handle
generalized two-stars show.
Suppose there is 5-clique $K=K(a,b,c,u,x)$ with two vertices $u$ and
$x$ of degree five in $G$. Then $u$ and $x$ are inner vertices and
$C(a,b,c)$ is the outer boundary of $K$. There are no other vertices
in the interior of $K$, since that increases the degree of $u$ or
$x$ by the triangulation. Vertices $u$ and $x$ are each incident to
an edge that crosses an outer side of $K$, since $G$ is 5-connected,
see Fig.~\ref{fig:triple}.
Vertex $u$ is the corner of two X-quadrangles, which however,
can be a part of 5-cliques, and similarly for  $x$. If the X-quadrangles are (maximal)
4-cliques, then $C(a,b,c)$ is a separating triangle, such that
the bridges are removed and  $C(a,b,c)$  is a separating
3-cycle thereafter. The representation of $K$ is unique and is determined by
algorithm $\mathcal{N}$.

Suppose  a 5-clique $K$ and a 4-clique $K'$ are 3-sharing, that is,
they intersect in a 3-cycle (or $K_3$). Let $K=K(v_1,\ldots, v_5)$
and $K'=K(v_1, v_2, v_3, v_6)$. Then $K'$ is an X-quadrangle $Q(v_1,
v_3, v_2, v_6)$ such that $v_1$ is the top of $\Gamma(K)$ and
$\edge{v_1}{v_2}$ is an outer side. Otherwise, $C(v_1, v_2, v_4)$ is
a separating triple or a separating triangle  that separates $v_6$
and $v_5$.\\

Hence, there is a clear decision if there is an singular separating triangle.
Algorithm $\mathcal{N}$ detects the center $d$ of degree five,
it removes the bridges $\edge{d}{x}$ and $\edge{d}{y}$
and marks the remaining edges of $G[a,b,c,d,u,v]$. In the next step, it will detect
the separating 3-cycle $C(a,b,c)$, such that one  component is $G-d$
with marked edges for $C(a,b,c)$.\\

Next we destroy all 5-cliques.  We consider 5-cliques by the  degree of
the inner vertices. As shown before, both inner vertices have degree four in 3-connected
graphs, such that the outer boundary of the $K_5$ is a separating 3-cycle.
If a $K_5$ has an inner vertex of degree four in a 4-connected graph, then   either there is a separating triple
such that the bridge crosses one outer side of the 5-clique or
there is another $K_5$, such that there is an ambiguous separating
triple, which ultimately leads to $hG2S_k$, as shown in Lemma~\ref{lem:create-h2S}.

The  inner triangles of a $K_5$ are cleared by separating triples and bridge-independent and
singular
separating triangles. Each outer side is crossed by an edge
that is incident to an inner vertex if the graphs are 5-connected.
Then all vertices have degree at least five and the inner
vertices have degree exactly five.
If both outer sides of a 5-clique are crossed such that there are $K_4$,
then the crossing edges are detected by a bridge-independent separating triangle.
Thereafter, there is a separating 3-cycle such that $K_5$ is one component.
Let $C(a,b,c)$ be the outer boundary of a $K_5$ with inner vertices
$x$ and $y$.
If both outer sides of a 5-clique are crossed by edges from 5-cliques,
that is there are 5-cliques $K(a,b,c,x,u)$ and $K(a,b,c,y,v)$, then two edges
incident to $u$ and $v$ cross outside $C(a,b,c)$, such that the
obtained graph is not a subgraph of a 4-connected \emph{T1P} graph.
Hence, there remains the case of a $K_4$ and a $K_5$.

\begin{lemma}  \label{lem:5-cliques}     
Let $Q=G[U]$ be a 5-clique with $U= \{a,b,c,x,y\}$, such that
vertices $x$ and $y$ have degree five and $a,b,c$ have degree at
least six. If there are a 5-clique $G[a,b,c,y,v]$ and a 4-clique
$G[a,c,x,u]$ with vertices $u, v  \not\in U$, then $Q(a,x,c,u)$ is an
X-quadrangle. Then there is an ambiguous
separating triple if edge $\edge{x}{u}$ is removed.
\end{lemma}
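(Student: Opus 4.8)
The plan is to fix the T1P embedding of the $5$-clique $Q=G[a,b,c,x,y]$, pin down the neighbourhoods of the two degree-five vertices $x$ and $y$, and then argue that the $4$-clique $K(a,c,x,u)$ — which meets $Q$ in the triangle $C(a,c,x)$ — can only be realized as the X-quadrangle $Q(a,x,c,u)$, exactly as in the discussion of $3$-sharing $5$- and $4$-cliques preceding the lemma. The statement about the ambiguous separating triple then drops out once $\edge{x}{u}$ is deleted.

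First I would settle the local picture. Since $x$ and $y$ have degree five and $G$ (at this stage) has no unique separating quadruple, neither of them can be an outer vertex of $Q$; hence in every T1P embedding $\Gamma(G)$ the triangle $C(a,b,c)$ bounds the region of $\Gamma(Q)$ with $x,y$ as its two inner vertices. From $Q$, the clique $K(a,c,x,u)$ and the clique $K(a,b,c,y,v)$ we read $N(x)\supseteq\{a,b,c,y,u\}$ and $N(y)\supseteq\{a,b,c,x,v\}$; as $\deg x=\deg y=5$ these are equalities, and since $G$ is $5$-connected of order at least seven it contains no $K_6$, whence $u\neq v$ and $\edge{x}{v},\edge{y}{u}\notin E$. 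Moreover $v$ must lie on the side of $C(a,b,c)$ not containing $x,y$: otherwise every vertex reachable from $v$ without meeting $\{a,b,c\}$ stays inside $C(a,b,c)$, so $C(a,b,c)$ is a separating $3$-cycle — excluded. Hence $\edge{y}{v}$ crosses a side of $C(a,b,c)$, and it is the only edge joining the inner region to $v$.

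The heart of the proof is to locate $u$ and the crossing of $\edge{x}{u}$. Maximality of $K(a,c,x,u)$ already forbids $\edge{x}{u}$ from crossing $\edge{a}{b}$ or $\edge{b}{c}$: a crossing of $\edge{x}{u}$ with, say, $\edge{a}{b}$ produces the X-quadrangle $Q(x,a,u,b)$, hence $u\sim b$, and then $\{a,b,c,x,u\}$ is a $5$-clique properly containing $K(a,c,x,u)$, a contradiction. So it remains to exclude the cases where $u$ lies inside $C(a,b,c)$, where $K(a,c,x,u)$ is a tetrahedron, or where it is an X-quadrangle with crossed pair $\{\edge{a}{x},\edge{c}{u}\}$ or $\{\edge{c}{x},\edge{a}{u}\}$. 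In each of these I would show, by the same kind of argument used in the pre-lemma $3$-sharing discussion (no vertex has a common neighbour outside $\{a,b,c\}$ on both sides of a would-be tetrahedron/X-quadrangle without creating a $5$-clique, then triangulation and $5$-connectivity pin the faces), that one of $C(a,b,c)$ or $C(a,c,y)$ becomes a separating triple or a separating triangle with $u$ on one side and $b$ or $y$ on the other, contradicting the absence of such separators. What survives is that $\edge{x}{u}$ crosses $\edge{a}{c}$, and since $x$ is inside while $u$ is outside $C(a,b,c)$ the vertices appear in the cyclic order $a,x,c,u$; thus $K(a,c,x,u)=Q(a,x,c,u)$, with $\edge{a}{c}$ an outer side of $\Gamma(Q)$ crossed by $\edge{x}{u}$. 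I expect this case analysis to be the main obstacle, since each bad realization needs its own separator and its own appeal to $5$-connectivity and to the non-existence of $K_5$-triples and of $K_6$.

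Finally, deleting $\edge{x}{u}$ leaves $N(x)=\{a,b,c,y\}$, so $x$ becomes an inner vertex of degree four of the $5$-clique $G[a,b,c,x,y]$, while the $5$-clique $G[a,b,c,y,v]$ survives untouched. This is precisely the configuration ``a $K_5$ with an inner vertex of degree four together with a second $K_5$'' described just before the lemma: it is an ambiguous separating triple, and by Lemma~\ref{lem:create-h2S} the resulting component is a generalized handle two-star $hG2S_k$. This establishes both assertions.
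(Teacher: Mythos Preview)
Your proposal is correct and follows essentially the same route as the paper. The paper's own proof is extremely terse: it simply observes that degrees identify $x,y$ as the inner and $a,b,c$ as the outer vertices of $\Gamma(Q)$, that $b$ is singled out by belonging to the $5$-clique $G[a,b,c,y,v]$ but not to the $4$-clique $K(a,c,x,u)$, and then asserts in one line that ``by the absence of smaller separators'' the edge $\edge{x}{u}$ must cross $\edge{a}{c}$, after which the removal of $\edge{x}{u}$ leaves $x$ of degree four in the $5$-clique. Your write-up unpacks exactly this argument --- pinning down $N(x)$ and $N(y)$, using maximality of $K(a,c,x,u)$ to rule out $\edge{x}{u}$ crossing a side through $b$, and then invoking the $3$-sharing discussion (separating triple/triangle) to exclude the remaining realizations --- so the two proofs differ only in the level of detail, not in strategy.
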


\begin{proof}
The inner and outer vertices of the \emph{T1P} embedding
$\Gamma(G)$ are distinguished by
their degree, such that  $x$ and $y$ are inner  and $a,b,c$ are
outer vertices. Vertex $b$ is in the 5-clique and not in the
4-clique with vertices $u, v \not\in U$, which distinguishes $b$
from $a,c$. By the absence of smaller separators, edge $\edge{x}{u}$
from the 4-clique $K(a,c,x,u)$ must cross $\edge{a}{c}$.
After the removal of  $\edge{x}{u}$, there is a 5-clique with
vertices $a,b,c,x,y$ and vertex $x$ of degree four,
such that there is a separating triple.
\qed
\end{proof}

Hence, we have reduced the case of a 5-clique with two vertices
of degree five to the case of  a 5-clique with a vertex of degree four.
If edge $\edge{x}{u}$ is removed from $G$ and an ambiguous separating
triple is the smallest generalized separator of $G-\edge{x}{u}$,
then $G-\edge{x}{u} = hG2S_k$ by Lemma~\ref{lem:create-h2S}.
In consequence, $G = xG2S_k$ such that  algorithm $\mathcal{N}$
will detect directly that $G$ is a generalized two-star.\\

\begin{figure}[t]   
  \centering
\subfigure[] {
     \includegraphics[scale=0.5]{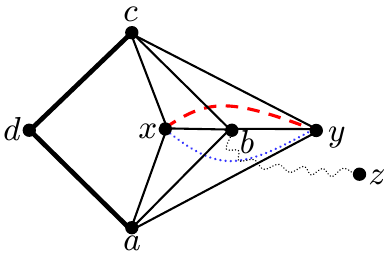}  
    \label{fig:ambig-quad}
  }
  \hfil
  \subfigure[] {
      \includegraphics[scale=0.5]{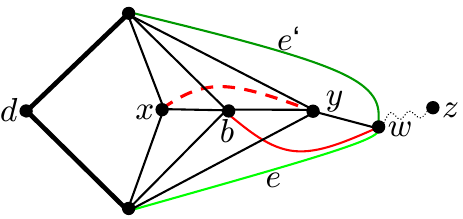} 
      \label{fig:quad-ambig2}
  }
  \caption{Illustration for the proof of
  Lemma~\ref{lem:createcG2S}.
  (a) An ambiguous separating quadrangle $C(a,b,c,d)$ with   bridge $\edge{x}{y}$
  that crosses $\edge{a}{b}$, drawn dotted and blue or
  $\edge{c}{b}$, drawn dashed and red. Only one of them exists.
  There is a path from $b$ to $z$.
  (b) A next ambiguous separating quadrangle $C(a,y,c,d)$ with   bridge $\edge{b}{w}$
  }
  \label{fig:qconstructc2S_n}
\end{figure}

\begin{figure}[t]   
  \centering
     \subfigure[] {
     \includegraphics[scale=0.55]{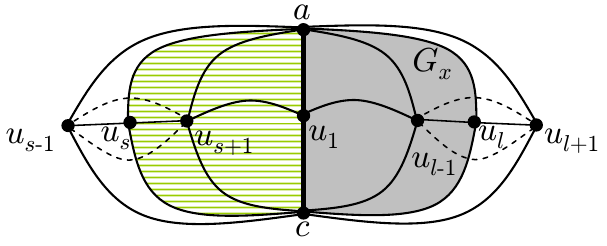}  
    \label{fig:createcG2Sa}
  }
  \subfigure[] {
      \includegraphics[scale=0.55]{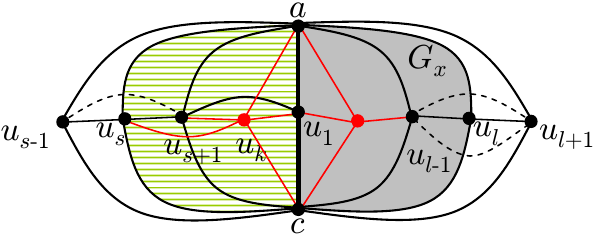} 
      \label{fig:createcG2Sb}
  }
  \caption{Illustration for the proof of
  Lemma~\ref{lem:createcG2S}. Steps to create $cG2S_k$.
  }
  \label{fig:createcG2S}
\end{figure}

Let's return to graphs with ambiguities but without 5-cliques in generalized separators.

\begin{lemma} \label{lem:createcG2S}  
  If $G$ has an ambiguous separating quadrangle, then $G=cG2S_k$ or $G=sGS_k$.
\end{lemma}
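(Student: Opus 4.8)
The plan is to show that an ambiguous separating quadrangle forces the entire graph to be a generalized circle (or semi-full) two-star, by iterating the construction locally around the bridge and using $5$-connectivity together with the absence of all smaller separators to propagate structure.

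First I would set up notation: let $C(a,b,c,d)$ be an ambiguous separating quadrangle with bridge $\edge{x}{y}$, so by the definition and by Lemma~\ref{lem:single-bridge-in-quad}(ii) we have $\edge{x}{y}\in\mathcal{E}\edge{a}{b}\cap\mathcal{E}\edge{b}{c}$, vertices $a,b,c$ are common neighbours of $x$ and $y$, and $d\notin N(x)\cap N(y)$. Since $C(a,b,c,d)$ is ambiguous (not a separating $4$-cycle and not unique), the $4$-cycle has no chord, so $G[a,b,c,x,y]=K_5$-$e$ with the missing edge $\edge{a}{c}$ (equivalently $\edge{b}{d}$ is also absent). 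The key local picture is that $b$ and $d$ are separated from the rest by $C(a,x,y,c)$ together with one further bridge edge incident to $b$ — this is exactly Lemma~\ref{lem:triangle2quadruple} read in reverse, or can be obtained directly: removing $\{a,x,y,c\}$ and the edge(s) incident to $b$ leaving $\{b,d\}$-side disconnects $G$. I would argue $d$ has degree five (its neighbours are $a,b,c,x,y$) using the absence of a separating triple/$4$-cycle, and that $x$ or $y$ likewise has degree five on its side; this pins down the local neighbourhood exactly as in Fig.~\ref{fig:ambig-quad}.

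Next, the inductive step. On the "large" side of $C(a,x,y,c)$, vertex $b$ plays the role that $v_2$ plays in $G2S$, and the pair $\{a, c\}$ (the two degree-$\geq 6$ poles-to-be, or their precursors) persists; the neighbour $z$ of $b$ outside $C(a,x,y,c)$ gives a new $5$-clique $K(a,c,y,b,z)$ or the appropriate $4$-clique, and the same analysis (Fig.~\ref{fig:quad-ambig2}) shows $C(a,y,c,d')$-type configuration reappears with $b,w$ in place of $x,y$. So by the absence of $K_5$-triples handled separately, of smaller generalized separators, and of separating tripods, each successive vertex is forced to attach as the next $v_i$ in a $G2S_k$, with two distinguished high-degree vertices serving as poles $p$ and $q$; the $5$-connectivity forces the arches $\edge{v_i}{v_{i+2}}$ and the edges to both poles, and triangulation forces the closing edges $\edge{v_1}{v_k},\edge{v_2}{v_k}$ (and $\edge{v_1}{v_{k-1}}$ when $k$ is odd versus even), yielding $cG2S_k$ or $sG2S_k$ depending on parity and on whether the handle $\edge{p}{q}$ is present — since $K_5$-triples (which would give the handle) are excluded here, the handle is absent, so we land in $cG2S_k$, or in $sG2S_k$ in the even case where one closing edge is also present.

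I expect the main obstacle to be the bookkeeping of the induction base and the parity distinction: one must verify that the very first few vertices cannot be attached in any other $T1P$-compatible way (this is where one leans hardest on "no unique separating quadrangle, no bridge-independent/singular/ambiguous separating triangle, no $K_5$-destroyer, no smaller separator", i.e. that the ambiguous separating quadrangle is genuinely the minimal generalized separator present), and that when the chain closes up it closes as a circle with exactly the right set of chords rather than, say, leaving a hole or creating a $K_5$ that would have been caught earlier. The cleanest way to discharge this is to appeal to Lemma~\ref{lem:triangle2quadruple} and Lemma~\ref{lem:ambiguous}: each ambiguous separating quadrangle yields an ambiguous separating triangle on the small side and vice versa, so the local structure is exactly the degree-five vertex-in-four-$4$-cliques gadget, and stitching these gadgets around the cycle is forced. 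I would finish by noting that the two (or, for $k=6$, eight) $T1P$ embeddings predicted for $cG2S_k$/$sG2S_k$ match the two routings of the bridge $\edge{x}{y}$ at each ambiguous quadrangle, consistent with Lemma~\ref{lem:embed-h2S}.
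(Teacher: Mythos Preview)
Your overall strategy---iterate the ambiguous quadrangle, identify two persistent pole vertices, and show the remaining vertices form a cycle with arches---matches the paper. But your local analysis of the configuration is wrong in a way that breaks the induction.

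First, you claim $d$ has degree five with neighbours $a,b,c,x,y$. This cannot be right: $C(a,b,c,d)$ is an induced $4$-cycle, so $d$ is \emph{not} adjacent to $b$, and nothing forces $d$ to be adjacent to $x$ or $y$ (indeed you yourself note $d\notin N(x)\cap N(y)$). Second, and more seriously, the $4$-cycle you propose for the inductive step, $C(a,x,y,c)$, does not exist: the edge $\edge{a}{c}$ is absent precisely because the original quadrangle has no chord. Your appeal to Lemma~\ref{lem:triangle2quadruple} ``in reverse'' therefore fails---that lemma starts from a separating \emph{triangle} $C(a,b,c)$ where $\edge{a}{c}$ is present, which is not the situation here.

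The paper's induction instead passes to the $4$-cycle $C(a,y,c,d)$: it keeps $d$ and the two opposite vertices $a,c$ (which become the poles) and replaces $b$ by $y$, shifting one step along what will be the cycle of $v_i$'s. One must then check that $\edge{d}{y}$ does not exist (no chord) and that the new bridge is an edge $\edge{b}{w}$ incident to $b$; this uses $5$-connectivity and the absence of separating triples and triangles. Iterating in both directions from $d=u_1$ along the indices $u_{i}$ eventually closes up, and the paper's case analysis on where the iteration stops (one special vertex $u_1$ with both $\edge{a}{u_1},\edge{c}{u_1}$ uncrossed, versus two adjacent such vertices $u_1,u_k$) is what distinguishes $sG2S_k$ from $cG2S_k$---not a parity argument as you suggest.
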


\begin{proof}
Let $C(a,b,c,d)$ be an ambiguous separating quadrangle, such that
the bridge $\edge{x}{y}$   crosses either $\edge{a}{b}$ or
$\edge{b}{c}$, see Fig.~\ref{fig:ambig-quad}.
Thus $G'=G-C(a,b,c,d)-\edge{x}{y}$ decomposes into
$G_x$ and $G_y$ and there are 4-cliques $K(a,b,x,y)$ and
$K(b,c,x,y)$. Edges $\edge{a}{d}$ and
$\edge{c}{d}$ are uncrossed in $\Gamma(G)$, since $G'$ does not
decompose, otherwise. By 5-connectivity, all vertices have degree at
least five.

Suppose   that $G_y$ has  at least two vertices such that it   has a
vertex   $z \neq y$. We claim that
$C(a,y,c,d)$ is an ambiguous separating quadrangle such that
$\edge{a}{y}$ or $\edge{c}{y}$ is crossed by an edge that is
incident to $b$.
Assume that  $y$ and $z$ are outside $C(a,y,c,d)$. The  case where $z$ is inside $C(a,b,c,y)$ is
similar, using the paths between $z$  and $d$  instead of $z$ and
$b$.
There are five vertex disjoint paths between $z$ and $b$ in $G$ by
5-connectivity. Since edges $\edge{d}{c}$ and $\edge{d}{a}$ are
uncrossed, there is a path $\pi$ between $z$ and $b$ that crosses
$\edge{a}{y}$ or $\edge{c}{y}$. In fact, exactly one of these edges
is crossed, in alternation with $\edge{a}{b}$ or $\edge{c}{b}$ that
is crossed by $\edge{x}{y}$. That is, edge $\edge{x}{y}$ crosses
$\edge{v}{b}$ if and only if edge $\edge{v'}{y}$ is crossed by
$\pi$, such that $\{v, v'\}=\{a,c\}$. Assume that $\edge{c}{b}$ is
crossed. The case for $\edge{a}{b}$ is symmetric exchanging   $a$
and $c$. Then the first edge $\edge{b}{w}$ of $\pi$  crosses
$\edge{a}{y}$. Otherwise,  $C(a,b,y)$ is a separating triangle if
$\edge{a}{b}$ is uncrossed, and a separating triangle  if
$\edge{a}{b}$ and $\edge{a}{y}$ are crossed. In either case,
5-cliques containing the vertices from the pairs of crossed edges
are excluded by the uncrossed edges $\edge{a}{d}, \edge{c}{d},
\edge{a}{b}$ and $\edge{b}{y}$ and the fact that $\edge{x}{y}$ is
crossed. In particular, vertices $x$ and $b$ cannot be the fifth
vertex in such a 5-clique.

  Then $C(a,y,c,d)$ is  a separating
quadrangle with bridge $\edge{b}{w}$ or there is  edge $\edge{d}{y}$,
which is a chord, see Fig.~\ref{fig:quad-ambig2}. Towards a contradiction, suppose $\edge{d}{y}$
exits. Then $C(a,d,y)$ is a separating triple with bridge
$\edge{b}{w}$ if $\edge{d}{y}$ is uncrossed and a separating
triangle, if it is crossed. If the crossing edge is $\edge{u}{u'}$,
then there is no 5-clique containing $d,y,u,u'$, since e.g., $u=c$
and a 5-clique $G[d,y,c,u',a]$ implies that $C(a,b,c,d)$ has a
chord, which contradicts the assumption that it is an induced
4-cycle. By minimality, there is no unique separating
quadrangle. Hence,
   $C(a,y,c,d)$ is  a separating
quadrangle with bridge $\edge{b}{w}$, such that edge
$e'=\edge{c}{w}$ exists.
 It partitions into parts
$G_y-y$ and $G_x+b$, see Fig.~\ref{fig:quad-ambig2}. Now vertices
$a$ and $b$ are poles, that is neighbors of $d, x,b,y$ and $w$.

Let $|G_x| =i$ and suppose that $G$ has $k+2$ vertices. Define
$d=u_1$, $x=u_{i+1}$, $b=u_{i+2}$, $y= u_{i+3}$ and $w= u_{i+4}$. We now
proceed by induction and consider ambiguous separating quadrangles
with 4-cycle $C(a,u_j,c, u_1)$ for $j=i+2, \ldots, r$, where $r=k-2$ or $r=k-1$. We cannot
proceed if there is an edge $\edge{u_1}{u_{r+1}}$, that is, $C(a,u_{r+1},c,
u_1)$ has a chord. Then it turns out that there is an ambiguous separating triangle.
Next, consider part $G_y$ and proceed downwards
from $d=u_1, x=u_{i-1}, b=u_{i-2}, y= u_{i-3}$ and $w= u_{i-4}$.
Then there is a ambiguous separating triangle $C(a, u_{m-1},b,a_1)$ if there is
an ambiguous separating triangle $C(a, u_{m},b,a_1)$ and there is no chord
$edge{a_1}{a_{m-1}}$  for $m=i,i-1,\ldots, 2$. Let $u_l$ be the least vertex in
$G_y$ and $u_r$ the last vertex in $G_x$ such that there is an
ambiguous separating 4-cycle $C(a, u_l,c,u_1)$ and $C(a,
u_r,c,u_1)$, respectively, and there are edges $\edge{u_1}{u_{l-1}}$ and $\edge{a_1}{u_{r+1}}$

We claim that there is a circle $u_1,\ldots, u_k$, such that
$u_l=u_4$ and $u_r=u_{k-2}$ if there is a single vertex $u_1$ such
that edges $\edge{u_1}{a}$ and $\edge{u_1}{c}$ are uncrossed. If
there are two such vertices, then they are consecutive, that is also
$\edge{u_k}{a}$ and $\edge{u_k}{c}$ are uncrossed, since there
is a separating 4-cycle, otherwise. Then we use $C(a,u_k,b,a_m)$ for $G_y$
and have $u_l=u_3$.

There is a cycle  $u_1, u_{l-1}, u_l,\ldots, u_i, \ldots, u_r, u_{r+1}, u_k$,
since  the induction on $G_x$ stops at some
ambiguous separating quadrangle $C(a,u_1,c,u_r)$ and the induction on $G_y$
at $C(a,u_1,c,u_r)$. There is a
vertex $u_{r-1}$ such that edge $\edge{u_{r-1}}{u_{r+1}}$ crosses
$\edge{a}{u_r}$ or $\edge{c}{u_r}$ and there are edges
$\edge{a}{u_{r+1}}$ and $\edge{c}{u_{r+1}}$ and there is an edge
$\edge{u_{r+1}}{u_1}$, since $C(a,u_1,c, u_{r+1})$ has a chord $\edge{u_1}{u_{r+1}}$.
The case for $G_y$ is similar.

We claim that $u_r=u_{k-2}$ and $u_l=u_4$, such that two vertices  $u_2$ and $u_k$
are missing. By  the absence of
ambiguous separating quadrangles and smaller separators, subgraph
$G_x$ of the ambiguous separating quadrangle $C(a,u_1,c, u_l)$ is in
the interior of the 4-cycle, and similarly for $C(a,u_1,c, u_s)$.
Hence, we have a situation as depicted in
Fig.~\ref{fig:createcG2Sb}.
  Then consider $C(a,u_j, c,u_k)$
for (downwards) $j=i+2,\ldots, 3$. Thus the parts have at least
three vertices, that is there is no chord between $u_1$ and $u_j$
and $u_{k}$ and $u_j$, respectively.

There are four triangles $C(a, u_1, u_3), C(b, u_1, u_3), C(a, u_1, u_{k-1})$
and $C(b, u_1, u_{k-1})$.  The missing vertices $u_2$ and $u_{k-1}$ have degree at least
five by 5-connectivity. Their addition does not create a 5-clique.
Then $u_2$ can be placed in one of $C(a, u_1, u_3)$ or $C(b, u_1, u_3)$,
such that its neighbors are $a,b,u_1, u_3, u_4$. There is no edge $\edge{a_2}{a_k}$,
since edges $\edge{a}{u_1}$ and $\edge{b}{u_1}$ are uncrossed (from the
ambiguous separating quadrangles). Similarly, $u_k$ can be placed in one
of $C(a, u_1, u_{k-1}$ or $C(b, u_1, u_{k-1})$,
such that its neighbors are $a,b,u_1, u_{k-1}$ and $u_{k-3}$. Then $G = sG2_k$.

Suppose there is another ambiguous separating quadrangle $C(a, u, b, u_i)$ for
$u \neq u_1$. Then $u=u_k$ or $u=u_2$, since edges $\edge{a}{u}$ and $\edge{b}{u}$
are uncrossed, such that $C(a,u_1,b,u)$ is a separating 4-cycle, otherwise. Thus $u$ and $u_1$
are adjacent. Let $u=u_k$. Then
 $C(a, u_{k-1},c, u_1)$  is a separating 4-cycle with bridge $\edge{u_k}{u_{k-2}}$,
such that $G=cG2S_k$.
\qed
\end{proof}

A $cG2S_k$   usually occurs as a component at  a separating 4-cycle
$C(p,u_1,q,u_k)$, where edge $\edge{u_1}{u_k}$ is added for a
triangulation. The 4-cycle may result from another generalized
separator, for example a tripod (which is discussed below). Graph
$cG2S_k$ may have marked edges from the use of generalized
separators at an earlier stage. Then either the edges of a triangle
a marked or an edge incident to $u_1$ or $u_{k-1}$ has been removed.
 In the first case, there are two \emph{T1P} embeddings if all edges
$\edge{a}{u_i}$ and $\edge{c}{u_i}$ for $i=2,\ldots, k-1$ are
unmarked. There is unique embedding if an edge   $\edge{a}{u_i}$ or
$\edge{c}{u_i}$ with $2 \leq i \leq k-1$ is marked.
 There may be an error by  a wrong marking.

A $sG2S_k$ has at most two \emph{T1P} embeddings with the poles exchanged.
Algorithm $\mathcal{N}$ labels one edge of $cG2S_k$ or $sG2S_k$ by the number of
embeddings. It may remove $cG2S_k$ in a single sweep or remove all
edges between a vertex and it neighbor after next arches towards planarity. There may be an pre-check whether $G=sG2S_k$ without using
generalized separators, and similarly for $G=fG2S_k$.
The updates are as usual.

\subsection{Separating 5-Cycles} \label{sect:5cycle}
Similar to separating 4-cycles, we now   search for
 separating 5-cycles $C$, mark the edges of $C$, update
  pairs $(e, \mathcal{E}(e))$ and add two adjacent marked chords
 for the triangulation to each of the two remaining 6-connected components.
Observe that  $K_7$ is not 1-planar and 1-planar
graphs cannot be 8-connected, but there are 7-connected
1-planar graphs \cite{fm-s1pg-07}.

\subsection{Separating Tripods}

Forthcoming we can assume that $G$ is 6-connected.
If $G$ is a \emph{T1P} graph, then it has no 5-clique and no
separating triple or quadruple.
Now a 4-clique can be represented as a
tetrahedron if either  each edge of the tetrahedron is crossed by
some edge   and the vertices from the crossing edges
 are pairwise distinct, as shown in
Fig.~\ref{fig:kite-covered-tetrahedron},
 or  there is a vertex $d$ of degree six in the center of the
tetrahedron such that there are three 4-cliques, as shown in
Fig.~\ref{fig:SC-graph}. Chen et al.~\cite{cgp-rh4mg-06} call it
finding a rice-ball (section 9.1). In \cite{b-4mapGraphs-19} the
graphs are called complete kite-covered tetrahedron and SC-graph,
respectively. Here we continue in the line of separators and  use
tripods, that is 3-cycles with three bridges, since, again,
there may be an ambiguity.

\begin{lemma} \label{lem:tripod}  
If $C(v_1,v_2,v_3)$ is a separating tripod, such that
$G'=G-C(v_1,v_2,v_3) -
\mathcal{E}\edge{v_1}{v_2}-\mathcal{E}\edge{v_2}{v_3}-\mathcal{E}\edge{v_1}{v_3}$
is disconnected and each part contains at least two vertices, then
 one vertex of each bridge is inside $C(a,b,c)$ and the
other is outside,
  there is a single bridge $e_i= \edge{x_i}{y_i} \in
\mathcal{E}\edge{v_i}{v_{i+1}}$ for $i=1,2,3$ with $v_4=v_1$ such
that $G''=G-C(v_1,v_2,v_3) - \{e_1, e_2, e_3\}$ is disconnected if
$G'$ is disconnected.
 Graph $G'$ has   two parts $G_x$ and $G_y$, which
are only connected by the bridges $e_i$ for $i=1,2,3$.
 Edge $\edge{v_i}{v_{i+1}}$ is crossed by its bridge
$\edge{x_i}{y_i}$.
\end{lemma}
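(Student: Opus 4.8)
The plan is to adapt the proof of the bridge-independent case (Lemma~\ref{septriangle-independentbridges}) from two crossed sides to three, and from $5$- to $6$-connectivity, with the convention $v_4=v_1$ throughout. Fix a \emph{T1P} embedding $\Gamma(G)$; since $G$ is $6$-connected, $G-\{v_1,v_2,v_3\}$ is $3$-connected, so the disconnection of $G'$ is caused solely by the deletion of crossable edges. First I would show, by a short argument in the spirit of the Single Bridge Lemma, that the $3$-cycle $C=C(v_1,v_2,v_3)$ has vertices strictly inside and strictly outside in $\Gamma(G)$: if one of the two regions bounded by $C$ were a triangular face, then no side of $C$ is crossed, every edge of $\mathcal{E}\edge{v_1}{v_2}\cup\mathcal{E}\edge{v_2}{v_3}\cup\mathcal{E}\edge{v_1}{v_3}$ joins two vertices of the same region, and — using that every vertex has degree at least six, and that a common neighbour other than $v_i,v_{i+1}$ of the endpoints of such a crossable edge would create a $K_5$ or a separating triple, both excluded here — one checks that $G'$ would remain connected, contradicting the hypothesis. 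Hence I may fix $u$ strictly inside $C$ and $w$ strictly outside, lying in distinct parts of $G'$.

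Next, by $6$-connectivity there are six internally vertex-disjoint $u$–$w$ paths. At most three of them pass through $\{v_1,v_2,v_3\}$, so at least three avoid these vertices and therefore each crosses the closed curve $C$ at a side edge. By $1$-planarity a side edge of $C$ is crossed at most once, so there are exactly three such crossings, one per side: side $\edge{v_i}{v_{i+1}}$ is crossed by a unique edge $e_i=\edge{x_i}{y_i}$ with $x_i$ inside and $y_i$ outside $C$, and since these three edges lie on three vertex-disjoint paths, the six vertices $x_1,y_1,x_2,y_2,x_3,y_3$ are pairwise distinct and distinct from $v_1,v_2,v_3$. As $\Gamma(G)$ is triangulated, the crossing pair $\edge{v_i}{v_{i+1}},\,e_i$ is enclosed by an X-quadrangle $Q(v_i,x_i,v_{i+1},y_i)$, so $K(v_i,v_{i+1},x_i,y_i)$ is a $4$-clique, which is maximal because $G$ has no $K_5$; thus $e_i\in\mathcal{E}\edge{v_i}{v_{i+1}}$ and $\edge{v_i}{v_{i+1}}$ is crossed by $e_i$. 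Moreover, the only edges of $\Gamma(G)$ that meet the curve $C$ are $e_1,e_2,e_3$ and the three sides themselves, so every walk from an inside vertex to an outside vertex uses a vertex $v_i$ or an edge $e_i$; hence $G''=G-C-\{e_1,e_2,e_3\}$ is disconnected, and so is $G'$, which has even fewer edges.

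It remains to prove that the interior and the exterior of $C$ constitute exactly the two parts $G_x$ and $G_y$ of $G'$, i.e.\ that deleting the remaining crossable edges, those in $\mathcal{E}\edge{v_i}{v_{i+1}}\setminus\{e_i\}$, separates neither region. Such an edge $\edge{x'}{y'}$ has both endpoints in one region of $C$ and forms a maximal $4$-clique with two consecutive vertices of $C$; it is represented as a tetrahedron, the X-quadrangle in which $\edge{v_i}{v_{i+1}}$ would be crossed being excluded since that side already carries $e_i$ by $1$-planarity. A common neighbour $w'\neq v_i,v_{i+1}$ of $x'$ and $y'$ must be joined to $x'$ or $y'$ by a further crossable edge (otherwise $x'$ and $y'$ stay connected through $w'$ in $G'$), and by the absence of $K_5$ and of separating triples this forces the configuration $H_7$, which would leave $G'$ with a part consisting of a single vertex (as in the proof of Lemma~\ref{septriangle-independentbridges}) and contradicts the hypothesis that each part has at least two vertices. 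Therefore the auxiliary crossable edges are irrelevant to connectivity: $G'$ has exactly the two parts $G_x$ (inside) and $G_y$ (outside), the only edges of $G$ joining them are $e_1,e_2,e_3$, and in particular $e_i$ is the unique bridge in $\mathcal{E}\edge{v_i}{v_{i+1}}$. Finally, one vertex of each $e_i$ is inside $C$ and the other outside by construction, and each side of $C$ is crossed by its bridge.

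The main obstacle is this last step: one must control \emph{all} crossable edges of the three sides, not just the three that actually cross $C$, in order to conclude that $G'$ has precisely two parts. This forces the delicate local case analysis — resting on $6$-connectivity, the non-existence of $K_5$, and the absence of separating triples and quadruples — that rules out a new common neighbour of the endpoints of an auxiliary crossable edge; the degenerate configurations that survive the analysis are exactly those treated separately as singleton and ambiguous separating tripods.
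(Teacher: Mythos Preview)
Your approach is essentially the paper's: both rest on the two observations that (i) a crossable edge with both endpoints on the same side of $C$ cannot be a bridge (no $K_5$, reference to the earlier single-bridge lemmas), and (ii) $6$-connectivity together with $1$-planarity forces exactly one crossing per side. The paper states (i) first and derives (ii) in one line, while you start from a Menger argument to exhibit the three crossings and then clean up the auxiliary crossable edges; the logical content is the same.

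One small point: your claim that ``the six vertices $x_1,y_1,x_2,y_2,x_3,y_3$ are pairwise distinct'' does not follow from the three paths being internally vertex-disjoint, since some $x_i$ could coincide with your chosen $u$ (and then two of the $x_i$ could both equal $u$). Distinctness is in fact the content of the \emph{next} lemma in the paper (Lemma~\ref{lem:distinct or strong tripod}), where the alternative $x_1=x_2=x_3$ is ruled out precisely by the hypothesis that each part has at least two vertices. Since you never actually use pairwise distinctness of the $x_i,y_i$ downstream (you only need $x_i,y_i\notin\{v_1,v_2,v_3\}$, which does follow from your path construction), this overclaim is harmless for the present lemma; just drop it or defer it.
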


\begin{proof}
Since there are no 5-cliques, $G'$ does not decompose if the
vertices of an edge $e \in \mathcal{E}\edge{v_i}{v_{i+1}}$ (with
$v_4=v_1)$ are both inside (or outside) $C(a,b,c)$, as elaborated in
the proof of Lemmas~\ref{lem:single-bridge-lemma},
\ref{lem:single-bridge-lemma-quad} and
\ref{septriangle-twobridges}.    Hence, each bridge crosses an edge
of $C(a,b,c)$. Since $G$ is 6-connected, three
bridges are needed do decompose $G$ and each edge of $C(v_1, v_2,
v_3)$ is crossed at most once, such that the claims are clear.
\qed
\end{proof}

\begin{lemma}  \label{lem:distinct or strong tripod}  
Suppose  $C(v_1, v_2, v_3)$ is  a separating   tripod with bridges
 $\edge{x_i}{y_i}$ for $i=1,2,3$.
 Then either the edges $\edge{x_i}{y_i}$, $i=1,2,3$ are
pairwise independent, that is vertices $x_i, y_i$ for $i=1,2,3$ are
pairwise distinct or $x_1=x_2=x_3$ and $y_i \neq y_j$ for $1\leq i <
j \leq 3$. In other words, there is a completely kite-covered
tetrahedron. 
\end{lemma}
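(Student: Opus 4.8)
The plan is to analyze the three bridges $\edge{x_i}{y_i}$ of the separating tripod $C(v_1,v_2,v_3)$ using the structure already established in Lemma~\ref{lem:tripod}: each bridge crosses its defining edge $\edge{v_i}{v_{i+1}}$, so there are X-quadrangles $Q(v_i,x_i,v_{i+1},y_i)$ for $i=1,2,3$, with (say) each $x_i$ inside $C(v_1,v_2,v_3)$ and each $y_i$ outside. The two parts $G_x$ (inside) and $G_y$ (outside) are each nontrivial. First I would fix attention on the inside part and the vertices $x_1,x_2,x_3$, and argue that either they are pairwise distinct or they all coincide; the mixed case ($x_1=x_2\neq x_3$) must be ruled out. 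Then the analogous statement for $y_1,y_2,y_3$ follows by the symmetric argument applied to the outside, and finally one shows the two cases are coupled: the $x_i$ collapse to a single vertex if and only if the $y_i$ do, and in that collapsed case $G[v_1,v_2,v_3,x,y_1,y_2,y_3]$ is exactly a completely kite-covered tetrahedron with the $x$-vertex as center (or symmetrically the $y$-vertex).

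The core argument ruling out the mixed case should mirror the reasoning in Lemma~\ref{septriangle-independentbridges} and Lemma~\ref{septriangle-twobridges}. Suppose $x_1=x_2=x$ but $x_3\neq x$. Then $x$ lies on the X-quadrangles $Q(v_1,x,v_2,y_1)$ and $Q(v_2,x,v_3,y_2)$, so $x$ is adjacent to $v_1,v_2,v_3$ and to $y_1,y_2$; since $x_3\neq x$ and $x_3$ is on the inside connected to $x$ through $G_x$, there must be a path inside $C(v_1,v_2,v_3)$ from $x$ to $x_3$ avoiding $v_1,v_2,v_3$. I would trace this path through the neighbors of $v_2$ inside the triangle and, using 6-connectivity together with the absence of $5$-cliques and of separating triples/quadruples, derive that some edge along the way is a crossable edge of an edge of $C(v_1,v_2,v_3)$ — hence a second bridge of that edge, contradicting the uniqueness of the bridge $\edge{x_i}{y_i}$ established in Lemma~\ref{lem:tripod}. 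The observation stated just before Lemma~\ref{septriangle-independentbridges} — that a common neighbor $w\notin\{a,b,c\}$ of the endpoints of a bridge forces a crossable edge, with the only escape being the graph $H_7$, which is excluded here since $H_7$ contains a $5$-clique-free $6$-clique-like structure incompatible with $6$-connectivity of order $\ge 7$ — is the key lever.

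Once the mixed case is excluded, the two surviving configurations are: (i) all six vertices $x_1,x_2,x_3,y_1,y_2,y_3$ distinct, giving three pairwise independent bridges; (ii) $x_1=x_2=x_3=x$, in which case $x$ has degree exactly six with $N(x)=\{v_1,v_2,v_3,y_1,y_2,y_3\}$ (any further neighbor of $x$ would, by the triangulation and the path argument, produce an extra bridge), the $y_i$ are pairwise distinct (if $y_1=y_2$ then $\edge{v_1}{v_2}$ and $\edge{v_2}{v_3}$ both cross $\edge{x}{y_1}$, impossible in a $1$-planar embedding), and the three X-quadrangles $Q(v_i,x,v_{i+1},y_i)$ fit together around the central edge-crossed tetrahedron $K(v_1,v_2,v_3,x)$ — i.e.\ a completely kite-covered tetrahedron as in Fig.~\ref{fig:kite-covered-tetrahedron}. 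It remains to observe that the case $y_1=y_2=y_3$ with the $x_i$ distinct is the same configuration viewed from the outside, so it also falls under case (ii). I expect the main obstacle to be the careful path-tracing inside the triangle in the mixed case: one has to enumerate how the path from $x$ to $x_3$ can interact with the already-present crossed edges $\edge{x}{y_1}$, $\edge{x}{y_2}$ and with $\edge{v_2}{v_3}$, and in each subcase exhibit either a forbidden small separator, a second bridge, or a $5$-clique — this is essentially a finite but delicate case distinction analogous to the proofs of Lemmas~\ref{septriangle-independentbridges} and \ref{lem:two-bridges-quad}.
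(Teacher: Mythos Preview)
Your outline is sound and would eventually succeed, but it takes a much longer route than the paper. The paper's entire proof is three lines: assume two of the bridges share an endpoint, say $x_1=x_2$; then $y_1\neq y_2$ (no multi-edges), and if $x_3,y_3\neq x_1$ one immediately exhibits a separating $4$-cycle through $x_1$ and two vertices of $C(v_1,v_2,v_3)$ together with an endpoint of the third bridge. Since separating $4$-cycles (and all smaller separators) have already been eliminated before tripods are ever searched, this is an instant contradiction, forcing $x_3=x_1$ as well. No path-tracing, no second-bridge argument, no invocation of the $H_7$ escape clause is needed.

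Concretely, the key fact you are not exploiting is that once $x:=x_1=x_2$ is adjacent to all three of $v_1,v_2,v_3$ via uncrossed X-quadrangle sides, the vertex $x$ itself can serve as part of a \emph{smaller} separator; you do not need to walk from $x$ to $x_3$ and hunt for an extra crossable edge. Your planned case analysis ``tracing the path through the neighbors of $v_2$'' is exactly the machinery of Lemmas~\ref{septriangle-independentbridges} and \ref{septriangle-twobridges}, which was necessary there only because the graph was merely $5$-connected and separating $4$-cycles had not yet been excluded. Here we are $6$-connected and past all of that, so the argument collapses.

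One small misreading: in your case (ii) you describe $x_1=x_2=x_3=x$ as yielding ``a completely kite-covered tetrahedron as in Fig.~\ref{fig:kite-covered-tetrahedron}''. That is the wrong picture. The all-coincident case is the \emph{strong} separating tripod (the SC-graph of Fig.~\ref{fig:SC-graph}), with $x$ as a degree-six center; the completely kite-covered tetrahedron arises from the \emph{independent}-bridges case, where each of the four faces of a tetrahedron $K(v_1,v_2,v_3,\cdot)$ is covered by a kite. The lemma's closing phrase ``in other words, there is a completely kite-covered tetrahedron'' refers to the first alternative, not the second.
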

\begin{proof}
If $x_1=x_2$, then  $y_1 \neq y_2$ since there are no multi-edges.
Now there is a separating 4-cycle $C(v_1, x_2, v_2, x_3)$ or $C(v_1,
x_2, v_2, y_3)$, contradicting the assumptions if $x_3, y_3  \neq
x_1$. Hence all vertices $x_i, y_i, i=1,2,3$ are distinct or
$x_1=x_2=x_3$ and $y_1, y_2, y_3$ are pairwise distinct.
 \qed
\end{proof}

Algorithm $\mathcal{N}$ destroys a  separating  tripod with
independent bridges by removing the bridges.  The updates are as usual.
  There is a clear decision and a
single embedding.

A  separating tripod with independent edges is a piece of a
completely kite-covered tetrahedron, see
Fig.~\ref{fig:kite-covered-tetrahedron}. A completely kite-covered
tetrahedron has four separating tripods. For any of them, if its
bridges are removed, then there are unambiguous separating triangles
or separating triples, such that all  six X-quadrangles from the
kite-covered tetrahedron are detected and destroyed. Clearly, this
can be done all at once, as in \cite{cgp-rh4mg-06} or
\cite{b-4mapGraphs-19}.

\subsection{Strong Tripod Separators}

Now we use strong tripod separators to find its tetrahedron,
seeFig.~\ref{fig:SC-graph}.   Strong separating tripods play an
important role in the reduction system for optimal 1-planar graphs
\cite{b-ro1plt-18, s-s1pg-86}. They have a center $d$ of
degree six such that there are three 4-clique induced by $d$ and its
six neighbors. As before there is a
possible ambiguity that results in a full two-star $f2S_k$  that has
two embeddings (and eight for $k=6)$.

\begin{figure}[t]   
  \centering
\subfigure[] {
     \includegraphics[scale=0.4]{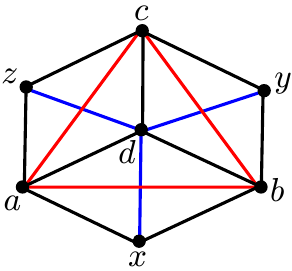}  
    \label{fig:tripod}
  }
\subfigure[] {
     \includegraphics[scale=0.4]{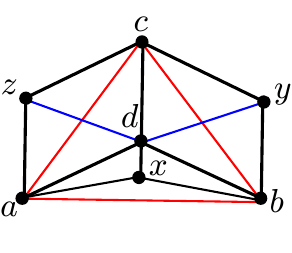}  
    \label{fig:trip1}
  }
  \hfil
  \subfigure[] {
      \includegraphics[scale=0.4]{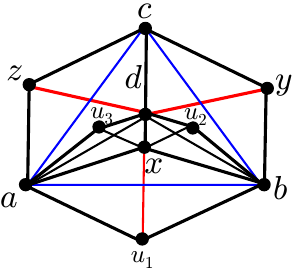} 
      \label{fig:trip2}
  }
\hfil
\subfigure[] {
     \includegraphics[scale=0.4]{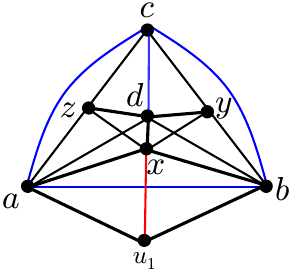}  
    \label{fig:trip3}
  }
\hfil
  \subfigure[] {
      \includegraphics[scale=0.4]{trip2} 
      \label{fig:trip4}
  }
  \caption{   Illustration for the proof of
  Lemma~\ref{lem:single-bridge-strong tripod}.
  (a) A  strong separating tripod $C(a,b,c)$ (b) with  $K(a,b,d,x)$ drawn as a tetrahedron,
(c)  with center $x$, (d) after setting $u_2=y$ and $u_3=z$,
(e) a non-strong separating tripod $C(a,b,c)$.
  }
  \label{fig:strong-tripod}
\end{figure}

\begin{lemma}  \label{lem:single-bridge-strong tripod} 
Let $C(a,b,c)$ be a strong separating tripod with center $d$. Then
the vertices of each bridge are on opposite sides of $C(a,b,c)$.
\end{lemma}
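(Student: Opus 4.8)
The plan is to argue entirely inside a fixed \emph{T1P} embedding $\Gamma(G)$. I would draw the $3$-cycle $C(a,b,c)$ as the closed curve $J=\edge{a}{b}\cup\edge{b}{c}\cup\edge{a}{c}$; its three edges are pairwise adjacent and hence do not cross, so $J$ is a simple closed curve and bounds two closed disks $D_1,D_2$, with the centre $d$ in the interior of one of them, say $d\in D_1^{\circ}$. Since $C(a,b,c)$ is a \emph{strong} separating tripod with centre $d$, the graph $G'=G-C(a,b,c)-\mathcal{E}\edge{a}{b}-\mathcal{E}\edge{b}{c}-\mathcal{E}\edge{a}{c}$ has exactly two parts, one of which is the isolated vertex $d$; as $\deg_G(d)=6$ and $d$ is adjacent to $a,b,c$, the remaining three neighbours $x,y,z$ of $d$ lie in the other part $P$, and the bridges are precisely $\edge{d}{x},\edge{d}{y},\edge{d}{z}$, each a crossable edge of an edge of $C(a,b,c)$. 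Hence it suffices to show that $D_1^{\circ}$ contains no vertex of $G$ besides $d$: then $x,y,z\in D_2^{\circ}$, so every bridge runs from $d\in D_1$ to a vertex in $D_2$, i.e. its ends lie on opposite sides of $C(a,b,c)$.

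The key step is the device already used in the proofs of Lemmas~\ref{lem:single-bridge-lemma}, \ref{lem:single-bridge-lemma-quad}, \ref{septriangle-twobridges} and \ref{lem:tripod}: \emph{every edge of $G$ crossing $J$ is deleted in $G'$}. Indeed, if an edge $e'$ crosses $J$ then, by $1$-planarity, it crosses exactly one of $\edge{a}{b},\edge{b}{c},\edge{a}{c}$ — say $\edge{a}{b}$ — in a single point; since $\Gamma(G)$ is triangulated, this crossing lies in an X-quadrangle on the four ends of $\edge{a}{b}$ and $e'$, so those four vertices induce a $4$-clique, which is maximal because a $6$-connected \emph{T1P} graph has no $K_5$; thus $e'\in\mathcal{E}\edge{a}{b}$ and $e'$ is removed in $G'$. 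Consequently $G'$ contains no path joining a vertex of $D_1^{\circ}$ to a vertex of $D_2^{\circ}$: such a path, drawn in the sphere, avoids the removed vertices $a,b,c$, so it can meet $J$ only by crossing one of $\edge{a}{b},\edge{b}{c},\edge{a}{c}$ transversally, i.e. by using a removed edge — impossible.

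To conclude I would assume first that $D_2^{\circ}$ contains a vertex $w_2$ of $G$; then $w_2\ne d$, so $w_2\in P$. If some vertex $w_1\notin\{a,b,c,d\}$ were in $D_1^{\circ}$, it would also lie in $P$ (the only part other than $\{d\}$), and then $P$ would contain a path from $w_1\in D_1^{\circ}$ to $w_2\in D_2^{\circ}$, contradicting the previous paragraph; hence $V(G)\cap D_1^{\circ}=\{d\}$ and the lemma follows. It then remains to exclude the degenerate possibility that $D_2$ is a single face of $\Gamma(G)$ (so $D_2^{\circ}$ has no vertex): in that case all three edges of $C(a,b,c)$ are uncrossed and no edge of $G$ enters $D_2^{\circ}$, so $\edge{a}{b},\edge{b}{c},\edge{a}{c}$ bound faces, and inspecting the few admissible placements of the $4$-clique $K(a,b,c,d)$ and of the bridges $\edge{d}{x},\edge{d}{y},\edge{d}{z}$ under this constraint leaves one of $a,b,c$ with at most four neighbours, contradicting $6$-connectivity (equivalently, $C(a,b,c)$ would be a smaller separator, already used earlier).

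I expect this last point — excluding vertices of $G$ on $d$'s side of $C(a,b,c)$, in particular ruling out the degenerate case $D_2^{\circ}=\emptyset$ — to be the only genuine obstacle; once $V(G)\cap D_1^{\circ}=\{d\}$ is established the statement is immediate, and the two supporting facts (the absence of $K_5$ in $6$-connected \emph{T1P} graphs, and that $G'$ has exactly two parts) are already available at this stage of the algorithm.
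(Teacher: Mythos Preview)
Your topological approach is genuinely different from the paper's and is clean where it works, but the degenerate case is a real gap, not a formality.

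Your main argument is sound: any edge crossing the Jordan curve $J$ lies in some $\mathcal{E}(e)$ for $e\in C(a,b,c)$ (there is no $K_5$ at this stage), hence is removed in $G'$; so the connected part $P$ lies entirely on one side of $J$. If $P$ lies opposite $d$, the lemma follows immediately. The trouble is the other alternative, where $P$ and $d$ lie on the \emph{same} side $D_1^{\circ}$ and $D_2$ is a triangular face. You dispose of this by asserting that ``one of $a,b,c$ has at most four neighbours'' --- but that is not true in general: from the cliques $K(a,b,d,x)$, $K(b,c,d,y)$, $K(a,c,d,z)$ each of $a,b,c$ already has five distinct neighbours among $\{a,b,c,d,x,y,z\}$, and nothing prevents further neighbours in $P$. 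Nor does $C(a,b,c)$ become a smaller separator just because it bounds a face; $\{a,b,c\}$ is not a cutset in a $6$-connected graph. So this case genuinely needs work, and your current sketch does not do it.

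The paper's proof takes the opposite route: it \emph{starts} from a bridge $\edge{d}{x}$ whose ends lie on the same side and builds structure locally on that side. Using the absence of all smaller separators, it forces $C(a,b,d)$ to be itself a strong separating tripod with centre $x$, then identifies the vertices crossing its sides with $y$ and $z$, and iterates to $C(b,c,d)$ and $C(a,c,d)$. The upshot is that $C(a,b,c)$ turns out to be a (non-strong) separating tripod, contradicting the hypothesis. This argument never appeals to the other side of $J$ being nonempty, so it covers precisely the configuration your approach cannot reach. If you want to rescue your line, you must either prove directly that $D_2^{\circ}$ contains a vertex whenever $C(a,b,c)$ is a strong separating tripod --- which essentially amounts to redoing the paper's nested-tripod analysis --- or import that analysis to close the remaining case.
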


\begin{proof}
Towards a contradiction, suppose that $\edge{d}{x}$ is a bridge of
$\edge{a}{b}$ that does not cross an edge of $C(a,b,c)$, see
Fig.~\ref{fig:trip1}. Then $x$ is in the interior of $C(a,b,d)$.
The case where $x$ is in the interior of $C(b,c,d)$ or $C(c,a,d)$ is
similar. By the absence of smaller separators, any 3-cycle with vertices on either
side is a strong separating tripod. Hence, as $x$ is in and $c$ is outside $C(a,b,d)$,
it is a strong separating tripod with center $x$, such that its edges are crossed by
edges $\edge{x}{u_1}, \edge{x}{u_2}, \edge{x}{u_3}$, see Fig.~\ref{fig:trip2}.
Then $u_2=y$. since $\{b,c\}$ is a 2-separator, otherwise, contradicting 6-connectivity.
By a similar reasoning, we obtain $u_3=z$  see Fig.~\ref{fig:trip3}.
Then there is edge $\edge{y}{z}$, such that it crosses $\edge{c}{d}$, since
there are smaller separators, otherwise, that isolate $y$ and $z$, respectively.
 By 6-connectivity and the absence of smaller separators,
$C(b,c,d)$ is
a strong separating tripod  with center $y$ and   $C(a,c,d)$ is a   strong
separating tripod  with center $z$, as shown in
Fig.~\ref{fig:trip4}. Hence $C(a,b,c)$ is a separating tripod, that
is not strong, a contradiction to the assumption.
%
%
\qed
\end{proof}

\begin{lemma} \label{strong-unambiguous-SC-if}  
Let $d$ be the center of a strong  separating tripod with 3-cycle
$C(a,b,c)$ and bridges $\edge{d}{x}, \edge{d}{y}$  and $\edge{d}{z}$.
There is another   strong  separating tripod with 3-cycle $C(b,c,y)$
and bridges $\edge{d}{a}, \edge{d}{z}$ and $\edge{d}{x}$ if $G$ contains the
  edges $\edge{x}{y}$ and $\edge{y}{z}$.
\end{lemma}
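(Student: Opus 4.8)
The plan is to verify directly that $C(b,c,y)$ meets the definition of a strong separating tripod with centre $d$, by pinning down the three $4$-cliques that supply its bridges; no genuine appeal to the embedding is needed for this verification. First I would record what the hypothesis provides: since $C(a,b,c)$ is a strong separating tripod with centre $d$, the vertex $d$ has degree six with $N(d)=\{a,b,c,x,y,z\}$, and, by Lemma~\ref{lem:single-bridge-strong tripod} and the structure of strong separating tripods discussed above, the subgraph $G[a,b,c,d,x,y,z]$ contains the $4$-cliques $K(a,b,c,d)$, $K(a,b,d,x)$, $K(b,c,d,y)$ and $K(c,a,d,z)$, the first a tetrahedron with centre $d$ and the others X-quadrangles in which $\edge{d}{x}$ crosses $\edge{a}{b}$, $\edge{d}{y}$ crosses $\edge{b}{c}$ and $\edge{d}{z}$ crosses $\edge{c}{a}$. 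As $G$ is a $6$-connected $T1P$ graph it has no $5$-clique, so each of these $4$-cliques is maximal.

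Next I would exhibit the new tripod. The edges $\edge{b}{c}$ (a side of $K(a,b,c,d)$) and $\edge{b}{y},\edge{c}{y}$ (from $K(b,c,d,y)$) are present, so $C(b,c,y)$ is a $3$-cycle. Using the hypothesised edges $\edge{x}{y}$ and $\edge{y}{z}$ I would check that $K(b,d,x,y)$ is a $4$-clique, since its six edges $\edge{b}{d},\edge{b}{x},\edge{b}{y},\edge{d}{x},\edge{d}{y},\edge{x}{y}$ are all available, and likewise $K(c,d,y,z)$ from $\edge{c}{d},\edge{c}{y},\edge{c}{z},\edge{d}{y},\edge{d}{z},\edge{y}{z}$; both are maximal because $G$ has no $5$-clique. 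Hence $\edge{d}{a}\in\mathcal{E}\edge{b}{c}$ (via $K(a,b,c,d)$), $\edge{d}{x}\in\mathcal{E}\edge{b}{y}$ (via $K(b,d,x,y)$) and $\edge{d}{z}\in\mathcal{E}\edge{c}{y}$ (via $K(c,d,y,z)$), which are exactly the three bridges named in the statement.

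It then remains to locate the part that $C(b,c,y)$ separates off. In $G'=G-C(b,c,y)-\mathcal{E}\edge{b}{c}-\mathcal{E}\edge{c}{y}-\mathcal{E}\edge{b}{y}$, deleting the vertices $b,c,y$ removes the edges $\edge{d}{b},\edge{d}{c},\edge{d}{y}$, while the bridges $\edge{d}{a},\edge{d}{x},\edge{d}{z}$ lie in the deleted crossable-edge sets; since $N(d)=\{a,b,c,x,y,z\}$, the vertex $d$ becomes isolated, and as $G$ has at least seven vertices $G'$ is disconnected. Thus $C(b,c,y)$ is a separating tripod whose decomposition $G'$ has $\{d\}$ as one part, with the three bridges $\edge{d}{a},\edge{d}{z},\edge{d}{x}$.

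What remains, and is the only real obstacle, is to confirm that this separating tripod is \emph{strong}, i.e.\ that $G'$ consists of exactly the two parts $\{d\}$ and a connected remainder --- equivalently, that none of $\mathcal{E}\edge{b}{c},\mathcal{E}\edge{c}{y},\mathcal{E}\edge{b}{y}$ contains an edge joining two vertices of $V\setminus\{b,c,d,y\}$. The plan there is to observe that any edge $\edge{p}{q}$ of one of these sets forces $p,q$ into a common neighbourhood such as $N(b)\cap N(c)$, and then to use the no-$5$-clique property (for instance $\edge{a}{y}\notin E$, as otherwise $\{a,b,c,d,y\}$ would be a $K_5$, and similarly $\edge{b}{z},\edge{c}{x}\notin E$) to conclude $N(b)\cap N(c)=\{a,d,y\}$, $N(c)\cap N(y)=\{b,d,z\}$ and $N(b)\cap N(y)=\{c,d,x\}$; then the only edges of these sets not already deleted with $b,c,y$ are the bridges at $d$, so $G'=\{d\}\sqcup G[V\setminus\{b,c,d,y\}]$ with the second part connected since $G$ is $6$-connected.
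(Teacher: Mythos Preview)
Your combinatorial verification differs from the paper's brief embedding-based argument: the paper simply observes that the extra edges $\edge{x}{y}$ and $\edge{y}{z}$ permit a second \emph{T1P} embedding in which the cyclic order of $d$'s neighbours is $(a,b,x,y,z,c)$ rather than $(a,x,b,y,c,z)$, and refers to a figure for the rest. Your direct check of the definition is more explicit, and the first three steps---exhibiting the $4$-cliques $K(b,d,x,y)$ and $K(c,d,y,z)$, reading off the crossable-edge memberships $\edge{d}{a}\in\mathcal{E}\edge{b}{c}$, $\edge{d}{x}\in\mathcal{E}\edge{b}{y}$, $\edge{d}{z}\in\mathcal{E}\edge{c}{y}$, and isolating $d$---are correct.

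The gap is precisely where you flag it, but the plan you sketch does not close it. The equalities $N(b)\cap N(c)=\{a,d,y\}$, $N(c)\cap N(y)=\{b,d,z\}$ and $N(b)\cap N(y)=\{c,d,x\}$ do \emph{not} follow from the no-$K_5$ property: that argument only rules out specific edges among the seven named vertices (for instance $\edge{a}{y}$, $\edge{b}{z}$, $\edge{c}{x}$), but nothing prevents $b$ and $c$ from having further common neighbours elsewhere in a large $6$-connected $G$, and any such pair could supply an edge of $\mathcal{E}\edge{b}{c}$ lying entirely in $V\setminus\{b,c,d,y\}$. Your closing appeal to $6$-connectivity is likewise insufficient, since removing three vertices \emph{and} a set of crossable edges from a $6$-connected graph need not leave the remainder connected. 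The paper sidesteps this by working in the alternative embedding, where $C(b,c,y)$ visibly bounds a region containing only $d$; a purely combinatorial finish along your lines would have to control $\mathcal{E}\edge{b}{c}$, $\mathcal{E}\edge{b}{y}$, $\mathcal{E}\edge{c}{y}$ more carefully than a common-neighbour count, for example by invoking the absence of all smaller separators at this stage of the algorithm.
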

\begin{proof}
There may be two embeddings $\Gamma_1(G)$ and $\Gamma_2(G)$ such
that the cyclic order of the neighbors of $d$ is $(a,x,b,y,c,z)$ and
$(a,b,x,y,z,c)$, respectively, since there are 4-cliques if two
edges cross, see Fig.\ref{fig:strong-SC}
\qed
\end{proof}

\begin{figure}[t]  
  \centering
\subfigure[] {
     \includegraphics[scale=0.5]{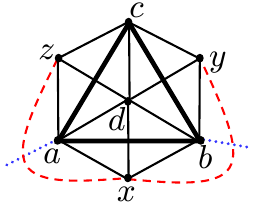}  
    \label{fig:tripod1}
  }
  \hfil
  \subfigure[] {
      \includegraphics[scale=0.5]{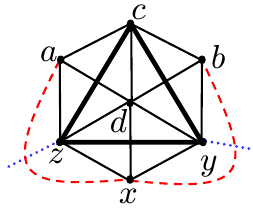} 
      \label{fig:tripod2}
      }
  \caption{Illustration for the proof of
  Lemma~\ref{strong-unambiguous-SC-if}.
   An ambiguous separating tripod $C(a,b,c)$ with center $d$ and
  edges $\edge{x}{y}$ and $\edge{x}{z}$ that admits two \emph{T1P}
  embeddings. Vertices $c$ and $x$ are poles that are neighbors of
  all other vertices.
  }
  \label{fig:strong-SC}
\end{figure}

\begin{lemma} \label{lem:only-wheelgraphs}
If a    T1P  graph $G$ has an ambiguous separating tripod
and there are no smaller separators, then $G$   is the full two-star $f2S_k$.
\end{lemma}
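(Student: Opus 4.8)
The plan is to mimic the inductive arguments of Lemmas~\ref{lem:create-h2S} and \ref{lem:createcG2S}: fix the local picture at the centre of the ambiguous separating tripod, identify the two poles, and then use $6$-connectivity together with the absence of \emph{every} smaller generalized separator to force $G$ to grow, one vertex at a time, into a double wheel $f2S_k$; the converse direction is routine.

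First I would set up notation. Let $C(a,b,c)$ be the ambiguous separating tripod with centre $d$, so $d$ has degree six, $\{d\}$ is a component of $G'=G-C(a,b,c)-\mathcal E\edge{a}{b}-\mathcal E\edge{b}{c}-\mathcal E\edge{a}{c}$, the three bridges are $\edge{d}{x}\in\mathcal E\edge{a}{b}$, $\edge{d}{y}\in\mathcal E\edge{b}{c}$, $\edge{d}{z}\in\mathcal E\edge{a}{c}$, and the defining edges among $\{x,y,z\}$ are present. By the preceding lemmas on strong separating tripods and by Lemma~\ref{strong-unambiguous-SC-if}, the three $4$-cliques $K(a,b,d,x)$, $K(b,c,d,y)$, $K(a,c,d,z)$ are maximal and drawn as X-quadrangles, and the six neighbours of $d$ occur cyclically either as $(a,x,b,y,c,z)$ or as $(a,b,x,y,z,c)$, these being the only two local options. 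Reading the X-quadrangles together with the triangle $C(a,b,c)$ off the embedding shows that all edges among $\{a,b,c,x,y,z\}$ forced by the two cyclic orders are present; I would then argue, using $6$-connectivity and the absence of smaller separators, that two of these vertices — say $p:=c$ and $q:=x$, as in Fig.~\ref{fig:strong-SC} — are adjacent to each of the remaining five (otherwise a $5$-set separates two of them), that $p$ is non-adjacent to $q$ (an edge $\edge{p}{q}$ would create a $K_5$ or a separating triple through the current rim), and that the other five vertices $a,z,d,y,b$, in the circular order dictated by $\Gamma(G)$, are the first five rim vertices $v_1,\dots,v_5$ of the prospective $f2S_k$.

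Next I would run the induction on $|V(G)|$. Since $G$ is $6$-connected and $1$-planar it has at least eight vertices, so there is a vertex outside $\{a,b,c,d,x,y,z\}$; since $G$ has no separating edge, triple, quadruple, $4$- or $5$-cycle, no separating, strongly separating, singular or bridge-independent separating triangle, and no unique, independent, or singleton separating tripod, such a vertex can be inserted only by extending the band. Arguing exactly as in the final paragraph of the proof of Lemma~\ref{lem:createcG2S} and the induction of Lemma~\ref{lem:create-h2S}, each newly added vertex $w$ must lie outside the current outer $4$-cycle $C(p,v_i,q,v_j)$ bounding the core, be adjacent precisely to $p,q,v_i,v_j$ and one further old rim vertex, and have degree exactly six, so that a new rim vertex is appended and the two arches at the seam become crossed, while $p$ and $q$ stay adjacent to every rim vertex. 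The band is finite, so the process terminates, and it must close up — otherwise the two ends of the band would be separated by one of the already-excluded small separators — which forces the last two rim vertices and, with them, the three ``wrap-around'' chords $\edge{v_1}{v_k},\edge{v_1}{v_{k-1}},\edge{v_2}{v_k}$ that distinguish $f2S_k$ from $cG2S_k$ and $sG2S_k$. Hence $G$ is exactly $f2S_k$ with poles $p,q$; conversely $f2S_k$ is easily seen to be $6$-connected, to admit an ambiguous separating tripod at any rim vertex by Lemma~\ref{strong-unambiguous-SC-if}, and to have none of the smaller separators.

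The main obstacle is the inductive step: at each stage one must exclude, using only $6$-connectivity and the non-existence of the earlier separators, every alternative attachment of the next vertex, rule out that the band forks into two incomparable extensions, and rule out that it runs off into a configuration that is no longer $6$-connected or no longer $1$-planar, so that the only surviving possibility is that the band grows linearly and then wraps around. This is a face-by-face case analysis around the current outer $4$-cycle, of the same flavour as — but somewhat longer than — the analyses in Lemmas~\ref{lem:create-h2S} and \ref{lem:createcG2S}; the degree constraint (every non-pole vertex has degree exactly six and each pole degree at least six, with equality throughout precisely when $k=6$, which is why $f2S_6$ has eight T1P embeddings while $f2S_k$ has only two for $k>6$) is the quantitative lever that keeps the analysis finite.
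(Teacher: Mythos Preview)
Your plan is essentially sound and would lead to a correct proof, but it takes a somewhat different route from the paper, and it is worth knowing the shorter one.

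The paper does not grow the graph from the seven-vertex core by adjoining new vertices one at a time.  Instead it \emph{propagates the ``centre of an ambiguous separating tripod'' property along the rim}.  Concretely: starting from the centre $d$ with triangle $C(a,b,c)$ and bridges $\edge{d}{x},\edge{d}{y},\edge{d}{z}$, the paper argues that $a$ (and symmetrically $b$) must itself have degree six and be the centre of an ambiguous separating tripod.  The argument is a one-line dichotomy: if $w$ is the sixth neighbour of $a$ and the two $4$-cliques $K(w,a,x,z)$ and $K(w,a,c,z)$ both exist, then $a$ is such a centre; if one of them fails, an edge among $\edge{w}{x},\edge{w}{z}$ is missing and $C(d,x,z)$ becomes a singular (hence smaller) separating triangle with bridges $\edge{a}{b},\edge{a}{c}$, contradicting minimality.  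Iterating yields a sequence $v_1=d,\,v_2=a,\dots$ of centres which must close up into a cycle with $v_k=b$ (otherwise the same contradiction reappears at the end), and the two vertices that never become centres, $p=c$ and $q=x$, are the poles.  This gives $G=fG2S_k$ directly.

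What the two approaches buy: your vertex-by-vertex extension mirrors Lemmas~\ref{lem:create-h2S} and \ref{lem:createcG2S} and is conceptually uniform with them, but, as you already note, the ``no alternative attachment, no forking'' step is a genuine case analysis around the current outer $4$-cycle.  The paper's propagation argument short-circuits almost all of that: the only case distinction is ``do both $4$-cliques at $a$ exist?'', and the negative branch is killed by exhibiting a single smaller separator.  If you want the shortest write-up, switch to the propagation viewpoint; if you prefer to keep your approach, the place to tighten is your claim that a new vertex is ``adjacent precisely to $p,q,v_i,v_j$ and one further old rim vertex'' --- in $fG2S_k$ a rim vertex has degree six, so you need to track which of its six neighbours are already present at each stage and which arrive later, and that bookkeeping is exactly what the paper's argument avoids.
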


\begin{proof}
Let $C(a,b,c)$ be an ambiguous separating tripod with bridges
$T=\{\edge{x_i}{y_i} \, | \, i=1,2,3\}$, such that there is a center
$d$. Then there are six 4-cliques. We claim that there is an
ambiguous separating tripod with center $a$ and one with center $b$,
where $a$ and $b$ have degree six. Otherwise, let $w$ be the sixth
neighbor of $a$. If there are 4-cliques $G[w,a,x,z]$ and
$G[w,a,c,z]$, then $a$  is the center of an ambiguous separating
tripod. Otherwise, if, for example, $G[w,a,x,z]$ does not exist,
then one of $\edge{w}{x}$ or $\edge{w}{z}$ is missing. Then there is
a strong separating triangle $C(d,x,z)$ with bridges $\edge{a}{b}$
and $\edge{a}{c}$, contradicting the minimality of $C(a,b,c)$. By
induction, there is a sequence   of centers of ambiguous separating
tripods $v_1, v_2, \ldots, v_k$ with $v_1=d$ and $v_2=a$.  Then
$v_k=b$, since there is a strong separating triangle, otherwise.
Hence, there is a cycle $v_1, v_2, \ldots, v_k$ and there are two
poles $p=c$ and $q=x$, such that $G=fG2S_k$.
 \qed
\end{proof}

Observe that a 1-planar embedding of a full generalized two star $fG2S_k$ has
only two vertices in its outer boundary, and is has no triangle with
uncrossed edges. Hence, it cannot occur as part of a larger
3-connected \emph{T1P} graph and it does not have a marked edge. It
is known that $f2S_6$, that is the crossed cube, has eight 1-planar
embeddings, whereas all other full two stars have two 1-planar
embeddings \cite{s-rm1pg-10}. Clearly, full generalized two stars can
directly be
recognized in linear time using the degree of the vertices. If it is
recognized by algorithm $\mathcal{N}$, then a strong separating
tripod it found first, such that bridges $\edge{d}{y}$ and
$\edge{d}{z}$ are removed, where $y$ and $z$ (and $d)$ have degree
six and are not poles (for $G \neq f2S_6$).
This completes the description of step (4).

As observed by Chen et al.~\cite{cgp-rh4mg-06}, for any of the
remaining 4-cliques, either all edges are marked, such that that $K_4$
is represented as a tetrahedron, or it has a pair of unmarked edges
$e, e\rq{}$ with $e \in  \mathcal{E}(e\rq{})$ and $e\rq{} \in  \mathcal{E}(e)$
In the latter case, algorithm $\mathcal{N}$ removes $e$ and $e\rq{}$ and
replaces them by a 4-star, as described before, and then runs
a planarity test  for step (7) and checks
whether each component is a triangulated planar graph.
   It succeeds if and only if
the planarity test is successful.\\

We summarize for our main result.

\begin{proof} of Theorem~\ref{thm1}.

By induction on the used generalized separators, if graph $G$ is a
\emph{T1P} graph, then algorithm $\mathcal{N}$ decomposes $G$
by separating $k$-cycles. Each of the other generalized separators finds
an edge in a 4-clique that must be crossed in an X-quadrangle.
All   4-cliques that must be represented
as an X-quadrangle are destroyed by removing one crossed edge,
such that the  final planarity test will succeed.
Conversely, if algorithm $\mathcal{N}$ succeeds, then
a \emph{T1P} embedding of $G$
is reconstructed as a witness,
where crossed edges are re-inserted such that there is an X-quadrangle,
and embeddings are composed
at $k$-separators.
In addition, $\mathcal{N}$  computes the edge labels
for an ambiguity of an embedding at each generalized separator,
where small graphs are searched by  exhaustive search, such that
the number of \emph{T1P} embeddings is the product of the labels.

Concerning the running time, we argue as in \cite{cgp-rh4mg-06,
b-4mapGraphs-19}. An n-vertex graph has
$O(n)$ maximal 4- and 5-cliques, which together with all $O(n)$ $j$-cycles
for $j=3,4,5$ can be computed in linear time
by the algorithm Chiba and Nishizeki
\cite{cn-asla-85}. It takes $O(n)$ time to test whether $G$ decomposes
by any generalized separator, and there are at most $O(n)$
``make progress\rq{}\rq{} steps.
Hence, $\mathcal{N}$ runs in cubic time.
\qed
\end{proof}

\section{Applications} \label{sect:app}

So far, the \emph{T1P} embeddings of small graphs of order at most six
are computed by exhaustive search. The bound six is due to a need for
seven vertices in some proofs.
The small \emph{T1P}  graphs are  $l$-cliques for $l\leq 6$ and
  subgraphs that are obtained by removing
some edges and still contain  a triangulated planar graph.

 \begin{table}[h]
\centering
\begin{tabular}[h] {l |c       |c            |c                |c            |c          | c          |c             | c             | c                      |c            }
\emph{T1P} graph & $K_3$ & $K_4$ & $K_5$-$e$ & $K_5$ & $H_2$ &  $H_5$ & $H_6$ & $H_7$    & $K_6$-$e$  & $K_6$ \\
\hline
graph                   &       1    &  4      &    1             &   15      &    1      &    1       &     6         & 4             &    12     &   1 \\
outer face          &        1   &   1     &    1              &   $\leq 6$      &     1     &  1 &    $\leq 6$      & $\leq 3$          &    $\leq12$     &   1
\end{tabular}
\caption{All small \emph{T1P} graphs,  (first line)
the number of \emph{T1P} embeddings  for the graph and
with a fixed outer face   (second line)
.}
\label{tab:small}
\end{table}

\begin{figure}[t]  
  \centering
\subfigure[$K_5$-e   planar] {
     \includegraphics[scale=0.63]{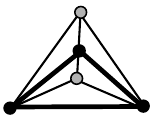}
    \label{fig:K5-e-T1P}
  }
  \subfigure[$K_5$-e \hspace{5mm} kite-aug.] {
      \includegraphics[scale=0.65]{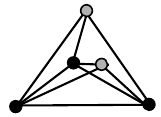} 
      \label{fig:K5-e-aug}
      }
 \subfigure[$K_5$-e 1-planar] {
     \includegraphics[scale=0.63]{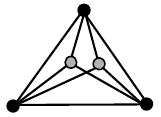}  
     \label{fig:K5-e-ill}
   }
\subfigure[$H_1$] {
     \includegraphics[scale=0.6]{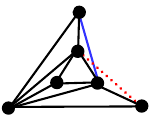}  
    \label{fig:small-H1}
  }
\subfigure[$H_2$] {
     \includegraphics[scale=0.6]{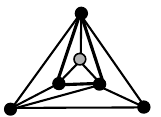}  
    \label{fig:small-H2}
  }
\subfigure[$H_3$] {
     \includegraphics[scale=0.6]{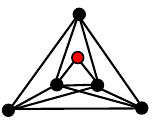}  
    \label{fig:small-H3}
  }\\
\subfigure[$H_4$] {
     \includegraphics[scale=0.6]{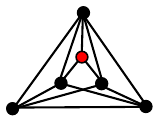}  
    \label{fig:small6-H_4}
}
\subfigure[$H_5$] {
     \includegraphics[scale=0.6]{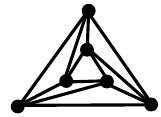}  
    \label{fig:small-H5}
  }
\subfigure[$H_6$] {
     \includegraphics[scale=0.5]{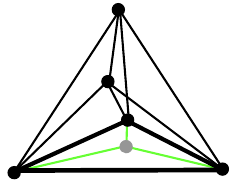}  
    \label{fig:H3}
  }
\hspace{3mm}
\subfigure[$H_7$] {
     \includegraphics[scale=0.6]{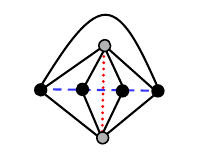}  
    \label{fig:small-H7}
  }
\hspace{3mm}
 \subfigure[$K_6$-e] {
     \includegraphics[scale=0.5]{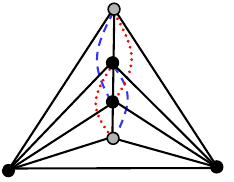}  
     \label{fig:K6-e}
   }
  \caption{Illustration for the proof of
  Lemma~\ref{lem:small-graphs}. All small \emph{T1P} graphs and 1-planar embeddings,
that are \emph{T1P}  or kite-augmented or just 1-planar (c).
(i) Graph $K_6$-e has either both red dotted or both blue dashed edges.
  }
  \label{fig:small}
\end{figure}

\begin{lemma} \label{lem:small-graphs}
Table~\ref{tab:small} lists all  small  T1P graphs  of order at most six, the number of   T1P embeddings, and
the number of    T1P embeddings if the outer face is fixed and its edges are uncrossed.
\end{lemma}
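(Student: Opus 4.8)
The plan is to prove the lemma by a finite enumeration. First I would record the structural constraints: a T1P graph $G$ on $n\le 6$ vertices is $3$-connected and satisfies $3n-6\le m\le 4n-8$, and in any T1P embedding, deleting one diagonal from each X-quadrangle splits it into two triangles, so for $n\ge 5$ this deletion yields a maximal planar graph on the same vertex set; hence $m=3n-6+k$, where $k$ is the number of X-quadrangles, and moreover the four corners of every X-quadrangle induce a $K_4$ of $G$. Together with the edge bounds this leaves only the following candidates: for $n=3$ the triangle $K_3$; for $n=4$ the clique $K_4$; for $n=5$ the graphs $K_5$ ($m=10$) and $K_5-e$ ($m=9$); for $n=6$ the clique $K_6$ ($m=15$), $K_6-e$ ($m=14$), the two isomorphism types of $K_6$ minus two edges ($m=13$), and the two maximal planar graphs on six vertices ($m=12$). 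One then runs through these $G$, deciding T1P-ness by exhibiting an embedding or arguing that none exists, and for the T1P ones counting all T1P embeddings and those with a prescribed triangular outer face whose edges are uncrossed.

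For the cliques and the triangulated planar graphs the counts are immediate. A triangulated planar graph — here $K_3$, $K_5-e$, and the six-vertex triangulations $H_2$ and $H_5$ — has $m=3n-6$, so $k=0$, and its only T1P embedding is its planar embedding, which is unique since the graph is $3$-connected; this gives one T1P embedding, and still one with a fixed outer face. For $K_4$ there are three ways to split the vertices into two independent edges, each the crossing pair of exactly one X-quadrangle embedding, plus the tetrahedron, hence four T1P embeddings; a triangular outer face with uncrossed edges forces the tetrahedron, so one. For $K_5$ each of the $15=\binom{5}{4}\cdot 3$ independent edge pairs is the crossing pair of a unique T1P embedding — the pair spans an X-quadrangle and the fifth vertex, adjacent to all four corners, is placed outside it and completes four triangles — so there are $15$ embeddings; with the outer triangle fixed, only the six crossing pairs whose two edges each join an interior vertex to the outer triangle survive, giving the entry $\le 6$. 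For $K_6$ the unique $1$-planar embedding of Fig.~\ref{fig:K6-triangle} is triangulated, so there is one T1P embedding, and one with the outer triangle of that figure fixed.

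The six-vertex graphs with $k\ge 1$ are the substantive part. For $K_6-e$ a direct check, using the placement of the two X-quadrangles relative to a spanning triangulation (cf.\ Fig.~\ref{fig:K6-e}), gives $12$ T1P embeddings and at most $12$ with a fixed outer face. Of the two isomorphism types of $K_6$ minus two edges, the one in which the removed edges are independent is $H_7$, which is shown to have exactly four T1P embeddings — and at most three with a fixed outer face, since a triangulated uncrossed outer boundary need not exist for every choice — while the one in which the removed edges share a vertex is $H_6$, with six T1P embeddings. The graphs $H_1,H_3,H_4$ of Fig.~\ref{fig:small}, although each contains a spanning triangulated planar graph, are verified not to be T1P: every $1$-planar embedding in which the crossing pairs induce $K_4$'s is only kite-augmented and leaves a non-triangular face, as indicated in Fig.~\ref{fig:small}. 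Collecting the numbers for $K_3,K_4,K_5-e,K_5,H_2,H_5,H_6,H_7,K_6-e,K_6$ then yields exactly Table~\ref{tab:small}.

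The main obstacle is the six-vertex case with $13$ or $14$ edges: one must carefully separate the graphs that are genuinely T1P from those that are only $1$-planar or kite-augmentable, and count the T1P embeddings exactly — in particular the twelve embeddings of $K_6-e$ and the ``$\le k$'' entries for a fixed outer face, which require determining which triangles of each graph can actually serve as an uncrossed outer boundary. The remainder is a routine but lengthy case inspection, for which the drawings in Fig.~\ref{fig:small} (and Figs.~\ref{fig:K5-e-T1P}, \ref{fig:K5-e-aug}, \ref{fig:K5-e-ill}) serve as witnesses.
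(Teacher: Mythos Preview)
Your proposal is correct and follows the same finite-enumeration strategy as the paper. The organizing identity $m=3n-6+k$, with $k$ the number of X-quadrangles, is a clean addition the paper does not make explicit: it immediately reduces the $m=12$ case on six vertices to the two maximal planar graphs, whereas the paper lists all five ways to delete three edges from $K_6$ and eliminates $H_1,H_3,H_4$ one by one (by degree arguments and the absence of a suitable triangulation). Your handling of $K_5$, $H_6$, $H_7$, and $K_6{-}e$ parallels the paper's, with comparable appeals to direct inspection.

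One slip to correct: the aside that $H_1,H_3,H_4$ ``each contains a spanning triangulated planar graph'' is false --- a spanning triangulation on six vertices has twelve edges and would therefore coincide with the graph itself, yet none of $H_1,H_3,H_4$ is maximal planar (indeed $H_3$ has a degree-two vertex and is not even $3$-connected). The remark is harmless, since your $k=0$ argument has already excluded these graphs; simply drop it.
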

\begin{proof}
Clearly,
$K_3, K_4, K_5$-$e$ and $K_5$ are the \emph{T1P} graphs of order
at most five, as all other graphs of order at most five are not triangulated.
 The candidates for \emph{T1P} subgraphs of $K_6$
are obtained by removing one, two or three edges, where two removed
edges can be adjacent or independent, and three removed edges are
 a triangle for $H_1$,  a path for $H_2$,  a star for $H_3$,
   a path and an edge for $H_4$, and
three independent edges for $H_5$. The obtained graphs are distinguished
by the degree of their vertices.
The number of \emph{T1P} embeddings of $k$-cliques
with and without a fixed outer face
has been discussed before. Observe that
exactly one pair of edges must cross in a 1-planar
embedding of $K_5$~\cite{hm-dcgmnc-92}, and any pair of independent edges
may do.
 Graph $K_5$-$e$ is planar and has a separating 3-cycle,
such that three edges can be marked. Then only the planar embedding is a \emph{T1P} embedding, see Fig.~\ref{fig:K5-e-T1P}.
The 6-clique $K_6$ has a single 1-planar and \emph{T1P} embedding, respectively,
see Fig.~\ref{fig:K6-triangle}.

There are five subgraphs of $K_6$ with twelve edges, denoted
 $H_1,\ldots, H_5$, but only $H_2$ and $H_5$
are \emph{T1P} graphs, see  Fig.~\ref{fig:small}.
Graph $H_1$ has three vertices of degree three  and three of
degree five, and a $K_5$ minor. It is 1-planar, but  not a \emph{T1P} graph,
since the triangulation needs an edge between two degree-3 vertices.
Graph $H_2$ is planar. It has a separating
3-cycle, such that $K_5$-$e$ remains as one component with a fixed outer face.
Hence, $H_2$ has a single \emph{T1P} embedding, namely the planar one.
Graph $H_3$ has a vertex of degree two, since three incident edges
are removed. Thus it is not 3-connected and not a \emph{T1P} graph.
Similarly, graph $H_4$ is not a \emph{T1P} graph, since it has a
single vertex of degree three, whose neighbors do not induce a 3-cycle.
Graph $H_5$   is planar
and 4-regular and does not contain $K_4$. Hence, algorithm $\mathcal{N}$
marks all edges, such that only the planar embedding is a \emph{T1P} embedding.
Clearly, graphs $H_2$ and $H_5$ each have a single \emph{T1P} embedding
with any fixed outer face.

Two edges are either adjacent or independent, such that
graphs  $H_6$  and $H_7$ are the only \emph{T1P} graphs
with 6 vertices and 13 edges. Graph $H_6$,  shown in Fig.~\ref{fig:H3},
consists of a vertex of degree three
and a $K_5$  with two degree-4 vertices in the interior of the cycle
around the degree-3 vertex.
Hence, $H_6$  has  six \emph{T1P} embeddings that are inherited from $K_5$
with a fixed outer face.
Graph $H_7$,  shown in Fig.~\ref{fig:small-H7}, has a separating edge between the
two vertices of degree five, such that every edge
between two degree-4 vertices is a crossable edge.  Hence, it admits four  \emph{T1P} embeddings, and up to
three  with a fixed outer face.
It is also shown in Fig.~\ref{fig:quad-A} with an extension.
Clearly, the 6-clique without one edge is the only
 \emph{T1P}  subgraph of $K_6$ with 14 edges, see Fig.~\ref{fig:K6-e}.
Each of the two degree-4 vertices is incident to a crossed edge.
Any two vertices of degree five can be
taken as the poles of a (too small) generalized two-star $G2S_4$ that each admits
two \emph{T1P} embeddings, such that there are $2\binom{4}{2}$ \emph{T1P} embeddings. Alternatively, observe that any permutation of the edges
incident to a single degree-4 vertex creates a different \emph{T1P} embedding.
It  has  one $K_5$-triple if the outer cycle is fixed, that is  two \emph{T1P} embeddings.
\qed
\end{proof}

By Lemma~\ref{lem:small-graphs}  and Table~\ref{tab:small},
 only $K_5$ and the subgraphs
$H_6, H_7$ and $K_6$-$e$  of $K_6$  must be
avoided as small  graphs at separators if graphs shall be unique,
or  further edges must be marked, for example, if there is a
vertex (subgraph) in the interior of some triangles.
Hence, the exhaustive search can be simplified to these few cases.\\

We now consider the uniqueness of a \emph{T1P} graphs, which is
obtained by algorithm $\mathcal{N}$ and proved by its correctness.
In contrast, the planar case is proved
by using  graph theoretic and
combinatorial methods,   for example in \cite{d-gt-00}.

\begin{theorem} \label{thm:uniquetime}
There is a cubic time algorithm that decides whether or not a  T1P
graph  is unique.
\end{theorem}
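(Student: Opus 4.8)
The plan is to reduce Theorem~\ref{thm:uniquetime} to Theorem~\ref{thm1} together with the structural analysis already carried out in Section~\ref{sect:algorithm}. By Theorem~\ref{thm1}, algorithm $\mathcal{N}$ runs in cubic time, decides whether the input is a \emph{T1P} graph, and, when it is, computes the number $N$ of \emph{T1P} embeddings as the product $\prod_e \lambda(e)$ of the edge labels. A \emph{T1P} graph is \emph{unique} precisely when $N = 1$, i.e.\ when every label is $1$. So the algorithm for Theorem~\ref{thm:uniquetime} is simply: run $\mathcal{N}$; if it rejects, answer ``not a \emph{T1P} graph'' (so the uniqueness question is moot, or vacuously false); otherwise inspect the labels and answer ``unique'' iff all $\lambda(e) = 1$. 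The running time is dominated by that of $\mathcal{N}$, hence cubic.

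First I would state precisely that correctness of this reduction rests on the invariant established throughout Section~\ref{sect:algorithm}: after each application of a generalized separator, the total number of \emph{T1P} embeddings of the current graph equals the product over all edges of their labels, and each component at a separating $k$-cycle multiplies independently (the number of \emph{T1P} embeddings of $G$ resp.\ $G_0$ is the product over the components, as asserted before Lemma~\ref{lem:small-graphs} and used in the proof of Theorem~\ref{thm1}). For the base cases --- small graphs of order at most six --- Lemma~\ref{lem:small-graphs} and Table~\ref{tab:small} give the exact count, which is what $\mathcal{N}$ uses. Thus $G$ has a single \emph{T1P} embedding if and only if $\prod_e \lambda(e) = 1$ if and only if $\lambda(e) = 1$ for every edge (labels are positive integers), and this condition is checkable in $O(m) = O(n)$ time once $\mathcal{N}$ has finished. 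The only places where a label exceeds $1$ are separating edges with $t \ge 2$ bridges (Lemma~\ref{lem:separatingedge-reduce}), handle two-stars $hG2S_k$ (Lemma~\ref{lem:embed-h2S}), ambiguous separating triangles (Lemma~\ref{lem:ambiguous}), circle and semi-full two-stars (Lemma~\ref{lem:createcG2S}), and full two-stars $fG2S_k$ (Lemma~\ref{lem:only-wheelgraphs}); so examining the labels also yields, as a byproduct, the reason for non-uniqueness.

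The main obstacle --- and it is a bookkeeping obstacle rather than a conceptual one --- is making the ``product of labels'' invariant airtight across every branch of the case analysis, including the $K_5$-related separators and the tripod separators, and checking that no embedding is double-counted when components are recomposed at separating $k$-cycles (the uncrossed $k$-cycle edges must be identified, so the recomposition is a genuine product, not an over-count). One must also be careful that a ``unique'' answer for a \emph{T1P} graph means uniqueness up to the topological homeomorphism equivalence used in the paper, matching the convention under which, e.g., the two embeddings of a generalized two-star that merely swap the poles are still counted as the stated number. Since all of this has effectively been verified in the course of proving Theorem~\ref{thm1}, the proof of Theorem~\ref{thm:uniquetime} is short: it invokes Theorem~\ref{thm1}, notes that $N$ is returned within the same cubic time bound, and observes that deciding $N = 1$ (equivalently, all labels equal $1$) is immediate.

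\begin{proof} of Theorem~\ref{thm:uniquetime}.
By Theorem~\ref{thm1}, algorithm $\mathcal{N}$ runs in cubic time and, on a \emph{T1P} graph $G$, returns the number of \emph{T1P} embeddings of $G$ as the product $\prod_{e} \lambda(e)$ of the edge labels, where each $\lambda(e)$ is a positive integer computed during the run. As detailed in Section~\ref{sect:algorithm}, labels are introduced only at separating edges (Lemma~\ref{lem:separatingedge-reduce}), handle two-stars (Lemma~\ref{lem:embed-h2S}), ambiguous separating triangles (Lemma~\ref{lem:ambiguous}), circle or semi-full two-stars (Lemma~\ref{lem:createcG2S}), full two-stars (Lemma~\ref{lem:only-wheelgraphs}), and in the base cases of order at most six (Lemma~\ref{lem:small-graphs}); the embeddings of the components obtained at separating $k$-cycles multiply, since the uncrossed $k$-cycle edges are identified in the recomposition. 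Hence $G$ has a single \emph{T1P} embedding, i.e.\ $G$ is unique, if and only if $\prod_{e} \lambda(e) = 1$, which holds if and only if $\lambda(e) = 1$ for every edge $e$ of the graph obtained after step (4) (together with the base-case counts). After $\mathcal{N}$ terminates, this condition is tested in $O(m)$ additional time. If $\mathcal{N}$ rejects, then $G$ is not a \emph{T1P} graph and the question does not arise. Thus there is a cubic time algorithm deciding whether a \emph{T1P} graph is unique.
\qed
\end{proof}
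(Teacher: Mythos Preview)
Your proof is correct and follows the same approach as the paper: invoke Theorem~\ref{thm1} to run algorithm $\mathcal{N}$ in cubic time, then observe that the \emph{T1P} graph is unique if and only if every edge label $\lambda(e)$ equals one. The paper's own proof is a two-sentence version of exactly this argument, without the additional enumeration of the places where labels can exceed one.
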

\begin{proof}
By Theorem~\ref{thm1}, a \emph{T1P} graph has a single \emph{T1P}
embedding if algorithm $\mathcal{N}$ succeeds and each edge label is equal
to one. Otherwise, there is an ambiguous separator and at least two
\emph{T1P}  embeddings.
\qed
\end{proof}

For the uniqueness of a \emph{T1P} graph, we may simply ask
algorithm  $\mathcal{N}$. The answer will be as follows.

\begin{corollary} \label{cor:uniquecharcterize}
A  T1P  graph $G$  
is unique if and only if it does not have a separating edge or an ambiguous
separating triple, quadruple, triangle and tripod, respectively,
and it is  not or does not contain a small graph at a separating $k$-cycle
that admits two or more T1P embeddings as a part of $G$.
\end{corollary}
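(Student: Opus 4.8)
The plan is to derive the corollary directly from Theorem~\ref{thm1}. By that theorem, on a \emph{T1P} graph $G$ algorithm $\mathcal{N}$ terminates successfully and the number of \emph{T1P} embeddings of $G$ equals the product of the edge labels $\lambda(e)$ produced during the run, where the exhaustively computed embedding count of any small graph that $\mathcal{N}$ reaches --- in particular of $G$ itself when $G$ has order at most six --- is recorded among those labels. Hence $G$ is unique if and only if (a) no step of $\mathcal{N}$ raises a label above $1$, and (b) every small graph that $\mathcal{N}$ handles by exhaustive search has exactly one \emph{T1P} embedding subject to the edges already marked at that moment. The proof then reduces to matching conditions (a) and (b) with the list of forbidden configurations in the statement.

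For the ``only if'' part I would argue contrapositively, using the lemmas of Section~\ref{sect:algorithm} as explicit witnesses of a second embedding. A separating edge with $t\ge 2$ bridges gives $t$ distinct \emph{T1P} embeddings that coincide off one edge (Lemma~\ref{lem:separatingedge-reduce}). An ambiguous separating triple is a $K_5$-triple --- or the $5$-clique with two degree-five vertices of Lemma~\ref{lem:5-cliques} --- whose resulting $hG2S_k$ (respectively $xG2S_k$) keeps the critical triangles unmarked, hence has at least two \emph{T1P} embeddings by Lemma~\ref{lem:embed-h2S}. An ambiguous separating quadruple forces $G=cG2S_k$ or $G=sG2S_k$ (Lemma~\ref{lem:createcG2S}) and an ambiguous separating tripod forces $G=fG2S_k$ (Lemma~\ref{lem:only-wheelgraphs}), each with at least two \emph{T1P} embeddings, and eight when $k=6$. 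An ambiguous separating triangle admits two \emph{T1P} embeddings obtained by flipping the complementary $4$-cliques simultaneously between tetrahedron and X-quadrangle (Lemma~\ref{lem:ambiguous}). Finally, if $G$ is, or contains at a separating $k$-cycle, a small graph that under its marked edges admits two or more \emph{T1P} embeddings, then by Lemma~\ref{lem:small-graphs} this small graph is one of $K_5, H_6, H_7, K_6$-$e$, and since by Theorem~\ref{thm1} an embedding of $G$ is assembled from embeddings of the components, the multiplicity is inherited by $G$. In every case $G$ has at least two \emph{T1P} embeddings and is not unique.

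For the ``if'' part I would walk through the order in which step~(4) of $\mathcal{N}$ applies generalized separators and verify that, in the absence of the forbidden configurations, every label stays at $1$. At a separating $k$-cycle the embedding count is the product of the counts of the two sides and the $k$-cycle edges are merely marked, so no label changes and condition~(b) applied to the two components propagates upward. Each of a separating triple, a unique separating quadruple, a $K_5$-triple whose $hG2S_k$ is forced unique by marked edges, a bridge-independent or singular separating triangle, a $K_5$-destroyer, a separating tripod, and a singleton separating tripod comes with a ``clear decision'': the single-bridge-type lemmas (Lemmas~\ref{lem:single-bridge-lemma}, \ref{lem:single-bridge-lemma-quad}, \ref{lem:single-bridge-in-quad}, \ref{lem:two-bridges-quad}, \ref{lem:singular-representation1}, \ref{lem:singular-septriangle}, \ref{lem:5-cliques}, \ref{lem:tripod}, \ref{lem:distinct or strong tripod}, \ref{lem:single-bridge-strong tripod}, \ref{strong-unambiguous-SC-if}) pin the local representation down uniquely, so $\mathcal{N}$ only marks edges and removes a forced crossed edge without touching any label. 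Together with condition~(b), which the hypothesis guarantees, this gives $\prod_e\lambda(e)=1$, i.e.\ a single \emph{T1P} embedding of $G$.

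The main obstacle is the bookkeeping of the ``if'' direction: one must confirm that the five ``ambiguous'' families named in the statement are exactly the label-raising events of $\mathcal{N}$, and nothing else is. The delicate points are subsuming the $hG2S_k$ and $xG2S_k$ cases --- triggered respectively by a $K_5$-triple and by a $5$-clique with two degree-five vertices --- under ``ambiguous separating triple'' while noting (Lemma~\ref{lem:embed-h2S}) that a $K_5$-triple with enough already-marked edges is not ambiguous and leaves the embedding unique; checking that a separating $k$-cycle only multiplies the two side counts, so that any residual non-uniqueness below it is exactly what the small-graph clause, via Lemma~\ref{lem:small-graphs}, isolates; and ruling out a hidden second routing at any generalized separator, which is precisely the content of the single-bridge-type lemmas and may be cited verbatim.
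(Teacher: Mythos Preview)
Your proposal is correct and follows the same line as the paper, which in fact offers essentially no proof: the corollary is introduced with the sentence ``For the uniqueness of a \emph{T1P} graph, we may simply ask algorithm $\mathcal{N}$. The answer will be as follows,'' treating it as an immediate restatement of which steps of $\mathcal{N}$ can raise an edge label above~$1$. You have spelled out in detail precisely that implicit argument, citing the appropriate lemmas for each generalized separator; the only slight terminological wrinkle is that ``ambiguous separating triple'' is not among the formally defined separators in Definition~\ref{def:separators} but is used informally in the paper (e.g.\ in Lemma~\ref{lem:5-cliques}) for the $K_5$-triple situation, exactly as you interpret it.
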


By Whitney\rq{}s theorem \cite{w-uniqueplanar-33},
 any 3-connected planar graph  admits a single planar embedding.
However,   it may admit many 1-planar ones. As an example, consider an $m \times n$
grid
with quadrangles in the interior
and edges between the corners, that can be folded in many ways for
a 1-planar embedding, or see Fig.~\ref{fig:small-planar}.
Surprisingly, the allowance of single edge crossings
does not yield more embeddings for triangulated graphs.

\begin{corollary} \label{cor:planar}
Every triangulated planar graph has a single  T1P  embedding.
\end{corollary}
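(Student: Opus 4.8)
The plan is to show that a T1P embedding of a triangulated planar graph cannot contain a single crossing, so that every T1P embedding is in fact an ordinary (triangulated) planar embedding, and then to invoke the uniqueness of the planar embedding of a $3$-connected planar graph (Whitney~\cite{w-uniqueplanar-33}). First I would dispose of the degenerate case: the only triangulated planar graph with $n\le 3$ vertices is $K_3$, which has a single T1P embedding. So assume $G$ is a maximal planar graph with $n\ge 4$ vertices; then $|E(G)| = 3n-6$ and $G$ is $3$-connected, and since the planar triangulation embedding of $G$ is trivially a T1P embedding, $G$ is a T1P graph to which Theorem~\ref{thm1} applies.

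The key step is an Euler-formula count. Let $\Gamma$ be an arbitrary T1P embedding of $G$ with $c$ crossings, equivalently with $c$ X-quadrangles; since each edge is crossed at most once, exactly $2c$ distinct edges are crossed. Planarize $\Gamma$ by replacing each crossing point by a new vertex of degree four. The resulting plane graph $\Gamma^{\times}$ has $n+c$ vertices and, counting arcs, $(|E(G)|-2c)+4c = |E(G)|+2c$ edges. Every triangular face of $\Gamma$ stays a triangle, and every X-quadrangle, which by the definition of a T1P embedding contains nothing in its interior (and exterior) except its two crossed edges, is cut by its crossing point into exactly four triangles; hence every face of $\Gamma^{\times}$ is a triangle. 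For a planar triangulation Euler's formula gives $|E(G)|+2c = 3(n+c)-6$, i.e. $|E(G)| = 3n-6+c$. Comparing with $|E(G)| = 3n-6$ forces $c=0$, so $\Gamma$ is crossing-free and is therefore a (triangulated) planar embedding of $G$. As $G$ is $3$-connected, it has a unique planar embedding up to topological homeomorphism, whence a unique T1P embedding.

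I do not anticipate a genuine obstacle here; the only point needing a word of care is the claim that planarizing an X-quadrangle yields precisely four triangular faces and no new vertices, which is exactly where the clause of the T1P definition forbidding vertices inside or outside an X-quadrangle is used. An alternative route is to apply Corollary~\ref{cor:uniquecharcterize}: a planar graph contains none of the ambiguity-causing small graphs ($K_5$ is non-planar, $H_6$ and $K_6$-$e$ contain $K_5$, and $H_7$ has $13 > 3\cdot 6 - 6$ edges), and a separating edge or an ambiguous separating triple, quadruple, triangle or tripod would force a crossing in every T1P embedding, contradicting the count $c=0$ above; but the direct argument is shorter and self-contained.
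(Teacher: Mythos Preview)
Your argument is correct and takes a genuinely different, more elementary route than the paper. The paper derives the corollary from the algorithm $\mathcal{N}$: it asserts that on a triangulated planar graph only separating $3$- and $4$-cycles (and the final planarity test) are ever invoked, so no ambiguous separator is met and every edge label stays $\lambda(e)=1$, whence uniqueness via Theorem~\ref{thm1}. You instead bypass the algorithm entirely with an Euler-formula count on the planarization $\Gamma^{\times}$, obtaining $|E(G)|=3n-6+c$ and hence $c=0$, and then finish with Whitney's theorem. Your proof is self-contained and does not depend on the lengthy correctness analysis of $\mathcal{N}$, which is a real advantage for a reader who only cares about this corollary; the paper's proof, on the other hand, illustrates how the corollary falls out of the algorithmic framework and implicitly explains \emph{why} no ambiguity can arise (no $5$-cliques, no separating edges, etc.), which is informative in the context of the surrounding development. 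Your alternative sketch via Corollary~\ref{cor:uniquecharcterize} is in fact closer in spirit to what the paper does, but as you note, the direct counting argument is cleaner.
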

\begin{proof}
If $G$ is a triangulated planar graph, then algorithm $\mathcal{N}$
 uses only  3- and 4-separators and the final planarity test,
such that $\lambda(e)=1$ for every edge.
\qed
\end{proof}

Recall that 2-connected planar graphs
may admit exponentially many
planar embeddings and there are \emph{T1P} graphs
with exponentially many \emph{T1P} embeddings~\cite{bbhnr-NIC-17}.\\

There are important graph classes between the planar and the 1-planar graphs.
A 1-planar embedding is \emph{IC-planar} if each vertex is incident to
at most one crossed edge, and \emph{NIC-planar} if each edge is in
at most one  X-quadrangles \cite{klm-bib-17}, see also
\cite{bdeklm-IC-16, b-IC+NIC-18,bbhnr-NIC-17}.
The recognition of triangulated \emph{IC}- and \emph{NIC-}planar graphs needs
a test for the specific restriction. This can be expressed by a boolean
formula, whose evaluation may take extra time or
enforce a restriction \cite{b-IC+NIC-18}, that vanish for uniqueness.
Also optimal \emph{RAC} graphs range between the triangulated planar and 1-planar
graphs~\cite{del-dgrac-11, el-racg1p-13}. However,
\emph{RAC} graphs are defined  by a straight-line drawing with
right angle crossings, which cannot be treated with our tools,
as \cite{s-RAC-drawability-23} suggests.

\begin{corollary} \label{cor:IN+NIC}
There is a cubic time algorithm that decides whether or not a  T1P
graph  has a  single  IC-planar and NIC-planar  embedding, respectively.
In particular, it is decided whether or not a triangulated IC- and NIC graph
is unique.
\end{corollary}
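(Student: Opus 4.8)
The plan is to run algorithm $\mathcal{N}$ and then count those \emph{T1P} embeddings that also obey the \emph{IC} (resp.\ \emph{NIC}) restriction, reporting ``unique'' exactly when that count equals $1$. If $\mathcal{N}$ rejects, $G$ is not a \emph{T1P} graph and hence admits no triangulated \emph{IC}- or \emph{NIC}-planar embedding, which we report. Otherwise $\mathcal{N}$ exhibits the \emph{T1P} embeddings of $G$ as the composites, over its decomposition along separating $k$-cycles, of one local embedding per atomic configuration (separating edge, small graph, or generalized two-star variant), with the number of local choices recorded in the edge labels. Since every crossing lies inside an X-quadrangle, every X-quadrangle lies inside a single atomic configuration, and the edges shared between configurations are uncrossed by Proposition~\ref{prop:all-separators}, a composite embedding is \emph{IC}-planar (resp.\ \emph{NIC}-planar) exactly when every local embedding is and, in addition, no vertex (resp.\ no edge) lying on a separating $k$-cycle or on one of the $4$-cycles $C_i(u,x_i,v,y_{i-1})$ produced after a separating edge is met by a crossed edge (resp.\ by an X-quadrangle) from both sides.

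First I would observe that the genuinely ambiguous configurations are not \emph{IC}- or \emph{NIC}-planar at all. In every \emph{T1P} embedding of a handle, circle, semi-full, extended or full generalized two-star all arches are crossed, so an interior vertex of degree six lies in three X-quadrangles, whence it is incident to at least two crossed edges and two of its X-quadrangles share a boundary edge; the ambiguous separating triangle behaves the same way by Lemma~\ref{lem:ambiguous}, and so do $K_6$ and $K_6$-$e$. Consequently, by Lemma~\ref{lem:create-h2S} and the corresponding statements for the other two-star variants, a graph admitting a triangulated \emph{IC}- or \emph{NIC}-planar embedding contains none of these, so the only ambiguity sources met by $\mathcal{N}$ are separating edges and the small components $K_5$, $H_6$, $H_7$, each of which carries a single X-quadrangle and is therefore individually \emph{IC}- and \emph{NIC}-good.

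To count, I would impose the restriction by propagating extra markings: whenever a crossed edge is forced or chosen, mark every other edge at its two endpoints as uncrossed (for \emph{NIC}, instead record that the four boundary edges of the new X-quadrangle may bound no further X-quadrangle); discard any local choice that forces an edge which must be crossed to be uncrossed; and reset the affected label to the number of surviving choices. The only non-local effect, the two-sided collision at a separator vertex (resp.\ edge), I would settle with one bottom-up sweep of $\mathcal{N}$'s decomposition tree that stores, for the $O(1)$ boundary vertices (resp.\ edges) of each separator, whether the subtree below already uses a crossed edge (resp.\ an X-quadrangle) there, and rejects a colliding choice. This adds $O(n)$ time per ``make progress'' step, so the running time stays cubic, and $G$ has a single triangulated \emph{IC}-planar (resp.\ \emph{NIC}-planar) embedding iff the resulting product of labels equals $1$; the specialization to uniqueness of triangulated \emph{IC} and \emph{NIC} graphs then follows as in Corollary~\ref{cor:uniquecharcterize}.

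The main obstacle I expect is exactly this coupling: the \emph{IC}/\emph{NIC} status of a composite embedding is not the mere conjunction of the statuses of its pieces, since a vertex (an edge) on a separating $k$-cycle may be reached by crossings from both sides. Making the per-configuration bookkeeping and the bottom-up reconciliation precise — and, relatedly, checking that the admissible small components $K_5$, $H_6$, $H_7$ and a separating edge either supply at least two \emph{IC}/\emph{NIC} embeddings or none once these markings are in force — is where the real work lies.
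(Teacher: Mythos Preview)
Your proposal takes a genuinely different and far more ambitious route than the paper. The paper's own proof is a single sentence: run $\mathcal{N}$; if every label $\lambda(e)$ equals $1$, then $G$ has exactly one \emph{T1P} embedding, and one simply checks whether that embedding satisfies the \emph{IC} (resp.\ \emph{NIC}) condition. The paper does not attempt to count \emph{IC}/\emph{NIC} embeddings when $G$ admits several \emph{T1P} embeddings, nor does it address the possibility that among multiple \emph{T1P} embeddings exactly one might be \emph{IC}- or \emph{NIC}-planar.

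You, by contrast, aim to count the \emph{IC}/\emph{NIC}-compatible embeddings among all \emph{T1P} embeddings, which would establish a strictly stronger statement than what the paper's proof delivers. Your structural observation---that every embedding of a generalized two-star, an ambiguous separating triangle, $K_6$, or $K_6$-$e$ forces a vertex into at least two X-quadrangles and hence fails both \emph{IC} and \emph{NIC}---is correct and is exactly what would reduce the problem to separating edges and the small components $K_5$, $H_6$, $H_7$. What you buy with this extra work is an answer to the literal reading of the corollary (``does $G$ have a single triangulated \emph{IC}/\emph{NIC} embedding?''); what the paper buys with its one-line argument is brevity, at the cost of only covering the case where the \emph{T1P} embedding is already unique. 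The coupling across separator boundaries that you flag as the main obstacle is real, and the paper simply sidesteps it by not treating the multi-embedding case at all.
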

\begin{proof}
If algorithm $\mathcal{N}$ succeeds on graph $G$ with $\lambda(e)=1$
for every edge, then it returns the single \emph{T1P} embedding of $G$,
for which the \emph{IC} and \emph{NIC} property can be checked straightforwardly.
\qed
\end{proof}

Next, we generalize results by Schumacher~\cite{s-s1pg-86} and
Suzuki~\cite{s-rm1pg-10} from optimal 1-planar graphs, which have
the maximum of $4n-8$ edges
and thus are triangulated, to 6-connected 1-planar graphs.
Note that optimal 4-planar graphs are either 4-connected, in which
case they have separating 4-cycles, or they are 6-connected \cite{s-s1pg-86}.
 Clearly, there are
6-connected \emph{T1P} graphs that are not optimal, as
the graphs in Fig.~\ref{fig:dense}
show.

\begin{figure}[t]  
  \centering
\subfigure[] {
     \includegraphics[scale=0.5]{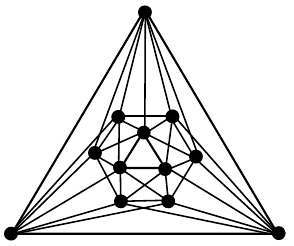}  
    \label{fig:non-optimal}
  }
  \hspace{5mm}
  \subfigure[] {
      \includegraphics[scale=0.48]{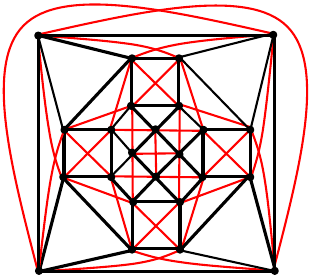} 
      \label{fig:7connected}
      }
  \caption{(a) A 6-connected 1-planar graph with 12 vertices
and 39 edges (not optimal)
and (b) a 7-connected 1-planar graph from \cite{fm-s1pg-07}
  }
  \label{fig:dense}
\end{figure}

\begin{corollary} \label{cor:uniqueNotwostar}
A 6-connected  T1P  graph $G$ 
has a  unique T1P  embedding,
except if $G$ is  a full
two-star, that is $G= fG2S_k$, that has two 1-planar
embeddings which are triangulated. 
\end{corollary}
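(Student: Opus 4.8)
The plan is to run algorithm $\mathcal{N}$ on $G$ and invoke Corollary~\ref{cor:uniquecharcterize}: $G$ has a unique T1P embedding unless, somewhere in the run, $\mathcal{N}$ meets a label‑changing generalized separator, that is a separating edge with $t\ge 2$ bridges, a $K_5$‑triple producing an $hG2S_k$ with more than one embedding, an ambiguous separating triangle, an ambiguous separating quadruple (hence an $cG2S_k$ or $sG2S_k$ by Lemma~\ref{lem:createcG2S}), an ambiguous separating tripod, or a separating $k$‑cycle ($k\le 5$) whose small component is one of $K_5,H_6,H_7,K_6-e$ still admitting two or more T1P embeddings inside $G$. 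First I would eliminate all of these but the ambiguous separating tripod for the input graph itself. Since $G$ is $6$‑connected, deleting any set of $\ell$ edges together with at most $5-\ell$ vertices leaves $G$ connected (deleting the edges first yields a $(6-\ell)$‑connected graph); hence $G$ has no separating $k$‑cycle for $k\le 5$, no separating triple, no separating edge (cf.\ the opening of Section~\ref{sect:quadruple}), no bridge‑independent separating triangle (three vertices and two independent bridges), and no separating quadruple whose removal together with one of its bridges already disconnects $G$ --- in particular no unique separating quadruple and none of the single‑bridge cases of Lemma~\ref{lem:single-bridge-in-quad}, so every separating quadruple of $G$ is one of the ``at least two bridges needed'' cases of Lemma~\ref{lem:two-bridges-quad}, which $\mathcal{N}$ resolves with a clear decision. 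Moreover a $6$‑connected T1P graph has no $5$‑clique (Section~\ref{sect:algorithm}), ruling out $K_5$‑triples and $K_5$‑destroyers; a singular or an ambiguous separating triangle needs a vertex of degree five (Lemma~\ref{lem:ambiguous}); and an ambiguous separating quadruple forces $G\in\{cG2S_k,sG2S_k\}$, each with a vertex of degree five. Every one of these contradicts $6$‑connectivity. So the only label‑changing separator available for $G$ is an ambiguous separating tripod, and by Lemma~\ref{lem:only-wheelgraphs} (applicable as $G$ has no smaller separator) this happens precisely when $G=fG2S_k$; otherwise the first ``make progress'' step of $\mathcal{N}$ on $G$ is a non‑ambiguous separating quadruple with at least two bridges, a separating tripod with independent bridges, or a singleton separating tripod, none of which changes a label.

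In the exceptional case, $G=fG2S_k$ is itself $6$‑connected (it has $4n-8$ edges, hence is triangulated, and its minimum degree is six), and by Suzuki~\cite{s-rm1pg-10} it is not unique: it has exactly two T1P embeddings, or eight when $k=6$. This is precisely the claimed exception, so the statement is best possible.

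The main obstacle is the recursion. After a non‑ambiguous separator fires, $\mathcal{N}$ deletes the crossed edges and cuts along the resulting separating $k$‑cycle $B$ with $k\le 5$, recursing on components that need not be $6$‑connected, so a label‑changing configuration could in principle surface deeper down. I would close this by induction on $|V(G)|$, proving that every component $H$ occurring in the run has a unique T1P embedding. The ingredients are: (i) an $fG2S_j$ has no uncrossed triangle in any T1P embedding (Section~\ref{sect:algorithm}) and hence cannot be a proper part of a larger $3$‑connected T1P graph, so no component $H\neq G$ is an $fG2S_j$; and (ii) each component $H$ carries the marked, uncrossed boundary cycle $B$, and $H$ is ``$6$‑connected away from $B$'': since $H$ differs from its $6$‑connected ancestor only by the deletion of a bounded number of edges (at most three per tripod or quadruple step) and by cutting off a part across $B$, every vertex or edge cut of $H$ of size at most five meets $B$. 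A counting argument on the edges leaving such a cut then shows that $H$ cannot sit on either side of any remaining label‑changing configuration --- a small component $K_5,H_6,H_7$ or $K_6-e$, a fully unmarked $cG2S_j/sG2S_j/hG2S_j$, or an ambiguous separating triangle whose five critical edges are all unmarked --- because each of those needs, behind a $\le 5$‑element cut, more crossing edges than the $6$‑connected ancestor leaves room for under $1$‑planarity. Granting (i) and (ii), the non‑ambiguous reductions inside $H$ again leave only marked, uncrossed boundary cycles, the recursion terminates at triangulated planar graphs (unique by Corollary~\ref{cor:planar}), and all edge labels equal one; hence $G$ is unique unless $G=fG2S_k$. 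I expect the verification of (ii) --- tracking reduction by reduction which low‑degree vertices and marked edges appear and checking the resulting edge counts --- to be the laborious part, paralleling the extended case analysis that Chen et al.~\cite{cgp-rh4mg-06} relegate to their technical report.
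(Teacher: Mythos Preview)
The paper's proof is far terser than yours: it simply asserts that for a 6-connected $G$ ``algorithm $\mathcal{N}$ admits only tripods'' and immediately invokes Lemma~\ref{lem:only-wheelgraphs}, without addressing what happens after the first tripod fires and the graph is cut at the resulting separating 3-cycle. You correctly recognise that the components arising in the recursion are no longer 6-connected and could, a priori, host label-changing separators; this is a point the paper leaves entirely implicit. Your first paragraph (ruling out all label-changing separators except the ambiguous tripod for $G$ itself) is essentially an honest expansion of the paper's one-line claim, and you are right to be more careful about separating quadruples with two or more bridges: they can exist in a 6-connected graph (so the paper's ``only tripods'' is literally too strong), but by Lemma~\ref{lem:two-bridges-quad} they give a clear decision and no label change.

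The substantive content is your second paragraph, and here your proposal is only a sketch. Point (i) is fine and matches the paper's own remark that $fG2S_k$ has no uncrossed triangle and hence cannot occur as a proper component. Point (ii), however, is the crux and you do not carry it out: you have not shown that a component $H$ carrying a marked boundary cycle $B$ cannot harbour, say, a separating edge with $t\ge 2$ bridges, or an ambiguous separating triangle centred at one of the low-degree vertices on or near $B$, or a small ambiguous piece of Table~\ref{tab:small}. The ``counting argument on the edges leaving such a cut'' that you invoke would need to be made precise separator by separator, and your remark that this parallels the case analysis Chen et al.\ defer to a technical report is an acknowledgement that you have not done it. So your proposal and the paper's proof share the same gap; you have located it and named plausible ingredients for a fix, but you have not closed it.
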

\begin{proof}
Small graphs are not 6-connected. If $G$ is 6-connected, then
algorithm $\mathcal{N}$ admits only tripods.
Now $G$ has a single \emph{T1P} embedding or
 $G= fG2S_k$ if there is an ambiguous tripod, as shown in
Lemma~\ref{lem:only-wheelgraphs}.
Graph $fG2S_k$ has two 1-planar embeddings \cite{s-s1pg-86}.
\qed
\end{proof}

We wish to extend algorithm $\mathcal{N}$ to kite-augmented 1-planar
and (general) 4-map graphs, similar to \cite{b-4mapGraphs-19}.
This can be achieved for the recognition part. 
However, there are problems with counting the embeddings, since
a graph may have   kite-augmented 1-planar embeddings that are
not \emph{T1P},
as Fig.~\ref{fig:small-planar} shows.
Moreover, we would like to count the number of 1-planar embeddings
of \emph{T1P} graphs.  

\section{Conclusion}
\label{sect:conclusion}
We describe a cubic time recognition algorithm for \emph{T1P} graphs
that also counts the number of \emph{T1P} embeddings.
 Thereby, we
obtain an algorithmic solution for the uniqueness problem for
\emph{T1P} graphs and show that every triangulated planar graph
has a single \emph{T1P} embedding.
In passing, we  clean up an earlier recognition algorithm that uses
hole-free 4-map graphs of its presentation.

We would like to extend algorithm $\mathcal{N}$
to kite-augmented 1-planar graphs along the lines of \cite{b-4mapGraphs-19}.
Maps and witnesses are ambiguous and usually define an edge of a
map graph is several ways. What are the restrictions if the defined
(hole-free) 4-map graph is unique? At last, the recognition problem
for map graphs is open.



%
\end{document}